    \numberwithin{equation}{section}
\def\beq{\begin{equation}}
\def\eeq{\end{equation}}
\def\nn{\nonumber}
\def\bit{\begin{itemize}}
\def\eit{\end{itemize}}
\def\eqalign#1{\null\vcenter{\def\\{\cr}\openup\jot\m@th
  \ialign{\strut$\displaystyle{##}$\hfil&$\displaystyle{{}##}$\hfil
      \crcr#1\crcr}}\,}
\newcommand{\bea}{\begin{align}}
\newcommand{\eea}{\end{align}}
\newcommand{\be}{\begin{equation}}
\newcommand{\ee}{\end{equation}}
    \def\e{{\epsilon}}
    \def\Re{{\rm Re \,}}
    \def\Im{{\rm Im \,}}
    \def\Ai{{\rm Ai \,}}
    \def\bigO{{\cal O}}
    \def\Res{{\rm Res}}
    \def\m{{(m)}}
    \def\P2n{{\rm P}_{{\rm II}}^{(n)}}
    \newtheorem{theorem}{Theorem}[section]
    \newtheorem{lemma}[theorem]{Lemma}
    \newtheorem{corollary}[theorem]{Corollary}
    \newtheorem{proposition}[theorem]{Proposition}
    \newtheorem{Definition}[theorem]{Definition}
    \newtheorem{Remark}[theorem]{Remark}
    \newenvironment{remark}{\begin{Remark}\rm}{\end{Remark}}
    \newtheorem{Example}[theorem]{Example}
    \newenvironment{example}{\begin{Example}\rm}{\end{Example}}
    \newtheorem{Assumptions}[theorem]{Assumptions}
    \newenvironment{proof}%
    {\rm \trivlist \item[\hskip \labelsep{\bf Proof. }]}%
    {\hspace*{\fill}$\Box$\endtrivlist}
    {\rm \trivlist \item[\hskip \labelsep{\bf Proof}]}%
    {\hspace*{\fill}$\Box$\endtrivlist}
    \DeclareMathOperator*{\Tr}{Tr}
\begin{document}
\title{Random matrix ensembles with singularities and a hierarchy of Painlev\'{e} III equations}
\author{Max R. Atkin, Tom Claeys and Francesco Mezzadri}
\maketitle

\begin{abstract}
We study unitary invariant random matrix ensembles with singular potentials. We obtain asymptotics for the partition functions associated to the Laguerre and Gaussian Unitary Ensembles perturbed with a pole of order $k$ at the origin, in the double scaling limit where the size of the matrices grows, and at the same time the strength of the pole decreases at an appropriate speed. In addition, we obtain double scaling asymptotics of the correlation kernel for a general class of ensembles of positive-definite Hermitian matrices perturbed with a pole. Our results are described in terms of a hierarchy of higher order analogues to the Painlev\'e III equation, which reduces to the Painlev\'e III equation itself when the pole is simple.
\end{abstract}

\section{Introduction and statement of results}
We study unitary invariant random matrix ensembles on the space $\mathcal H_n^+$ of $n\times n$ positive-definite Hermitian matrices defined by the probability measure
\beq\label{pUE}
\frac{1}{C_n} (\mathrm{det} M)^\alpha \exp \left[-n\Tr V_k(M)\right] dM, \qquad \alpha > -1,
\eeq
with
\beq
dM=\prod_{j=1}^n d\Re M_{jj}\prod_{1\leq i<j\leq n}d\Re M_{ij} d\Im M_{ij},
\eeq
and
\beq C_n=\int_{\mathcal H_n^+}(\mathrm{det} M)^\alpha \exp \left[-n\Tr V_k(M)\right] dM,
\eeq
in the case where the potential $V_k(x)$ has a pole of order $k$, i.e.\ 
\beq
V_k(x) = V(x) + \left(\frac{t}{x}\right)^k. 
\eeq
The regular part $V$ of the potential is a real analytic function on $[0,+\infty)$ subject to some constraints which we will detail later. In particular, we will require $V$ to be such that, for $t=0$, the limiting mean density of the eigenvalues as $n\to\infty$ is supported on an interval of the form $[0,b]$, with $b>0$.

It is well known that the eigenvalues of a random matrix drawn from the ensemble (\ref{pUE}) form a determinantal point process. 
The joint probability distribution of the eigenvalues is given by
\beq\label{jpdf}
\frac{1}{Z_{n,k}}\Delta(x)^2\prod_{j=1}^n x_j^\alpha e^{-nV_k(x_j)}dx_j,\qquad \Delta(x)=\prod_{1\leq i<j\leq n} (x_j-x_i),
\eeq
with the partition function $Z_{n,k}=Z_{n,k}(V)$ given by
\beq\label{partition function}
Z_{n,k}=\int_{[0,+\infty)^n}\Delta(x)^2 \prod_{j=1}^n x_j^\alpha e^{-nV_k(x_j)}dx_j.
\eeq

For $t>0$ small, the model (\ref{pUE}) can be seen as a singular perturbation of the unitary invariant ensemble corresponding to $t=0$.
For $t>0$, the eigenvalues are pushed away from $0$ because of the pole in the potential, and for large $n$, the probability of finding eigenvalues close to $0$ is small if $t>0$ is independent of $n$. However, if $t\to 0$ together with $n\to\infty$, this repulsion becomes weaker and one expects a transition between a regime where eigenvalues are likely to be found in the vicinity of the origin, and one where eigenvalues are unlikely to be found near the origin.

The effect of singular perturbations of unitary  invariant ensembles has been of recent interest \cite{CI,BMM,XDZ, XDZ2}.  In \cite{CI}, the singularly perturbed Laguerre Unitary Ensemble (pLUE) was studied, given by the measure
\beq
\label{pLUE}
\frac{1}{C_n} (\mathrm{det} M)^\alpha \exp \left[-n\Tr \left(M + \frac{t}{M}\right) \right] dM,
\eeq
on the space $\mathcal H_n^+$, where $C_n$ is a normalisation constant. A relation between this model and the Painlev\'{e} III (henceforth PIII) equation was established for fixed $n$ in \cite{CI}. In subsequent work \cite{BMM}, a singular perturbation of the Gaussian Unitary Ensemble, which we will refer to as pGUE, was studied, defined by the measure
\beq
\frac{1}{\widehat C_n}  \exp \left[-n\Tr \left(\frac{1}{2}M^2 + \frac{t}{2 M^2}\right) \right]  dM,
\eeq
on the set of $n\times n$ Hermitian matrices $\mathcal H_n$. 

 In \cite{BMM} the double scaling limit where $t \rightarrow 0$ as $n \rightarrow \infty$ of the partition function was analysed using Riemann-Hilbert (RH) techniques. A connection to PIII was also found by relating the orthogonal polynomials associated to the pGUE measure to those of the pLUE measure. 
In \cite{XDZ}, the double scaling limit for the eigenvalue correlation kernel in the pLUE was studied. A limiting kernel was found, defined in terms of a model RH problem associated to a special solution of the PIII equation. This limiting kernel degenerates to the Airy kernel if $t$ approaches $0$ at a slow rate, and to the Bessel hard edge kernel if $t\to 0$ at a fast rate.
In \cite{XDZ2}, asymptotics for the partition function in the pLUE, again in terms of a PIII transcendent, were obtained.

In the present work, we will obtain asymptotics for the eigenvalue correlation kernel for a fairly general class of potentials $V$, perturbed with a pole of order $k\in\mathbb N$, in a double scaling limit where the strength of the perturbation, $t$, goes to zero at an appropriate speed as the
size of the matrix, $n$, is taken to infinity. The double scaling limit will be tuned in such a way that the macroscopic behaviour of the eigenvalues in the large $n$ limit is the same as in the non-singular case where $t=0$, but such that the microscopic behaviour of the eigenvalues near $0$ is affected by the singularity of the potential.
In addition, we will obtain double scaling asymptotics for the partition functions associated to the LUE perturbed with a pole of order $k$, and to the GUE perturbed with a pole of order $2k$.
Our results will be described in terms of special solutions to a family of systems of ODEs, indexed by $k\in\mathbb N$, which can be seen as a hierarchy of Painlev\'e III equations.

\subsection{Motivations}
There are a number of motivations for considering the model (\ref{pUE}). Firstly, it was observed that critical one-matrix models in which the limiting mean density of eigenvalues vanishes in the bulk of the spectrum or at a higher than generic order at the edge of the spectrum, are in one-to-one correspondence with (super-)Liouville field theories. In this context, it is natural to look for models with higher order critical points in which multiple zeros appear in the bulk or at the edge of the spectral density, as it is known that such higher order critical models correspond to coupling a minimal conformal field theory to (super)-Liouville theory. A review of these facts in the non-supersymmetric case can be found in \cite{Di Francesco:1993nw}. Supersymmetric versions followed later in \cite{Klebanov:2003wg} and are nicely reviewed in the appendix of \cite{Seiberg:2004ei}. One-matrix models in which the potential has poles exhibit a new type of critical behaviour. Although at present no conformal field theory analogue of such models is known, it is again natural to study higher order critical points in this context.

A second physical motivation arises in the field of quantum transport and electrical characteristics of chaotic cavities. Here the quantity of interest is the Wigner-Smith time-delay matrix $Q$, the eigenvalues of which, $\tau_j$, are known as the ``partial delay times''. In systems in which the dynamics is chaotic, a RMT approach has been quite successful and one of the central results of this approach is the joint probability density for the inverse delay times, $\gamma_j = \tau_j^{-1}$, first obtained in \cite{BFB1,BFB2}. It takes the form
\beq
P(\gamma_1, \ldots, \gamma_n) =\frac{1}{C_{n,\beta}} \left |\Delta(\gamma)\right |^\beta \prod_{j=1}^n \gamma_j^{\beta n/2} e^{-\frac{\beta}{2} \gamma_j},
\eeq
where $\beta$ depends on the symmetries of the system, with $\beta=2$ a common case. Since many observables may be expressed in terms of $Q$, the problem of computing expectation values with respect to the above measure is relevant. In particular the observable
\beq
\tau_W = \frac{1}{n}\Tr Q = \frac{1}{n} \sum^n_{i = 1} \tau_i,
\eeq
known as the Wigner time-delay has been considered in the recent work \cite{MT}. The partition function
for our model \eqref{pUE} coincides with the moment generating function for the probability density of $\tau_W$ in the case $k=1$ and $\alpha = \beta n/2$. Although we will not scale $\alpha$ with $n$ in the present paper, the moment generating function in this model is the partition function for a matrix model which shares the feature of a singular potential with our model (\ref{pUE}).

Furthermore, in a very recent paper \cite{GT}, an observable $R_q$ known as the ``charge relaxation resistance'' defined as
\beq
R_q \propto \frac{\Tr Q^2}{(\Tr Q)^2}=\frac{\sum_{j=1}^n\tau_j^2}{(\sum_{j=1}^n \tau_j)^2}
\eeq
was considered. To compute $R_q$, the approach taken in \cite{GT} was to compute the distributions of $\Tr Q^2$ and $\Tr Q$ separately. The associated moment generating function is the partition function for a perturbed pLUE model with a singularity in the potential of order $k=2$, thereby demonstrating the physical relevance of the model studied here for $k>1$.

A third motivating model appears in the field of spin-glasses \cite{AVV}. Here a model corresponding to the GOE perturbed by a pole of order $k$ was analysed for its relation to the distribution of the spin glass susceptibility in the Sherrington-Kirkpatrick (SK) mean-field model. The GUE version of the partition function in such a model relates directly to the partition function in the pLUE, as we will see later on in this paper. One-matrix models in which the potential has poles have also appeared in the context of replica field theories~\cite{OsiKan}.

Finally, our last motivation for this work is to seek a natural candidate for a PIII hierarchy. The notion of a PIII hierarchy has appeared little in the literature; one of the few mentions being \cite{Sakka}. This work was partly motivated by the desire to understand whether the hierarchy proposed in  \cite{Sakka} would appear when the order of the pole was increased and if not, what form the alternative hierarchy takes. It appears that the hierarchy of equations which we will obtain is different from the one in \cite{Sakka}.


\subsection{Statement of results}

Our main results are the following.
\begin{enumerate}
\item We define a hierarchy of higher order PIII equations and prove the existence of special pole-free solutions to it. 
\item We obtain double scaling asymptotics for the partition  function in the LUE perturbed with a pole of order $k\geq 1$, in terms of a higher order PIII transcendent. This generalizes the result from \cite{XDZ} for $k=1$.
\item We obtain double scaling asymptotics for the partition function in the GUE perturbed with a pole of order $2k$, $k\geq 1$. They are again given in terms of higher order PIII transcendents. In the case $k=1$, such asymptotics were already obtained in \cite{BMM}, but written in a different form.
\item In the double scaling limit, we prove, for $k\geq 1$ and for general $V$, that the eigenvalue correlation kernel near $0$ of the model (\ref{pUE}) tends to a limiting kernel built out of a model RH problem associated to the PIII hierarchy.
This extends the result from \cite{XDZ} for $V(x)=x$ and $k=1$.
\end{enumerate}
We now state our results in more detail.

\subsubsection*{A Painlev\'{e} III hierarchy}

Given $k \in \mathbb{N}$, consider the system of $k+1$ ODEs indexed by $p=0,\ldots, k$,
\beq\label{P3def}
\sum_{q=0}^{p} \left(\ell_{k-p+q+1}\ell_{k-q}-(\ell_{k-p+q} \ell_{k-q})'' + 3\ell_{k-p+q}' \ell_{k-q}' - 4u \ell_{k-p+q} \ell_{k-q}\right) = \tau_p,
\eeq
for $k+1$ unknown functions $u=u(s)$ and $\ell_1=\ell_1(s), \ldots, \ell_k=\ell_k(s)$, with \beq \label{initial conditions}\ell_{k+1}(s) = 0,\qquad  \ell_0(s) = \frac{s}{2}.
\eeq The $\tau_p$'s are real constants that play the role of times. The $p=0$ equation always results in
\beq
u = -\frac{1}{4 \ell_k^2}\left((\ell_k^2)'' - 3 (\ell_k')^2 + \tau_0\right).
\eeq
Substituting this expression for $u$ in the other equations, we are left with $k$ equations for $k$ unknowns $\ell_1, \ldots, \ell_k$. We refer to this system of equations as the $k$-th member of the Painlev\'{e} III hierarchy. Eliminating $\ell_p$ for $2 \leq p \leq k$ leads to an ODE for $\ell_1$ of order $2k$.

\begin{example}
For $k=1$ we have the equation
\beq
\ell_1''(s) = \frac{\ell_1'(s){}^2}{\ell_1(s)}-\frac{\ell_1'(s)}{s}-\frac{\ell_1(s){}^2}{s}-\frac{\tau _0}{\ell_1(s)}+\frac{\tau _1}{s},
\eeq
which we identify as a special case of the Painlev\'{e} III equation, see \cite{FIKN}.
\end{example}
\begin{example}
If $k=2$ we have a system of two ODEs;
\beq
\frac{\tau _1}{2 \ell_1(s) \ell_2(s)}-\frac{\tau _0}{\ell_2(s){}^2}+\frac{\ell_2'(s){}^2}{\ell_2(s){}^2}-\frac{\ell_1'(s) \ell_2'(s)}{\ell_1(s) \ell_2(s)}+\frac{\ell_1''(s)}{\ell_1(s)}-\frac{\ell_2''(s)}{\ell_2(s)}-\frac{\ell_2(s)}{2\ell_1(s)} = 0, 
\eeq
and
\begin{align}
\frac{\ell_1(s){}^2 \ell_2'(s){}^2}{\ell_2(s){}^2}&-\ell_1'(s){}^2+\frac{s \ell_2'(s){}^2}{\ell_2(s)}-\ell_2'(s)-\frac{\tau _0 \ell_1(s){}^2}{\ell_2(s){}^2}-\frac{s \tau _0}{\ell_2(s)}-\tau _2 \nn\\
&=\frac{2\ell_1(s){}^2 \ell_2''(s)}{\ell_2(s)}-2 \ell_1(s) \ell_1''(s)+s \ell_2''(s)+2 \ell_2(s) \ell_1(s).
\end{align}
One can eliminate $\ell_2$ in order to obtain a single equation of order four for $\ell_1$.
\end{example}

We can construct, for any $k=1,2,\ldots $, a special set of solutions $\ell_1, \ldots, \ell_k$ to the $k$-th member of the PIII hierarchy in terms of a model RH problem. The function $\ell_1(s)$ will be of particular importance to us. The model RH problem consists of finding a function $\Psi=\Psi(z;s)$ satisfying the following properties.


\subsubsection*{RH problem for $\Psi$}
\begin{itemize}
\item[(a)] $\Psi:\mathbb C\setminus\Sigma\to\mathbb C^{2\times 2}$ analytic, with $\Sigma=\cup^{3}_{i=1} \Sigma_i\cup\{0\}$ as illustrated in Figure \ref{modelcontour}. The half-lines $\Sigma_1,\Sigma_3$ can be chosen freely in the upper and lower half plane.
\item[(b)] $\Psi$ has continuous boundary values $\Psi_\pm(z)$ as $z\in\Sigma\setminus\{0\}$ is approached from the left ($+$) or right ($-$) side of $\Sigma\setminus\{0\}$, and they are related by
\begin{align}
&\Psi_+(z)=\Psi_-(z)\begin{pmatrix}1&0\\-e^{\pi i\alpha}&1\end{pmatrix}, &z\in\Sigma_1,\\
&\Psi_+(z)=\Psi_-(z)\begin{pmatrix}0&-1\\1&0\end{pmatrix}, &z\in \Sigma_2,\\
&\Psi_+(z)=\Psi_-(z)\begin{pmatrix}1&0\\-e^{-\pi i\alpha}&1\end{pmatrix}, &z\in\Sigma_3.
\end{align}
\item[(c)] As $z\to\infty$, there exist functions $p(s), q(s), r(s)$ such that $\Psi$ has the asymptotic behaviour
\begin{equation}\label{Psic}
\Psi(z)=\left(I+\frac{1}{z}\begin{pmatrix}q(s)&-i r(s)\\i p(s)&-q(s)\end{pmatrix}+\bigO(z^{-2})\right)z^{-\frac{1}{4}\sigma_3}Ne^{z^{1/2}\sigma_3},
\end{equation}
where $N=\frac{1}{\sqrt{2}}(I+i\sigma_1)$, with $\sigma_1=\begin{pmatrix}0&1\\1&0\end{pmatrix}$, and where the principal branches of $z^{1/2}$ and $z^{-1/4}$ are taken, analytic off $(-\infty,0]$ and positive for $z>0$. The third Pauli matrix $\sigma_3$ is given by $\sigma_3=\begin{pmatrix}1&0\\0&-1\end{pmatrix}$.
\item[(d)] As $z\to 0$, there exists a matrix $\Psi_0(s)$, independent of $z$, such that $\Psi$ has the asymptotic behaviour
\begin{equation}
\label{Psi0}
\Psi(z)=\Psi_0(s)(I+\bigO(z))e^{-\left(-\frac{s}{z}\right)^k\sigma_3}z^{\frac{\alpha}{2} \sigma_3}H_j,
\end{equation}
for $z \in \Omega_j$, where $H_1, H_2, H_3$ are given by
\begin{align}
&\label{H1}H_1=I\\
&\label{H2}H_2=\begin{pmatrix}1&0\\-e^{\pi i\alpha}&1\end{pmatrix},\\
&\label{H3}H_3=\begin{pmatrix}1&0\\e^{-\pi i\alpha}&1\end{pmatrix}.
\end{align}
\end{itemize}
\begin{figure}[t]
\centering 
\includegraphics[scale=0.6]{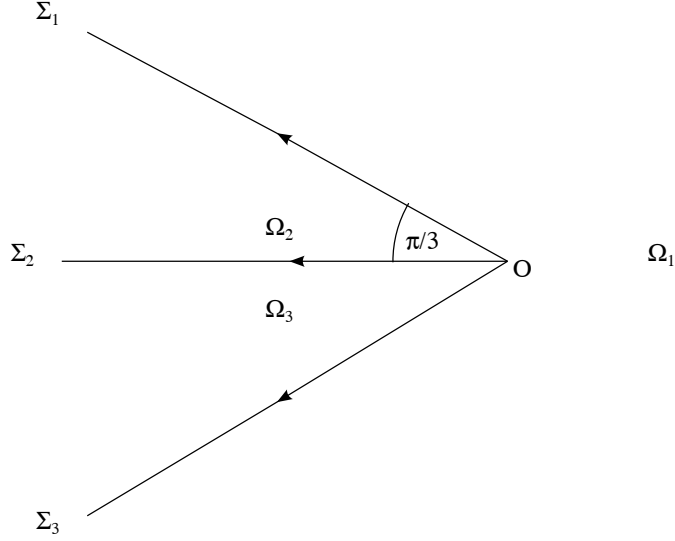}
\caption{The jump contour $\Gamma$ for the model RH problem for $\Psi$. Contours are labelled by $\Sigma$ and sectors by $\Omega$.}
\label{modelcontour}
\end{figure}
\begin{remark}
In the case $k=1$, the RH problem for $\Psi$ coincides, up to the orientation of the contours, with the model RH problem introduced in \cite{XDZ}.
In the case $k=0$, it is a RH problem which can be solved using Bessel functions \cite{Vanlessen2}.
\end{remark}
\begin{remark}
It is important to note that the function
\beq
\Psi(z,s) H_j^{-1} z^{-\frac{\alpha}{2} \sigma_3}e^{\left(-\frac{s}{z}\right)^k\sigma_3}
\eeq
is not analytic at $z=0$ and in fact has a jump across $\Gamma_2$. Indeed, if the asymptotic behaviour in condition (d) of the above RH problem were modified to
\begin{equation}
\Psi(z,s)=\Psi_0(z, s)e^{-\left(-\frac{s}{z}\right)^k\sigma_3}z^{\frac{\alpha}{2} \sigma_3}H_j,
\end{equation}
with $\Psi_0(z,s)$ analytic at $z = 0$, then the resulting RH problem would have no solution. A more detailed description of the analytic structure of $\Psi$ near the origin can be given as follows. Define 
\begin{equation}\label{def f origin}
f_2(z,s)=\frac{e^{-z}}{2\pi i}\int_0^{-\infty}|u|^\alpha e^u e^{-2\left(-\frac{s}{u}\right)^k}\frac{du}{u-z},
\end{equation}
and let $\widehat\Psi_0$ be defined by
\begin{equation}\label{Psi0better}
\Psi(z,s)=\widehat\Psi_0(z,s)\begin{pmatrix}1&f_2(z,s)\\0&1
\end{pmatrix}e^{-\left(-\frac{s}{z}\right)^k\sigma_3}z^{\frac{\alpha}{2}\sigma_3} H_j.
\end{equation}
Then $\widehat\Psi_0$ is an analytic function near $z=0$, and $\Psi_0$ defined by (\ref{Psi0}) takes the form
\begin{equation}
\Psi_0(s)=\widehat\Psi_0(0,s)\begin{pmatrix}1&f_2(0,s)\\0&1\end{pmatrix}.
\end{equation}
\end{remark}
\begin{remark}
If the model RH problem has a solution, it follows from standard techniques that the solution is unique. Existence of a solution is a much more subtle issue. We will show that the model RH problem is solvable for $k\in\mathbb N$ and $s>0$.
\end{remark}
\begin{theorem}\label{P3theorem}
Let $\alpha>-1$, and let $\Psi(z;s)$ be the unique solution of the model RH problem for $s>0$. Then, the limit
\beq
\label{qPhi}
y_\alpha(s) = -2 i \frac{d}{ds}\left[\lim_{z \rightarrow \infty} z \Psi(z,s^2) e^{-z^{1/2}\sigma_3} N^{-1}z^{\frac{1}{4}\sigma_3}\right]_{12}=-2\frac{d}{ds}\left(r(s^2)\right)
\eeq
exists and it is a solution of the equation for $\ell_1$ in the $k$-th member of the Painlev\'{e} III hierarchy, with the parameters $\tau_0,\ldots, \tau_k$ given by
\beq \label{tau}
\tau_p =  
\begin{cases}
4^{2k+1} k^2,&\mbox{ for $p = 0$,}\\
-(-4)^{k+1} \alpha k,&\mbox{ for $p = k$,}\\
0,&\mbox{ for $0 < p < k$.}\end{cases}
\eeq
Moreover, $y(s)=y_\alpha(s)$ has the following asymptotics as $s\to +\infty$ and as $s\to 0$,
\begin{align}
&\label{as y infty}y(s)=-\frac{8k}{2k+1}\left(\beta_{k-2} - \frac{3}{2}z_0\right)s^{\frac{2k-1}{2k+1}} + \bigO(1),& s\to +\infty,\\
&\label{as y 0}y(s)=\bigO(s^{2k-1})+\bigO(s^{2\alpha+1}),& s\to 0,
\quad  s>0,
\end{align}
where we have used the constants
\begin{align}
&z_0 = - \left(\frac{2^{k-1}(k-1)!}{(2k-1)!!}\right)^{-\frac{2}{2k+1}}, \\
&\beta_{j} = -(-z_0)^{-\frac{3}{2}-j}\frac{(2j+1)!!}{2^{j} j!},
\end{align}
with $(2j+1)!!=(2j+1)(2j-1)\ldots 3.1$ the double factorial.
\end{theorem}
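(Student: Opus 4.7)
The plan is to apply the standard Riemann-Hilbert / Lax pair approach: derive compatible linear systems satisfied by $\Psi$ in $z$ and $s$, and then extract the PIII hierarchy from their compatibility. Set $A(z,s) = (\partial_z\Psi)\Psi^{-1}$ and $B(z,s)=(\partial_s\Psi)\Psi^{-1}$. Since the jump matrices in conditions (b)--(d) are piecewise constant in $z$ and independent of $s$, both $A$ and $B$ are globally meromorphic on $\mathbb{C}$, with singularities only at $z=0$ and $z=\infty$. Inserting (\ref{Psic}) into the definition of $A$ shows that at infinity $A$ reduces to a polynomial of degree $0$ in $z$ (the half-integer powers coming from $e^{z^{1/2}\sigma_3}$ combine with the conjugation by $z^{-\sigma_3/4}N$ to produce a rational result), while $B\to 0$. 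Inserting the refined expansion (\ref{Psi0better}) into the definitions shows that $A$ has a pole of order $k+1$ and $B$ a pole of order $k$ at the origin, with coefficients expressible algebraically in $q(s), p(s), r(s)$ from (\ref{Psic}) and in the entries of $\widehat\Psi_0(0,s)$.

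The compatibility condition $\partial_s A - \partial_z B + [A,B] = 0$, expanded in powers of $z$, yields a system of ODEs coupling all these coefficients. Parametrising the off-diagonal entries of the $B_j$'s by the functions $\ell_1(s),\ldots,\ell_k(s)$, together with one additional function $u(s)$ from the constant term of $A$ at infinity, matching the leading behaviour (\ref{Psic}) at $z=\infty$ forces $\ell_0(s)=s/2$, while matching the order of the pole at $z=0$ gives $\ell_{k+1}(s)=0$. Algebraic elimination of the remaining entries then recasts the compatibility system into the bilinear form (\ref{P3def}). The constants $\tau_p$ are fixed by the boundary data of the RH problem: $\tau_0$ by the leading exponent of the essential singularity $(-s/z)^k$ at $z=0$, giving $\tau_0 = 4^{2k+1}k^2$; $\tau_k$ by the monodromy coefficient $\alpha$ in the factor $z^{\alpha\sigma_3/2}$, giving $\tau_k = -(-4)^{k+1}\alpha k$; while the intermediate $\tau_p$ vanish because no further terms appear in the formal asymptotics at $z=0$ between these two. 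The identity $y_\alpha(s)=-2\frac{d}{ds}r(s^2)$ then identifies $y_\alpha$, up to the reparametrisation $s\mapsto s^2$ that brings the hierarchy into its standard form, with a scalar multiple of $\ell_1$, so that $y_\alpha$ solves the order-$2k$ scalar reduction of the hierarchy.

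For the asymptotic statements two rather different analyses are required. The case $s\to 0^+$ is handled by a small-norm perturbation: the factor $e^{-(-s/z)^k\sigma_3}$ in (\ref{Psi0}) tends to the identity uniformly outside a shrinking neighbourhood of the origin, so $\Psi$ is compared to the solution $\Psi^{\mathrm{Bes}}$ of the classical Bessel RH problem of \cite{Vanlessen2}; the ratio $R = \Psi(\Psi^{\mathrm{Bes}})^{-1}$ solves a small-norm RH problem whose jumps are close to the identity, and the two competing error terms $O(s^{2k-1})$ and $O(s^{2\alpha+1})$ in (\ref{as y 0}) arise respectively from the contributions in the regions $|z|\gg s^{2k/(2k+1)}$ and $|z|\ll s^{2k/(2k+1)}$. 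The case $s\to +\infty$ is the main technical hurdle, since one cannot perturb off any simpler RH problem. Instead one carries out a Deift-Zhou steepest descent analysis of the model itself, introducing a $g$-function that balances the essential singularities $z^{1/2}$ at infinity and $(-s/z)^k$ at the origin; the combined phase $\phi(z)=z^{1/2}+(-s/z)^k$ has a dominant saddle at $z_*\sim -z_0\, s^{2/(2k+1)}$, which is the origin of the constant $z_0$, and after standard local parametrix constructions and a careful reading of the $12$-entry of the residue at infinity one obtains the leading term $-\tfrac{8k}{2k+1}(\beta_{k-2}-\tfrac{3}{2}z_0)s^{(2k-1)/(2k+1)}$ stated in (\ref{as y infty}). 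Carrying out this steepest descent cleanly, in the presence of a full essential singularity at the origin and a saddle point that recedes with $s$ at a non-integer power rate that forces a nonstandard $g$-function, is the main obstacle in the argument.
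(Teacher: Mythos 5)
Your overall strategy matches the paper's: isomonodromy / Lax pair to derive the ODE system, and Deift--Zhou steepest descent for the $s\to 0$ and $s\to+\infty$ asymptotics. But two central ingredients are missing, and without them the derivation of \eqref{P3def} does not go through as you describe.

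First, you form the Lax pair directly for $\Psi(z,s)$. The paper instead first performs the change of variables and gauge transformation \eqref{def Phi}, $\Phi(z,s) = \bigl(\begin{smallmatrix}1&0\\ s^{-1}r(s^2)&1\end{smallmatrix}\bigr)s^{\sigma_3/2}e^{i\pi\sigma_3/4}\Psi(s^2z,s^2)$. This is not a cosmetic step: it moves the $s$-dependence from the essential singularity at $z=0$ (the factor $e^{-(-s/z)^k\sigma_3}$) to the one at $z=\infty$ (now $e^{sz^{1/2}\sigma_3}$), so that $B_\Phi=(\partial_s\Phi)\Phi^{-1}$ has \emph{no} pole at the origin and collapses to the degree-one polynomial $\bigl(\begin{smallmatrix}0&1\\ z-u&0\end{smallmatrix}\bigr)$. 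In your set-up $B_\Psi$ has a pole of order $k$ at $z=0$ in addition to $A_\Psi$'s pole of order $k+1$; the compatibility equations then couple two sets of $k$-ish unknown matrix coefficients and do not reduce to the paper's clean three scalar ODEs \eqref{asol}--\eqref{aceqn}. Relatedly, your parametrisation is misattributed: in the paper the $\ell_j$'s come from the $(1,2)$-entry $b$ of $A$ (via \eqref{btol}), while $u$ comes from $B$ -- not the other way around.

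Second, and more seriously, the bilinear system \eqref{P3def} is \emph{not} obtained by ``algebraic elimination'' of the compatibility ODEs. As the remark after the proof explicitly notes, compatibility alone gives the Lenard-type recursion $\ell'_{j+1}=\ell_j'''+4u\ell_j'+2u'\ell_j$ with $\ell_0=s/2$, $\ell_{k+1}=0$, which characterises the hierarchy by an ODE for $u$ of order $3k+1$ -- a different form. The step that produces \eqref{P3def}, and with it the constants $\tau_p$, is the computation of $\det A$ in two ways: once as $-a^2-bc$ using \eqref{asol}--\eqref{csol}, and once from the local expansion $\Phi=\Phi_0 F e^{\Delta\sigma_3}$ at $z=0$ (equations \eqref{Deldef}--\eqref{Delta2}), giving $\det A = -\Delta'(z)^2 + O(z^{-k-1})$. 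Equating these, inserting the Laurent ansatz \eqref{btol} for $b$, and reading off the coefficients of $z^{-2k-2},\ldots,z^{-k-2}$ yields \eqref{P3def} with $\tau_p = 4^{2k-p+1}\sum_q\Delta_{k-p+q}\Delta_{k-q}$, which then specialises to \eqref{tau}. Your sketch never invokes the local singular data beyond saying the $\tau_p$ are ``fixed by the boundary data,'' which skips the actual mechanism.

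Your description of the asymptotics is heuristically in the right direction, but also diverges from what the paper does. For $s\to 0$ the paper does not use a shrinking neighbourhood: it builds a local parametrix $F$ on a \emph{fixed}-radius disk out of $\Upsilon_0$ and the function $f_2$ from \eqref{def f origin}, and obtains a uniform small-norm bound $R=I+O(s^k)$; the exponent $s^{2\alpha+1}$ in \eqref{as y 0} does not come from a competition between $z$-regions as you suggest. For $s\to+\infty$ the paper rescales $z\mapsto s^{2k/(2k+1)}z$ and builds a $g$-function \eqref{gdef} with a fixed square-root branch point at $z_0$; your ``saddle of $z^{1/2}+(-s/z)^k$'' picture is morally the same object, but the argument is a hard-edge/soft-edge style steepest descent with an Airy local parametrix at $z_0$, not a classical saddle-point evaluation.
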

\begin{remark}
\label{lnremark}
In fact we will prove a more general result in the sense that all $\ell_j$'s may be extracted from the model problem for $\Psi$. To this end we write the asymptotic expansion of $\Psi$ as $z \to \infty$ as
\begin{equation}
\Psi(z)=\left(I+\sum^\infty_{j=1}C_j z^{-j}\right)z^{-\frac{1}{4}\sigma_3}Ne^{z^{1/2}\sigma_3}, \qquad C_j = \begin{pmatrix}q_j(s)&-i r_j(s)\\i p_j(s)&-q_j(s)\end{pmatrix}.
\end{equation}
Define the formal power series
\beq
m_2(z,s) = \sum^\infty_{j=1} \frac{r_j(s^2)}{s^{2j-1}z^{j}},
\eeq
and define in addition formal power series $m_1,m_3,m_4$ in $z^{-1}$, $m_i(z,s) = \sum^\infty_{j=1} m_{i,j}(s) z^{-j}$. 
The quantities $m_3$ and $m_4$ are defined in terms of $m_1$ and $m_2$ as
\begin{align}
&m_3(z,s) := -\partial_s m_1-\frac{r\left(s^2\right) (m_1+1)}{s}-\frac{1}{2} \partial^2_s m_2\nonumber \\&\label{k3rule}\hspace{5cm} +m_2 \left(3 r'\left(s^2\right)-\frac{r\left(s^2\right){}^2}{2
s^2}-\frac{3 r\left(s^2\right)}{2 s^2}+z\right), \\
\label{k4rule}
&m_4(z,s) := \frac{1}{2}\left(\frac{r(s^2)}{s} m_2 - \partial_s m_2 \right),
\end{align} and $m_1$ can be found recursively from the relation
\beq
\label{kdef}
m_{1,j} = \frac{1}{2} (m_4^2 + m_2 m_3)_j - \frac{1}{2} \sum^{j-1}_{i=0} m_{1,j-i}m_{1,i},
\eeq
where $(m_4^2 + m_2 m_3)_j$ denotes the coefficient of $z^{-j}$ in the formal power series of $m_4^2 + m_2 m_3$. Note that upon substituting in the expressions for $m_3$ and $m_4$ in terms of $m_1$ and $m_2$, the right hand side of \eqref{kdef} only contains $m_{1,i}$ with $i < j$ and therefore gives a well defined recursive relation for $m_{1,j}$.  

We now introduce the matrix
\beq
K(z,s) := \begin{pmatrix}m_1+m_4+1&m_2\\m_3+ s^{-1}r(s^2)(m_1+m_4+1)&m_1-m_4+s^{-1}r(s^2)m_2+1 \end{pmatrix},
\eeq
where $r=r_1$. We then obtain expressions for all $\ell_j$ in terms of $r_j$ via the relation
\begin{align}
4\sum^j_{n=0} &\ell_{j-n}(s) (4z)^n = \\
&\left((4z)^{j+1}\Tr\left[(\partial_z K) K^{-1} +\frac{s}{2} K \sigma_- K^{-1} + \frac{1}{2z}K\left(s\sigma_+ - \frac{1}{2}\sigma_3 \right)K^{-1} \sigma_- \right]\right)_+,
\end{align}
where the notation $(\ldots)_+$ denotes a projection onto the positive power parts of the power series. The above relation gives for the first few $\ell_j$'s,
\begin{align}
&\ell_0(s) = \frac{s}{2},\\
&\ell_1(s) = -4 s r'\left(s^2\right),\\
&\ell_2(s) = \frac{8}{s} \left(r\left(s^2\right){}^2 r'\left(s^2\right)+2
s^2 r'\left(s^2\right){}^2-2 r\left(s^2\right) \left(r'\left(s^2\right)+s^2 r''\left(s^2\right)\right)-2 r_2'\left(s^2\right)\right).
\end{align}
Note that these expressions are independent of $k$.
\end{remark}

\begin{remark}
There has appeared a distinct definition of a Painlev\'{e} III hierarchy in the literature \cite{Sakka}. The model problem for $\Psi$ may be connected with this alternative hierarchy if we choose a different time to act as the independent variable in the ODE. To see this, we generalise the $\Psi$ model problem by altering the behaviour at the origin to
\begin{equation}
\Psi(z)=\Psi_0(s)(I+\bigO(z))e^{-\sum^k_{j=1}\left(-\frac{s_j}{z}\right)^j\sigma_3}z^{\frac{\alpha}{2} \sigma_3}H_j,
\end{equation}
and then define
\beq
\widehat{\Psi}(z,s) := \Psi_0^{-1}(s)\Psi(z^{-1},s).
\eeq
Using the asymptotic behaviour of $\widehat{\Psi}$ in the definition of the Lax pair
\beq
\widehat{B} := \partial_{s_1} \widehat{\Psi} \widehat{\Psi}^{-1} \qquad \mathrm{and} \qquad \widehat{A} := \partial_{z} \widehat{\Psi} \widehat{\Psi}^{-1},
\eeq
we find
\beq
\widehat{B} = \begin{pmatrix}z&\widehat{v}(s_1)\\\widehat{u}(s_1)&z\end{pmatrix},
\eeq
and $\widehat{A} = \sum^{k-1}_{j=-2} A_j(s_1)$, where $A_j$ are $s_1$ dependent matrices and $\widehat{u}$ and $\widehat{v}$ are some undetermined functions. The form of these Lax matrices precisely matches those proposed in \cite{Sakka} as forming a Lax pair for a Painlev\'e III hierarchy. However, in the random matrix model under consideration, it is more natural to work with the variable $s$ instead of $s_1$, and this leads us to a different Painlev\'e hierarchy. 
\end{remark}

\subsubsection*{Double scaling limit for the partition functions in the perturbed LUE and GUE}

Define the pLUE partition function as
\beq\label{partition function pLUE}
Z_{n,k,\alpha}^{pLUE}(t)=\frac{1}{n!}\int_{[0,+\infty)^n}\Delta(x)^2 \prod_{j=1}^n x_j^\alpha e^{-n\left(x_j+\left(\frac{t}{x_j}\right)^k\right)}dx_j,\qquad \alpha>-1,
\eeq
and the pGUE partition function as
\beq\label{partition function pGUE}
Z_{n,k,\alpha}^{pGUE}(t)=\frac{1}{n!}\int_{\mathbb R^n}\Delta(x)^2 \prod_{j=1}^n |x_j|^{2\alpha} e^{-\frac{n}{2}\left(x_j^2+\left(\frac{t}{x^2}\right)^k\right)}dx_j,\qquad \alpha>-1/2.\eeq
Whereas for the pLUE and pGUE partition functions with $k=1$, large $n$ asymptotics are described in terms of a special solution to the PIII equation \cite{BMM, XDZ2}, when perturbing a unitary invariant ensemble with a pole of order $k>1$, the special solutions to the higher order analogues of PIII introduced before will appear. 

\begin{theorem}\label{theorem: partition}
As $n\to\infty$ and $t\to 0$ in such a way that $ s_{n,t}:=2^{-\frac{1}{k}} n^\frac{2k+1}{k} t\to s$, we have the asymptotics
\begin{align}
&\label{as partition function pLUE}Z_{n,k,\alpha}^{pLUE}(t)=Z_{n,k,\alpha}^{pLUE}(0)\exp\left(\frac{1}{2} \int^{s_{n,t}}_0(r(0)-r(\xi))\frac{d\xi}{\xi}\right)\left(1+\bigO(n^{-1}) \right) ,\\
&Z_{n,k,\alpha}^{pGUE}(t)=Z_{n,k,\alpha}^{pGUE}(0)\exp\left(\frac{1}{2} \int^{s_{n,t}}_0(-\alpha^2-r_{\alpha-\frac{1}{2}}(\xi)-r_{\alpha+\frac{1}{2}}(\xi))\frac{d\xi}{\xi}\right)\nonumber\\&\hspace{9cm}\times \ \left(1+\bigO(n^{-1}) \right),\label{as partition function pGUE}
\end{align}
where $r_\alpha(s):=r(s;\alpha)$ is related to the Painlev\'{e} transcendent $y_\alpha(s)$ by 
\[y_\alpha(s) = -2 \frac{d}{ds}\left(r_\alpha(s^2)\right),\qquad r_\alpha(0) = \frac{1}{8}(1-4\alpha^2).\] In other words, we have
\begin{equation}\label{integral r}
r_\alpha(s)=\frac{1}{8}(1-4\alpha^2)-\frac{1}{2}\int_0^{\sqrt{s}}y_\alpha(\eta)d\eta
.
\end{equation}
\end{theorem}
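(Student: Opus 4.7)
The plan is to derive a differential identity in $t$ for $\log Z$, express it in terms of Riemann--Hilbert (RH) data at the origin, apply the steepest descent analysis (with local parametrix built from $\Psi$) to evaluate this in terms of $r(s_{n,t})$, and integrate in $t$. The pGUE case additionally uses the standard factorisation of the Hankel determinant of a symmetric weight into two LUE-type factors with shifted parameters $\alpha\mp 1/2$.

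\textbf{pLUE.} Direct differentiation of \eqref{partition function pLUE} gives
\beq
t\frac{d}{dt}\log Z^{pLUE}_{n,k,\alpha}(t) = -nkt^k\int_0^{\infty}\frac{K_n(x,x;t)}{x^k}\,dx,
\eeq
where $K_n$ is the correlation kernel. I would rewrite this, via the Christoffel--Darboux formula, as a residue at $z=0$ of a quantity built from the orthogonal-polynomial RH matrix $Y(z;t)$. Performing the steepest descent transformations $Y\to T\to S\to R$ carried out elsewhere in the paper, $R=I+O(1/n)$ uniformly for $s_{n,t}$ in compact subsets of $(0,\infty)$, so the leading contribution localises at the origin, where the local parametrix is $E(z)\Psi(\phi_n(z);s_{n,t})$ with $E$ a locally analytic matching factor. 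Tracking the subleading coefficient in the expansion \eqref{Psic} of $\Psi$ at infinity identifies the coefficient of interest as $r(s_{n,t})$, producing
\beq
t\frac{d}{dt}\log Z^{pLUE}_{n,k,\alpha}(t) = \tfrac12\bigl(r(0)-r(s_{n,t})\bigr) + R_n(t),
\eeq
with a remainder whose $t$-integral $\int_0^t R_n(\tau)\,d\tau/\tau$ is $O(1/n)$ uniformly in $s_{n,t}$ on compact subsets. Since $s_{n,t}/t$ is independent of $t$, one trades $t\partial_t$ for $s\partial_s$; integrating from $t=0$ (with the small-$s$ asymptotics \eqref{as y 0} giving $(r(0)-r(\xi))/\xi=O(\xi^{k-1})+O(\xi^\alpha)$, hence integrable at $\xi=0$) yields \eqref{as partition function pLUE}.

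\textbf{pGUE.} The symmetry $x\mapsto -x$ of the weight forces the monic orthogonal polynomials to split as $P_{2m}(x)=Q_m(x^2)$ and $P_{2m+1}(x)=xR_m(x^2)$; under $y=x^2$, the $Q_m$ and $R_m$ are monic orthogonal polynomials on $(0,\infty)$ with respect to the weights $y^{\alpha-1/2}e^{-(n/2)(y+(t/y)^k)}$ and $y^{\alpha+1/2}e^{-(n/2)(y+(t/y)^k)}$. Consequently $Z^{pGUE}_{n,k,\alpha}(t)$ factorises, up to a $t$-independent combinatorial constant, into the product of two LUE-type Hankel determinants corresponding to $\alpha\mapsto\alpha\mp 1/2$. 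Executing the differential-identity and RH-analysis scheme directly on the pGUE---so that $s_{n,t}$ remains the natural scaling variable and two copies of $\Psi$ (with $\alpha\mapsto\alpha\mp 1/2$) appear in the local parametrix at the origin---produces
\beq
t\frac{d}{dt}\log Z^{pGUE}_{n,k,\alpha}(t) = \tfrac12\bigl(-\alpha^2 - r_{\alpha-1/2}(s_{n,t}) - r_{\alpha+1/2}(s_{n,t})\bigr) + R'_n(t),
\eeq
where the constant $-\alpha^2 = r_{\alpha-1/2}(0)+r_{\alpha+1/2}(0)$ follows from the initial condition $r_\alpha(0)=(1-4\alpha^2)/8$. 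Integrating yields \eqref{as partition function pGUE}.

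\textbf{Main obstacle.} The principal technical hurdle is establishing the pLUE differential identity: writing $\partial_t\log Z$ as a residue at $z=0$ of a quantity built from $Y$, substituting the local $\Psi$-parametrix, and matching with the outer parametrix to extract $r(s_{n,t})$ from \eqref{Psic}; one must then uniformly control the contributions from $R-I$ and from the outer parametrix away from $z=0$ so that they contribute $O(1/n)$ after integration, and handle the endpoint $\tau\to 0$ in the $t$-integral carefully using the small-$s$ behaviour \eqref{as y 0}. A secondary subtlety in the pGUE case is that the naive reduction to a pLUE of size $\lfloor n/2\rfloor$ would produce a different scaling of $s$ by a power of $2$; this is resolved most cleanly by performing the RH analysis directly on the pGUE, taking $s_{n,t}$ as its natural scaling variable, rather than invoking the pLUE asymptotics as a black box.
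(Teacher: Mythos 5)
Your overall scheme (differential identity in $t$, extract the $r$-dependence via Riemann--Hilbert steepest descent, integrate, and for pGUE reduce to two pLUE factors with shifted $\alpha$) is the same as the paper's, and your pGUE factorisation and the constant $r_{\alpha-\frac12}(0)+r_{\alpha+\frac12}(0)=-\alpha^2$ are exactly right. But the pLUE step where you actually connect the differential identity to $r(s_{n,t})$ is structured quite differently from the paper, and as written it has a gap.

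The paper does not differentiate $Z^{pLUE}(t)$ directly. It first substitutes $x_i=t\xi_i$, giving $Z^{pLUE}(t)=t^{n^2+\alpha n}\widetilde Z(t)$ where the weight for $\widetilde Z$ is $\xi^\alpha e^{-n(t\xi+\xi^{-k})}$; the $t$-dependence now sits in the \emph{linear} part, not the singular part. Differentiating $\widetilde Z$ then produces $\int x\,\widetilde K_n(x,x)\,dx$, which by contour deformation becomes the residue at \emph{infinity} of $z\Tr(Y^{-1}Y'\sigma_3)$ (Lemma~7.1). That residue is read off from the \emph{outer} expansion $Y=e^{n\ell\sigma_3/2}R\,P^{(\infty)}e^{n(g-\ell/2)\sigma_3}e^{\frac{n}{2}(t/z)^k\sigma_3}$, with the $r$-dependence entering only through the coefficient $R_1$ in $R=I+R_1/z+\dots$, which is computed by residue calculus on the parametrix disc boundaries in Section~3.8. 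Your version starts from $t\,\partial_t\log Z=-nkt^k\int_0^\infty x^{-k}K_n(x,x)\,dx$ and proposes to evaluate it as a ``residue at $z=0$'' of $z^{-k}\Tr(Y^{-1}Y'\sigma_3)$. The obstacle is that $0$ is not an isolated singularity of that function (there is a branch cut along $[0,\infty)$, and the local behaviour of the $\Psi$-parametrix there is essential, not polar), so the small-circle contribution is not a residue in the ordinary sense; and because $n^2f(z)\to 0$ as $|z|\to 0$, extracting $r$ from the small-circle contour requires the \emph{small-argument} behaviour of $\Psi$, not the $z\to\infty$ expansion \eqref{Psic} you invoke. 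The paper's rescaling trick avoids all of this: the residue at $\infty$ only needs the outer parametrix plus $R_1$, and $R_1$ is obtained from the matching on $\partial U^{(0)}$ at fixed distance from $0$, where $|n^2f(z)|$ is large and \eqref{Psic} applies cleanly. So the overall route you propose is plausible but skips a genuinely non-trivial step (``Tracking the subleading coefficient in the expansion \eqref{Psic} of $\Psi$ at infinity identifies the coefficient of interest as $r(s_{n,t})$'' is not the right tool for an integral localised at the origin of the $z$-plane), whereas the paper's rescaling makes that step a one-line residue computation.

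On pGUE, your worry about the factor $2^{(2k+1)/k}$ mismatch between the pGUE scaling variable and the size-$n/2$ pLUE scaling variable is a legitimate observation that the paper does not address explicitly; it simply substitutes \eqref{as partition function pLUE} into \eqref{partition even}--\eqref{partition odd}. Your proposed fix (do the RH analysis directly on the symmetric weight) would indeed resolve any ambiguity, but it is more work than the paper invests; the paper's intended reading is that the substitution is done with the pLUE factors' own scaling variables, which then appear in the final formula. In any case, on this point you are flagging a real subtlety rather than missing one.
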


\subsubsection*{Double scaling limit of the correlation kernel}

In what follows, we consider potentials $V$ which are real analytic on $[0,+\infty)$ and such that
\beq
\lim_{x\to +\infty}\frac{V(x)}{\log(x^2+1)}=+\infty.
\eeq

The 
correlation kernel for the eigenvalues in the model (\ref{pUE}) is expressed using orthogonal polynomials with respect to the weight
\beq
w(x) =x^\alpha e^{-n V_k(x)} =x^\alpha\exp\left[-n \left(V(x) + \left(\frac{t}{x}\right)^k\right)\right]
\eeq
on $[0,+\infty)$.
Let
 $p_j$, $j=0,1,\ldots$ be the family of monic polynomials of degree $j$ characterised by the relations
\beq\label{ortho p}
\int^\infty_0 p_j(x) p_m(x) w(x) dx = h_j \delta_{j m}.
\eeq

The correlation kernel of the determinantal point process (\ref{jpdf}) can be written as
\beq
\label{Keqn}
K_n(x,y) = h_{n-1}^{-1}\frac{\sqrt{w(x)w(y)}}{x-y} \left(p_n(x)p_{n-1}(y) - p_n(y)p_{n-1}(x) \right).
\eeq
Important quantities that can be computed directly from $K_n(x,y)$ are the $k$-point correlation functions. The one-point function, or limiting mean eigenvalue density, is given by
\beq
\rho(x) := \lim_{n \rightarrow \infty} \frac{1}{n}K_n(x,x),
\eeq
and describes the macroscopic behaviour of the eigenvalues in the large $n$ limit.
For fixed $t>0$, the measure $d\nu(x)=\rho(x)dx$ is characterised as the equilibrium measure which minimizes
\beq
I_k(\nu) = \iint \log \frac{1}{|x-y|} d\nu(x)d\nu(y) + \int V_k(y) d\nu(y),
\eeq
among all Borel probability measures $\nu$ on $[0,+\infty)$. If we let $n\to\infty$ and at the same time $t\to 0$, the limiting mean eigenvalue density is not affected by the singular perturbation and is equal to the equilibrium measure $d\mu = \psi(x) dx$ obtained from minimising
\beq
I(\mu) = \iint \log \frac{1}{|x-y|} d\mu(x)d\mu(y) + \int V(y) d\mu(y),
\eeq
in which the singular part of $V_k$ has been dropped. It is this measure and density we will use in the remainder of the paper and refer to as the equilibrium measure. 
The measure $\mu$ is characterised by the Euler-Lagrange equations \cite{SaffTotik}, stating that there exists $\ell \in \mathbb{R}$ such that
\begin{align}
&\label{var eq}2\int \log|x-y|\psi(y)dy - V(x) = \ell, \quad x\in\mathrm{supp}\mu, \\
&\label{var ineq}2\int \log|x-y|\psi(y)dy - V(x) \leq \ell, \quad x\in [0,+\infty).
\end{align}


We will require that $V$ is such that the equilibrium measure $\mu$ is supported on a single interval of the form $[0,b]$, 
which implies \cite{DKM}
that its density $\psi$ can be written as
\beq\label{psi in terms of h}
\psi(x) = \frac{1}{2\pi}h(x)\sqrt{\frac{b-x}{x}},\qquad x\in [0,b], 
\eeq where
$h$ is a real analytic function, non-negative on $[0,b]$. We will also require that $V$ is regular, in the sense that
\begin{itemize}
\item[(i)] the density $\psi$ is positive in the interior of its support $(0,b)$, i.e.\ $h(x)>0$ for $0<x<b$,
\item[(ii)] $\psi$ vanishes like a square root at the endpoint $b>0$ and like an inverse square root at the endpoint $0$, i.e.\ $h(b), h(0)\neq 0$,
\item[(iii)] the variational inequality (\ref{var ineq}) is strict on $(b,+\infty)$.
\end{itemize}

Define functions $\psi_j(x;s)$ in terms of the solution to the model RH problem for $\Psi$ as follows.
For $z \in \Omega_1$, let
\beq
\label{psidef}
\begin{pmatrix}\psi_1(z,s)\\ \psi_2 (z,s)\end{pmatrix} := \Psi(z,s) \begin{pmatrix}1\\ 0\end{pmatrix},
\eeq
and for $z \in  \Omega_2 \cup \Omega_3$, we define $\psi_1$ and $\psi_2$ in such a way that they are analytic in $\mathbb C\setminus \Sigma_1$. This means in particular that 
\beq
\label{psidef2}
\begin{pmatrix}\psi_1(x,s)\\ \psi_2 (x,s)\end{pmatrix} = \Psi_+(x,s) \begin{pmatrix}1\\ -e^{-\pi i \alpha}\end{pmatrix}, 
\eeq
for $x<0$.

Then, 
$\psi_1,\psi_2$ satisfy the system of equations
\beq
\partial_z\begin{pmatrix}\psi_1\\ \psi_2 \end{pmatrix} = \frac{1}{s}\begin{pmatrix} \widehat{a} - \frac{1}{2} \widehat{b}s^{-\frac{1}{2}}U(s) & -is^{-\frac{1}{2}} \widehat{b}\\
i s^\frac{1}{2}(\widehat{c}+\widehat{a}s^{-\frac{1}{2}}U(s)- \frac{1}{4} \widehat{b}s^{-1} U(s)^2) & -\widehat{a}+ \frac{1}{2} \widehat{b}s^{-\frac{1}{2}}U(s) \end{pmatrix} \begin{pmatrix}\psi_1\\ \psi_2 \end{pmatrix},
\eeq
and they are characterised as the unique solution with the asymptotic behaviour
\begin{equation}\label{psi as infty}
\begin{pmatrix}\psi_1(z,s)\\ \psi_2(z,s) \end{pmatrix}=\left(I+\bigO(z^{-1})\right)z^{-\frac{1}{4}\sigma_3}Ne^{z^{1/2}\sigma_3},
\end{equation}
as $z \rightarrow \infty$ in $\Omega_1$ (and, more generally, for $|\arg z|<\pi -\epsilon$ for any $\epsilon>0$), and
\begin{equation}\label{psi as 0}
\begin{pmatrix}\psi_1(z,s)\\ \psi_2(z,s) \end{pmatrix}=\bigO(1)e^{-\left(-\frac{s}{z}\right)^k\sigma_3}z^{\frac{\alpha}{2} \sigma_3},
\end{equation}
as $z \rightarrow 0$ for $z \in \Omega_1$.
Here, $U(s) = \int^{s^\frac{1}{2}}_0 y(s') ds'$, $\widehat{a} = a(s^{-1}z,s^\frac{1}{2})$, $\widehat{b} = b(s^{-1}z,s^\frac{1}{2})$, $\widehat{c} = c(s^{-1}z,s^\frac{1}{2})$, with
\begin{align}
&a(z,s) = -\frac{1}{2} \partial_s b(z,s),\\
&c(z,s) = (z-u) b(z,s) - \frac{1}{2} \partial_s^2 b(z,s),
\end{align}
and
\beq
b(z,s) = \frac{4}{(4z)^{k+1}}\sum^k_{n=0} \ell_{k-n}(s) (4z)^n.
\eeq

\begin{theorem}\label{Kthm}In the double scaling limit where $n\to\infty$ and simultaneously $t\to 0$ in such a way that $ 2^{-\frac{1}{k}} c_1 n^\frac{2k+1}{k} t\to s>0$, with $c_1 = b h(0)^2$, we have
\beq
\lim_{n\to\infty}\frac{1}{c_1 n^2}K_n\left(\frac{-u}{c_1 n^2},\frac{-v}{c_1 n^2};t\right) = \mathbb K^{PIII}(u,v;s),
\eeq
for $u,v<0$,
where 
\begin{equation}
\mathbb K^{PIII}(u,v;s)=e^{\pi i\alpha}\frac{\psi_1(u;s)\psi_2(v;s) - \psi_1(v;s)\psi_2(u;s)}{2 \pi i (u-v)}.
\end{equation}
The limit is uniform for $s$ in compact subsets of $(0,+\infty)$.
\end{theorem}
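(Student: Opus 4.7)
The plan is to apply the Deift-Zhou steepest-descent method to the Fokas-Its-Kitaev Riemann-Hilbert problem for the monic orthogonal polynomials associated to $w(x)=x^\alpha e^{-nV_k(x)}$. Via the Christoffel-Darboux identity, the kernel $K_n$ is expressed in terms of the Fokas-Its-Kitaev matrix $Y$, so the task reduces to obtaining uniform asymptotics for $Y_+$ at the scaling points $x=-u/(c_1 n^2)$, $u<0$. First I would perform the standard two transformations $Y\to T\to S$: normalize at infinity using the $g$-function $g(z)=\int\log(z-y)\psi(y)\,dy$ of the unperturbed equilibrium measure $\mu$ on $[0,b]$ and its Euler-Lagrange constant $\ell$, then open lenses around $(0,b)$ via the $\pm 1$-factorization of the jump matrix. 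On the resulting contour, the jumps of $S$ are exponentially close to the identity away from the endpoints $0$ and $b$, but the factor $e^{-n(t/x)^k}$ in $w$ survives on a neighborhood of $0$, and resolving it is precisely what requires a new local parametrix.

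Next I construct three parametrices: a global parametrix $P^{(\infty)}$ from the Szeg\H{o} function of $x^\alpha$ on $[0,b]$ as in the standard LUE analysis; the standard Airy parametrix at $b$ (justified by the square-root vanishing of $\psi$ at $b$); and a local parametrix $P^{(0)}$ on a fixed disk $D_0$ around the origin in the form
\begin{equation*}
P^{(0)}(z)=E(z)\,\Psi\!\left(\lambda(z);\,s_n\right)\,e^{-n(g(z)+\ell/2)\sigma_3}\,e^{n(t/z)^k\sigma_3}\,z^{\frac{\alpha}{2}\sigma_3},
\end{equation*}
where $\lambda(z)=c_1 n^2 z\,(1+O(z))$ is the conformal change of variable that linearizes the equilibrium phase at $0$, the parameter $s_n$ is chosen so that $(-s_n/\lambda(z))^k$ reproduces $n(t/z)^k$ (which forces $s_n=c_1 s_{n,t}+o(1)$ and fixes the prefactor $c_1$ in the scaling), and $E(z)$ is an analytic prefactor determined by the matching requirement $P^{(0)}(P^{(\infty)})^{-1}=I+O(n^{-1})$ on $\partial D_0$. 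The matching is verified using the large-$z$ expansion (c) of $\Psi$, and its uniformity in $s$ over compact subsets of $(0,+\infty)$ follows from Theorem \ref{P3theorem}, which provides existence and $s$-regularity of $\Psi$. Setting $R=SP^{-1}$ with $P$ equal to the parametrix in each region produces a small-norm problem with jumps $I+O(n^{-1})$ on $\partial D_0\cup\partial D_b$ and on the residual lens contour, yielding $R=I+O(n^{-1})$ uniformly.

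Finally, unravelling the transformations inside the lens near $0$ writes $Y_+(x)$ in terms of $R(x)$, $P^{(0)}(x)$ and the lens triangular factors. Substituting $x=-u/(c_1 n^2)$ and $y=-v/(c_1 n^2)$ into the Christoffel-Darboux formula, the $g$-function, Szeg\H{o} and exponential prefactors cancel against the conjugations built into $P^{(0)}$, and what remains is precisely the pair $\psi_1,\psi_2$ defined in \eqref{psidef}, evaluated at $u$ and $v$; combining with $R=I+O(n^{-1})$ gives the announced limit $\mathbb K^{PIII}(u,v;s)$. The principal obstacle is the construction of $P^{(0)}$ together with the matching: the model function $\Psi$ carries an essential singularity at its own origin through the factor $e^{-(-s/z)^k\sigma_3}$, and one must show that after the rescaling $z\mapsto\lambda(z)$ and the choice of $E$ this essential singularity reproduces the singular factor $e^{-n(t/x)^k}$ of the original weight up to exponentially small errors, without destroying analyticity of $R$ inside $D_0$. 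The bookkeeping of the three intertwined exponential factors, namely the $g$-function, the pole perturbation, and the model essential singularity, is the technical heart of the argument; once it is in place, both the small-norm estimate and the kernel extraction proceed along standard lines.
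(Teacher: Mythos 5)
Your proposal follows the same route as the paper: the Deift--Zhou analysis $Y\to T\to S\to R$ with a global Szeg\H{o}-type parametrix, a standard Airy parametrix at $b$, and the key new local parametrix at $0$ built from the model function $\Psi$ with the conformal map $f(z)$ (your $\lambda$) and the scaling $s_{n,t}=2^{-1/k}c_1 n^{(2k+1)/k}t$, followed by unravelling the transformations in the Christoffel--Darboux formula to identify the limit with $\psi_1,\psi_2$. The paper peels off $e^{-\frac{n}{2}(t/z)^k\sigma_3}$ already in the $Y\to T$ step and conjugates by $(-z)^{\frac{\alpha}{2}\sigma_3}e^{n\xi\sigma_3}$ rather than by the factors you wrote (so your constants and branches in the displayed $P^{(0)}$ are slightly off), but these are bookkeeping variants of the same construction and you correctly flag that this matching is the technical heart; the argument and conclusion coincide with the paper's.
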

\begin{remark}
For $k=1$, our limiting kernel $\mathbb K^{PIII}(-u,-v;s)$ is equal to the limiting kernel obtained in \cite{XDZ}.
\end{remark}
\begin{theorem}
\label{limitingkernelthm}
The limiting kernel $\mathbb K^{PIII}(u,v;s)$ is positive for $u,v<0$ and $s>0$, and it has the following limits,
\begin{align}
&\lim_{s\to 0}\mathbb K^{PIII}(u,v;s)=\mathbb{J}_\alpha(-u,-v),\\
&\lim_{s\to +\infty} \frac{s^{\frac{2\eta}{3}}}{c_2} K^{PIII}\left(s^{\eta}\left(z_0 + s^{-\frac{\eta}{3}} \frac{u}{c_2}\right),s^{\eta}\left(z_0 + s^{-\frac{\eta}{3}} \frac{v}{c_2}\right);s\right)= \mathbb{A}(u,v),
\end{align}
where $\eta = \frac{2k}{2k+1}$, and  $\mathbb{J}_\alpha(u,v)$ and $\mathbb{A}(u,v)$ are the Bessel and Airy kernels defined as
\begin{align}
&\mathbb{J}_\alpha(u,v)=\frac{J_\alpha(\sqrt{u})\sqrt{v}J_\alpha'(\sqrt{v})-J_\alpha(\sqrt{v})\sqrt{u}J_\alpha'(\sqrt{u})}{2(u-v)},\\
&\mathbb{A}(u,v)=\frac{\Ai(u)\Ai'(v)-\Ai(v)\Ai'(u)}{u-v}.
\end{align}
The constant $c_2$ has the explicit form
\beq
c_2 = \left(\frac{3}{2}\right)^\frac{2}{3}(-z_0)^{-1-\frac{2k}{3}} \sum_{j=0}^{k-1}\frac{(2j+1)!!}{2^j j!}.
\eeq
\end{theorem}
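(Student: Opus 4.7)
The plan handles the three assertions separately.

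For positivity, I use the finite-$n$ approximation from Theorem~\ref{Kthm}. Each correlation kernel $K_n(\cdot,\cdot;t)$ on $(0,+\infty)$ is the reproducing kernel of an $n$-dimensional subspace of $L^2$, hence defines a non-negative definite trace-class operator. For $\varphi\in C_c^\infty((-\infty,0))$, the substitution $u=-c_1 n^2 x$ rewrites the rescaled quadratic form $\iint\bar\varphi(u)\varphi(v)\frac{1}{c_1 n^2}K_n(-u/(c_1 n^2),-v/(c_1 n^2);t)du\,dv$ as $c_1 n^2\iint\bar{\psi_n}(x)\psi_n(y)K_n(x,y;t)dx\,dy\ge 0$, with $\psi_n(x)=\varphi(-c_1 n^2 x)$. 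Uniform convergence on compact sets in Theorem~\ref{Kthm} passes this inequality to the limit, yielding that $\mathbb K^{PIII}(\cdot,\cdot;s)$ defines a non-negative definite integral operator on $L^2((-\infty,0))$; in particular $\mathbb K^{PIII}(u,u;s)\ge 0$.

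For the $s\to 0$ limit, I analyze the model RH problem asymptotically. The jumps on $\Sigma_1,\Sigma_2,\Sigma_3$ and the asymptotic condition at $\infty$ are $s$-independent; only the local condition~\eqref{Psi0} depends on $s$, through the factor $e^{-(-s/z)^k\sigma_3}$, which tends to the identity pointwise off $\{0\}$. The formal $s\to 0$ limit coincides with the $k=0$ Bessel parametrix (see the Remark and \cite{Vanlessen2}), solvable in closed form in terms of $J_\alpha$ and $J_\alpha'$. A small-norm/vanishing-lemma argument yields uniform convergence of $\Psi(\,\cdot\,;s)$ to this Bessel parametrix on compact subsets of $\mathbb C\setminus\Sigma$; substituting into~\eqref{psidef}--\eqref{psidef2} and the Christoffel--Darboux form defining $\mathbb K^{PIII}$, and evaluating at $u,v<0$, reduces the expression directly to $\mathbb J_\alpha(-u,-v)$.

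For the $s\to+\infty$ limit, I would perform a Deift--Zhou steepest descent analysis of the model RH problem itself. Using~\eqref{as y infty} and the recursive structure of Remark~\ref{lnremark}, the combined phase $z^{1/2}+(-s/z)^k$ develops a saddle at $z=z_0 s^\eta$ with $\eta=2k/(2k+1)$, producing a soft-edge transition. With the rescaling $z=s^\eta(z_0+\zeta/(c_2 s^{\eta/3}))$ and a $g$-function conjugation normalizing out the leading phase, one opens lenses along steepest-descent contours and verifies that the rescaled RH problem in $\zeta$ converges to the standard Airy model RH problem with error $\bigO(s^{-\eta/3})$. The explicit value of $c_2$ emerges from matching the quadratic and cubic Taylor coefficients of the phase at the saddle, producing the combinatorial sum $\sum_{j=0}^{k-1}(2j+1)!!/(2^j j!)$. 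Substituting the Airy parametrix entries into $\mathbb K^{PIII}$ then yields $\mathbb A(u,v)$. This last step is the main obstacle: constructing a suitable $g$-function for the combined phase, justifying the global deformation of $\Sigma_1,\Sigma_3$ onto steepest-descent contours (not merely a local construction near the saddle), and tracking $c_2$ precisely through the saddle expansion. The $s\to 0$ limit, by contrast, reduces to a routine small-norm estimate once the limit RH problem has been identified.
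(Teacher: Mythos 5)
Your proposal follows the same high-level route as the paper: deduce the two limiting kernels from asymptotic analyses of the model RH problem for $\Psi$. Two observations.

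First, on the limits. You frame the $s\to\infty$ case as the ``main obstacle'' and sketch a fresh steepest-descent analysis of $\Psi$. But that analysis is exactly what Sections~\ref{section: 4} and~\ref{section: 5} of the paper already carry out in full: Section~\ref{section: 4} rescales $z\mapsto s^{2k/(2k+1)}z$, deforms $\Sigma_1,\Sigma_3$ onto $(-\infty,z_0)\cup\Sigma_1'\cup\Sigma_3'$, introduces the $g$-function \eqref{gdef}, and builds a global parametrix~\eqref{Pinf} and an Airy local parametrix~\eqref{Pz0}; the constant $c_2$ comes from the derivative $f_1'(z_0)$ of the conformal map, and the combinatorial sum is $\widetilde{p}_{k-1}(1)$. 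Section~\ref{section: 5} does the corresponding $s\to0$ small-norm analysis, including a local parametrix $F$ near the origin which is needed because the factor $e^{-(-s/z)^k\sigma_3}$ is not uniformly close to $I$ near $0$; your description ``converges pointwise off $\{0\}$'' glosses over this, though for fixed $u,v<0$ the local parametrix plays no role in the final substitution. The paper's proof of Theorem~\ref{limitingkernelthm} simply \emph{inverts the chains of transformations} from those two sections to get explicit asymptotics for $\Psi$ on the negative real axis, then plugs into~\eqref{psidef2} and the definition of $\mathbb K^{PIII}$. Your plan reaches the same endpoint but would redo work that should just be cited. Also, your description of $z_0 s^\eta$ as a ``saddle'' of the ``combined phase'' is imprecise: the $g$-function is not the analytic continuation of $z^{1/2}+(-s/z)^k$ but a separately constructed function $(z-z_0)^{3/2}p_{k-1}(z)z^{-k}$ with a square-root branch point at $z_0$ fixed by the normalization $g(z)=(-1)^{k+1}z^{-k}+\bigO(1)$ at the origin.

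Second, on positivity. Your argument — push the non-negative-definiteness of the finite-$n$ kernel operator through the uniform limit of Theorem~\ref{Kthm} — is clean and is more than the paper itself spells out (the paper's proof of Theorem~\ref{limitingkernelthm} addresses only the two limits). But be careful what it proves: you get non-negativity of the integral operator and hence $\mathbb K^{PIII}(u,u;s)\ge0$ on the diagonal, not the pointwise statement $\mathbb K^{PIII}(u,v;s)>0$ for all $u,v<0$. Reproducing kernels of determinantal processes need not be pointwise positive off the diagonal (the sine kernel is not). If ``positive'' in the statement is read pointwise, your argument does not reach it; if it is read in the operator/trace-class sense (the standard reading in this context), your argument is adequate modulo upgrading $\ge$ to $>$, which would require a separate argument that the limiting operator has no kernel (e.g.\ via the positivity of the density or the universality structure).
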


\subsubsection*{Outline}
In Section \ref{section: 2}, we will study in more detail the model RH problem for $\Psi$. Using Lax pair type arguments, we will establish the relation between the RH problem and the PIII hierarchy. In addition, we will prove the solvability of the RH problem for $s>0$, which implies that the PIII solution $y(s)$ has no singularities for $s>0$.
In Section \ref{section: 3}, we will study the large $n$ asymptotics for the orthogonal polynomials associated to our random matrix model by means of the Deift/Zhou steepest descent method. A crucial feature in this analysis will be the construction of a local parametrix near the origin in terms of the model RH problem for $\Psi$.
In Section \ref{section: 4}, we will obtain large $s$ asymptotics for $\Psi$ and for the PIII solution $y$ using an asymptotic analysis of the model RH problem. In Section \ref{section: 5}, small $s$ asymptotics for $\Psi$ and $y$ are obtained.
The results from Section \ref{section: 2} together with the asymptotics from Section \ref{section: 4} and Section \ref{section: 5} will lead to a proof of Theorem \ref{P3theorem}.
In Section \ref{section: 6}, we use the asymptotics for the orthogonal polynomials obtained in Section \ref{section: 3} to prove Theorem \ref{Kthm}.
In Section \ref{section: 7}, we use a differential identity which enables us to derive asymptotics for the pLUE and pGUE partition functions from the asymptotics for the orthogonal polynomials, and to prove Theorem \ref{theorem: partition}.

\section{Properties of the model RH problem}\label{section: 2}
\subsection{Connection to the Painlev\'{e} III hierarchy: proof of Theorem \ref{P3theorem}}

We begin by introducing a new model RH problem for a function $\Phi$, obtained from the model problem for $\Psi$ defined in the introduction.
Let
\beq\label{def Phi}
\Phi(z,s) := \begin{pmatrix}1&0\\s^{-1}r(s^2)&1\end{pmatrix} s^{\frac{\sigma_3}{2}} e^{\frac{1}{4}i \pi \sigma_3} \Psi(s^2 z, s^2),
\eeq
with $r$ defined by (\ref{Psic}).
Using the RH conditions for $\Psi$, we find that $\Phi$ satisfies the RH problem given below.
\subsubsection*{RH problem for $\Phi$}
\begin{itemize}
\item[(a)] $\Phi:\mathbb C\setminus\Sigma\to\mathbb C^{2\times 2}$ analytic.
\item[(b)] $\Phi$ has the jump relations 
\begin{align}
&\Phi_+(z)=\Phi_-(z)\begin{pmatrix}1&0\\-e^{\pi i\alpha}&1\end{pmatrix}, &z\in \Sigma_1,\\
&\Phi_+(z)=\Phi_-(z)\begin{pmatrix}0&-1\\1&0\end{pmatrix}, &z\in \Sigma_2,\\
&\Phi_+(z)=\Phi_-(z)\begin{pmatrix}1&0\\-e^{-\pi i\alpha}&1\end{pmatrix}, &z\in \Sigma_3.
\end{align}
\item[(c)] As $z\to\infty$, $\Phi$ has the asymptotic behaviour
\begin{multline}\label{Phi c}
\Phi(z)=\begin{pmatrix}1&0\\v(s)&1\end{pmatrix} \left(I+\frac{1}{z}\begin{pmatrix}w(s)&v(s)\\h(s)&-w(s)\end{pmatrix} +\bigO(z^{-2})\right)\\
\times e^{\frac{1}{4}i \pi \sigma_3}z^{-\frac{1}{4}\sigma_3}Ne^{sz^{1/2}\sigma_3},
\end{multline}
where $N=\frac{1}{\sqrt{2}}(I+i\sigma_1)$ as before, and
\begin{align}
&w(s) = s^{-2} q(s^2), \\
&v(s) = s^{-1} r(s^2), \\
&h(s) = s^{-3} p(s^2).
\end{align}
\item[(d)] As $z\to 0$, there exists a matrix $\Phi_0(s)$, independent of $z$, such that $\Phi$ has the asymptotic behaviour
\begin{equation}
\label{Phi0}
\Phi(z)=\Phi_0(s)(I+\bigO(z))e^{\frac{(-1)^{k+1}}{z^{k}}\sigma_3}z^{\frac{\alpha}{2} \sigma_3}H_j,
\end{equation}
for $z$ in sector $\Omega_j$, where $H_1, H_2, H_3$ are given in the model problem for $\Psi$, see (\ref{H1})-(\ref{H3}).
\end{itemize}

\noindent Under the assumption that the model RH problem is solvable for $s>0$ (we will prove this assumption in Section \ref{section: solvability}), we are now in a position to prove Theorem \ref{P3theorem}. First we note that the limit in (\ref{qPhi}) exists by a simple application of \eqref{Psic}, and that the differentiability with respect to the parameter $s$ can be proved using general methods which involve re-writing the RH conditions as a singular integral equation.
We have
\beq
\label{qveqn}
y(s) = -2\frac{d}{ds}(r(s^2)) = -2\frac{d}{ds}(s v(s)).
\eeq
For the relation to the PIII hierarchy we use well-known methods of isomonodromic deformation theory. In particular we define the Lax matrices,
\beq\label{Lax AB}
A := \left(\partial_z \Phi \right)\Phi^{-1} \qquad \mathrm{and} \qquad B := \left(\partial_s \Phi \right)\Phi^{-1},
\eeq
and note that, due to the jump matrices of $\Phi$ being constant, the Lax matrices are meromorphic functions in $z$, and may only have isolated singularities. Using the asymptotic expansion of $\Phi$ near $z = \infty$ and $z=0$ one can easily show that $A$ and $B$ are Laurent polynomials of the form
\begin{align}
\label{ALax}
&A(z;s)=\sum_{j=0}^{k+1}A_j(s)z^{-j},\qquad A_0=\frac{s}{2}\sigma_-,\\
\label{BLax}
&B(z;s)=\begin{pmatrix}0&1\\z-u(s)&0\end{pmatrix},
\end{align}
where $u = 2w(s)-v'(s) - v(s)^2$ and $\sigma_-=\begin{pmatrix}0&0\\1&0\end{pmatrix}$. Computing the $(1,2)$-entry of the $z^{-1}$ term in $B$ using the asymptotic behaviour of $\Phi$ at $z = \infty$ together with the fact that $B$ is regular at zero implies 
\beq
\label{wexp}
v'(s)-v(s)^2 + 2w(s) = 0.
\eeq
We can also give explicit expressions for the elements of $A_j$ in terms of coefficients appearing in the asymptotic expansion of $\Phi$ near $z=\infty$. Although this will prove largely unnecessary, one important result obtained in this way is that
\begin{align}
&(A_1)_{12} =\frac{s}{2}, \\
&(A_2)_{12} = s w(s) -\frac{1}{2} s v(s)^2-\frac{1}{2}v(s) = -\frac{1}{2}\frac{d}{ds}\left(s v(s) \right)=y(s), 
\label{A212}
\end{align}
where in the second line we have used \eqref{wexp}. To complete the proof we now show that $u(s)$ is the function appearing in \eqref{P3def}. To this end, consider the compatibility condition,
\beq
\partial_z \partial_s \Phi = \partial_s \partial_z \Phi,
\eeq
which, upon substituting in the expression (\ref{Lax AB}) for the derivatives in terms of the Lax matrices, becomes,
\beq
\label{ZCeqn}
\partial_z B - \partial_s A + [B,A] = 0.
\eeq
If we parameterise $A$ as
\beq\label{para A}
A(z,s) = \begin{pmatrix}a(z,s)&b(z,s)\\c(z,s)&-a(z,s)\end{pmatrix},
\eeq
then \eqref{ZCeqn} is equivalent to three coupled ODEs, 
\begin{align}
\label{asol}
&a(z,s) = -\frac{1}{2} \partial_s b(z,s),\\
\label{csol}
&c(z,s) = (z-u) b(z,s) - \frac{1}{2} \partial_s^2 b(z,s),\\
\label{aceqn}
&\partial_s c(z,s) = 1 + 2(z-u(s))a(z,s).
\end{align}
The first two equations provide $a$ and $c$ in terms of $b$. Although the final equation can be used to obtain an equation for $b$, this is not the course we will follow here, however we will remark upon this possibility later. 

So far we have only used the asymptotic behaviour of $\Phi$ near zero to determine the order of the pole occurring at the origin in each Lax matrix. To complete the proof we now consider the behaviour of $\Phi$ near zero in more detail. Working in $\Omega_1$, see Figure \ref{modelcontour}, write \eqref{Phi0} as
\beq
\Phi(z)=\Phi_0(s)F(z)e^{\Delta(z) \sigma_3},
\eeq
where $F(z) = I + \bigO(z)$ as $z \rightarrow 0$, and $\Delta$ can be written in the general form
\begin{align}
&\label{Deldef}
\Delta(z) := -\Delta_0 \log z + \sum^k_{n=1}\frac{\Delta_n}{n z^n},\\
&\label{Delta2}\Delta_0=-\frac{\alpha}{2}, \ \Delta_k=(-1)^{k+1}k,\ \Delta_1=\ldots=\Delta_{k-1}=0.
\end{align}
Note that $\det A$ can be written as
\beq
\det A = \det\left(\partial_z \Phi \Phi^{-1}  \right) = -\Delta'(z)^2 \det \left(1 + \frac{1}{\Delta'(z)}\sigma_3 F'(z) F(z)^{-1} \right),
\eeq
and that as $z \rightarrow 0$,
\begin{align}
&\Delta'(z) = \bigO(z^{-k-1}),\\
&F'(z)F(z)^{-1} = \bigO(1).
\end{align}
Hence we obtain the estimate
\beq
\det A = -\Delta'(z)^2 + \bigO(z^{-k-1}),
\eeq
as $z\rightarrow 0$. Using \eqref{asol} and \eqref{csol} and rearranging gives
\beq
\label{beqn}
\frac{3}{4}(b')^2 + (z-u(s)) b^2 - \frac{1}{4}(b^2)'' - \left(\frac{d}{dz}\Delta(z)\right)^2 = \bigO(z^{-k-1}),\qquad z\to 0,
\eeq
where primes denote derivatives with respect to $s$. The important observation is that the left hand side
behaves as $\bigO(z^{-2k-2})$ as $z \rightarrow 0$ which after comparing with the right hand side yields $k+1$ equations.
Explicitly, in \eqref{beqn}, let
\beq
\label{btol}
b(z,s) = \frac{4}{(4z)^{k+1}}\sum^k_{j=0} \ell_{k-j}(s) (4z)^j,
\eeq
with $l_0 = s/2$. This gives
\begin{align}
\frac{4}{(4 z)^{2k+2}} \sum^k_{j,m =0} \bigg[& \ell_{k-j}\ell_{k-m} (4z)^{j+m+1} - \\
&\left[(\ell_{k-j} \ell_{k-m})'' - 3\ell_{k-j}' \ell_{k-m}' +4u \ell_{k-j} \ell_{k-m} \right] (4z)^{j+m}\bigg] \nn\\
& = \left(\frac{d}{dz}\Delta(z)\right)^2 + \bigO(z^{-k-1}),\qquad z\to 0. \nn
\end{align}
After some rearrangement and dropping terms of order $\bigO(z^{-k-1})$ the above equation can be written as
\begin{align}
&\frac{4}{(4 z)^{2k+2}} \sum^k_{p=0} \sum^p_{q=0} \left(\ell_{k-p+q+1}\ell_{k-q} - (\ell_{k-p+q} \ell_{k-q})'' + 3\ell_{k-p+q}' \ell_{k-q}' -4u \ell_{k-p+q} \ell_{k-q}
\right)(4z)^p \nn \\ 
&=\left(\frac{d}{dz}\Delta(z)\right)^2 + \bigO(z^{-k-1}),\qquad z\to 0.
\end{align}
Finally, substituting in \eqref{Deldef} and comparing coefficients yields \eqref{P3def}, where $\tau_p$ is a constant taking the form
\beq
\tau_p = 4^{2k-p+1} \sum^p_{q=0} \Delta_{k-p+q}\Delta_{k-q}.
\eeq
In our case, by \eqref{Delta2}, we have \eqref{tau}.
This shows that $u(s)$ appearing in \eqref{BLax} together with the quantities $l_j$ satisfy the system of equations \eqref{P3def}. The relation between $l_1(s)$ and $v(s)$ can be obtained from \eqref{qveqn} and \eqref{A212} together with the fact that $(A_2)_{12} = b_1 = \frac{1}{4}l_1$.

We now prove the statements appearing in Remark \ref{lnremark}. We parameterise the behaviour of $\Phi$ as $z \to \infty$ by
\begin{equation}
\Phi(z)=\begin{pmatrix}1&0\\v(s)&1\end{pmatrix} \left(I+\begin{pmatrix}m_1+m_4&m_2\\m_3&m_1-m_4\end{pmatrix}\right)e^{\frac{1}{4}i \pi \sigma_3}z^{-\frac{1}{4}\sigma_3}Ne^{sz^{1/2}\sigma_3},
\end{equation}
where $m_i$ are formal power series in $z^{-1}$ and we note that this can be written as
\beq
\label{PhiK}
\Phi(z)=K(z,s) e^{\frac{1}{4}i \pi \sigma_3}z^{-\frac{1}{4}\sigma_3}Ne^{sz^{1/2}\sigma_3}.
\eeq
Computing the Lax matrix $B$ using the above expression yields the relation
\beq
z [\sigma_-, K]= \partial_s K - 2v'(s) \sigma_- K - [\sigma_+,K],
\eeq
which upon substituting in $v(s) = r(s^2)/s$ gives \eqref{k3rule} and \eqref{k4rule}. Furthermore the fact that $\det K = 1$ yields the recursion relation \eqref{kdef}. Finally, computing the Lax matrix $A$ using \eqref{PhiK} gives
\beq
A = (\partial_z K) K^{-1} +\frac{s}{2} K \sigma_- K^{-1} + \frac{1}{2z}K\left(s\sigma_+ - \frac{1}{2}\sigma_3 \right)K^{-1},
\eeq
from which the $l_j$ may be extracted by projecting out the $1,2$ element of $A$.
\begin{remark}
Substituting \eqref{asol} and \eqref{csol} into \eqref{aceqn} yields
\beq
z \partial_s b(z,s) = \frac{1}{4} \left(\partial_s^3 b(z,s) + 4 u(s) \partial_s b(z,s) + 2 u'(s) b(z,s) \right) + \frac{1}{2}.
\eeq
We may compute $b(z,s)$ by substituting \eqref{btol} into the above equation,
which yields
\beq
\label{lrec}
\ell'_{j+1} = \ell_j''' + 4u \ell_j' + 2u'\ell_j, 
\eeq
for $j\geq 0$ subject to the conditions $\ell_0 = s/2$ and $\ell_{k+1} = 0$. By iterating the above equation one obtains each $\ell_j$ as differential-integro polynomials in $u$. The final condition $\ell_{k+1} = 0$ yields an ODE for $u$ of order $3k+1$, which is an equivalent way to characterise the $k$-th PIII equation. The reason we prefer \eqref{P3def} is that it yields an ODE for $\ell_1$ of order $2k$ which coincides with the third Painlev\'{e} equation for $k=1$.

Finally, it is interesting to note that \eqref{lrec} is precisely the recursion equation for the Lenard differential polynomials appearing in the definition of the Painlev\'{e} I and II hierarchies. The difference between the Painlev\'{e} I, II, and III hierarchies lies in the initial condition; in the PI case it is $\ell_0 = -4u$, for PII $\ell_0=1/2$, and for PIII $\ell_0 = s/2$. This difference in initial condition has a large effect --- in the case of PI and PII the left-hand-side of \eqref{lrec} is always a total derivative and therefore the recursion relation can be integrated at each order; this is not the case for the PIII hierarchy.
\end{remark}

Recall the definition of $\psi_1,\psi_2$ in (\ref{psidef}).

\begin{corollary}
  The $\psi$-functions associated to $\Psi$, defined in
  \eqref{psidef}, satisfy the ODE system \beq \label{diff psi}
\partial_z\begin{pmatrix}\psi_1\\ \psi_2 \end{pmatrix} = \frac{1}{s}\begin{pmatrix} \widehat{a} - \frac{1}{2} \widehat{b}s^{-\frac{1}{2}}U(s) & -is^{-\frac{1}{2}} \widehat{b}\\
  i s^\frac{1}{2}(\widehat{c}+\widehat{a}s^{-\frac{1}{2}}U(s)- \frac{1}{4}
  \widehat{b}s^{-1} U(s)^2) & -\widehat{a}+ \frac{1}{2}
  \widehat{b}s^{-\frac{1}{2}}U(s) \end{pmatrix} \begin{pmatrix}\psi_1\\
  \psi_2 \end{pmatrix}, \eeq where $U(s) = \int^{s^\frac{1}{2}}_0
y(s') ds'$, $\widehat{a} = a(s^{-1}z,s^\frac{1}{2})$, $\widehat{b} =
b(s^{-1}z,s^\frac{1}{2})$ and $\widehat{c} = c(s^{-1}z,s^\frac{1}{2})$,
and are characterised as the unique pair of solutions to \eqref{diff
  psi} with the asymptotic behaviour \eqref{psi as infty}-\eqref{psi
  as 0}.
\end{corollary}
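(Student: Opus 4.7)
The plan is to derive the ODE system (\ref{diff psi}) by pulling back the Lax equation $\partial_z\Phi = A\Phi$ (from the proof of Theorem \ref{P3theorem}) through the transformation (\ref{def Phi}), and then read off the asymptotic behaviours from the first column of the RH problem for $\Psi$.

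First I would invert (\ref{def Phi}) to obtain $\Psi(z,s) = T(s^{1/2})^{-1}\Phi(z/s,s^{1/2})$, where $T(s_0) = \begin{pmatrix}1&0\\s_0^{-1}r(s_0^2)&1\end{pmatrix}s_0^{\sigma_3/2}e^{\pi i \sigma_3/4}$. Differentiating in $z$ and using $\partial_z\Phi = A\Phi$ with $A$ parametrised as in (\ref{ALax})--(\ref{para A}) gives
\[
\partial_z\Psi(z,s) = s^{-1}\,T(s^{1/2})^{-1}A(z/s,s^{1/2})T(s^{1/2})\,\Psi(z,s),
\]
and the first column $(\psi_1,\psi_2)^{T}$ inherits exactly this linear system in $z$.

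The next step is to compute $T^{-1}AT$ explicitly. Writing $\rho = s^{-1/2}r(s)$, conjugation by the unimodular factor $\begin{pmatrix}1&0\\\rho&1\end{pmatrix}$ sends $A$ to $\begin{pmatrix}a+b\rho&b\\c-2a\rho-b\rho^2&-(a+b\rho)\end{pmatrix}$, and the remaining diagonal factor $s^{\sigma_3/4}e^{\pi i\sigma_3/4}$ rescales the $(1,2)$ and $(2,1)$ entries by $-is^{-1/2}$ and $is^{1/2}$ respectively. Setting $\widehat a, \widehat b, \widehat c$ equal to the evaluations of $a,b,c$ at $(z/s, s^{1/2})$ and identifying $\rho = -\tfrac{1}{2}s^{-1/2}U(s)$ via the relations $y(s) = -2\tfrac{d}{ds}r(s^2)$ and $U(s) = \int_0^{s^{1/2}} y(\tau)\,d\tau$ then reproduces the matrix displayed in (\ref{diff psi}).

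For the asymptotic behaviour, (\ref{psi as infty}) is the first column of (\ref{Psic}) restricted to $z\in\Omega_1$, and (\ref{psi as 0}) is the first column of (\ref{Psi0}) with $H_1=I$. Uniqueness holds because any pair satisfying (\ref{diff psi}) together with these two asymptotic conditions reconstructs a first column of a solution to the model RH problem for $\Psi$, whose unique solvability is established elsewhere in Section \ref{section: 2}; equivalently, within the two-dimensional solution space of the ODE the asymptotic at $\infty$ fixes the exponentially dominant component in $\Omega_1$, while the asymptotic at $0$ pins down the remaining Stokes multiplier. The main obstacle in this argument is the bookkeeping in the conjugation step---tracking signs, powers of $s$ and phases coming from $s_0^{\sigma_3/2}e^{\pi i\sigma_3/4}$---and in particular the careful reconciliation of $r(s)$ with $-\tfrac{1}{2}U(s)$, which requires verifying that the constant of integration $r(0)$ either cancels in the relevant combinations or is absorbed into the convention adopted in the statement.
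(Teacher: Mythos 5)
Your proof follows exactly the paper's (very terse) argument: substitute \eqref{def Phi} into the Lax equation $\partial_z\Phi = A\Phi$, conjugate $A$ by the lower-triangular and diagonal prefactors, and read the asymptotic conditions off conditions (c)--(d) of the RH problem for $\Psi$, with uniqueness from the standard theory of linear ODE systems. Your caveat about the constant of integration is a genuine observation and not a shortcoming of your argument: by \eqref{integral r} one has $U(s) = \int_0^{\sqrt s}y(\tau)\,d\tau = -2\bigl(r(s)-r(0)\bigr)$ with $r(0)=\tfrac18(1-4\alpha^2)$, so the identification $s^{-1/2}r(s) = -\tfrac12 s^{-1/2}U(s)$ implicit in the displayed coefficient matrix is off by the nonzero constant $r(0)$ for generic $\alpha$; this constant does not cancel, and it points to a small inconsistency in the paper's stated formula (the matrix entries should carry $r(s)$ rather than $-\tfrac12 U(s)$, or equivalently $U$ should be understood without the integration constant) rather than to any gap in your derivation.
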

\begin{proof}
  Substituting \eqref{def Phi} in the first equation of \eqref{Lax
    AB}, and using \eqref{para A}, we obtain \eqref{diff psi}. The
  asymptotic conditions are a consequence of conditions (c)-(d) of the
  RH problem for $\Psi$, and the fact that these determine the
  solution uniquely follows from the general theory for linear systems
  of ODEs.
\end{proof}
\subsection{Proof of existence of $\Psi$}\label{section: solvability}

In order to prove the solvability of the RH problem for $\Psi$, we consider its homogeneous version. This consists of the RH conditions (a), (b), and (d), and has condition (c) replaced by
\begin{equation}\label{Psi0c}
  \Psi(z)=\bigO(z^{-1}) \times z^{-\frac{1}{4}\sigma_3}Ne^{z^{1/2}\sigma_3}, \qquad \mbox{ as $z\to\infty$.}
\end{equation}

\begin{lemma} {\bf (vanishing lemma)}
Let $\Psi_0$ satisfy the RH conditions (a), (b), and (d) in the RH problem for $\Psi$, and the homogeneous asymptotic condition \eqref{Psi0c}. Then, $\Psi_0(z)\equiv 0$.
\end{lemma}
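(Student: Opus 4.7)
The plan is to apply the Schwarz-reflection and Liouville technique that is by now standard for vanishing lemmas of Riemann--Hilbert problems of Bessel/Airy type. The construction hinges on two algebraic facts: the middle jump is unitary, i.e.\ $J_2 J_2^{*}=I$, and the lateral jumps are complex conjugates of one another, $J_3=\overline{J_1}$, as can be verified from the explicit formulas in condition (b). Consequently we may (and will) choose the free contours $\Sigma_1$ and $\Sigma_3$ to be mirror images in the real axis, so that $\Sigma$ is invariant under $z\mapsto\bar z$.

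First I would undress the asymptotic behaviour at infinity by setting
\[
Q(z):=\Psi_0(z)\,e^{-z^{1/2}\sigma_3}N^{-1}z^{\sigma_3/4},
\]
so that $Q(z)=\mathcal{O}(z^{-1})$ as $z\to\infty$ by \eqref{Psi0c}. The matrix $Q$ inherits jumps from $\Psi_0$, but after conjugation by the undressing factor these remain tractable. I would then form the Hermitian-type auxiliary function
\[
H(z):=Q(z)\,Q^{*}(\bar z),\qquad Q^{*}(w):=\overline{Q(w)}^{T},
\]
which, by Schwarz reflection, is holomorphic in $z$ wherever both factors are. A direct computation of the jumps of $H$ on $\Sigma_1,\Sigma_2,\Sigma_3$, using $J_3=\overline{J_1}$ on the lateral rays and the unitarity of $J_2$ on the negative axis, shows that all three jumps drop out, so that $H$ extends to a single-valued holomorphic function on $\mathbb{C}\setminus\{0\}$.

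The delicate point is the behaviour at $z=0$, where $\Psi_0$ carries the essential singularity $e^{-(-s/z)^k\sigma_3}$. To handle this I would use the refined factorization \eqref{Psi0better},
\[
\Psi_0(z)=\widehat\Psi_0(z,s)\begin{pmatrix}1&f_2(z,s)\\0&1\end{pmatrix}e^{-(-s/z)^k\sigma_3}z^{\alpha\sigma_3/2}H_j,
\]
in which $\widehat\Psi_0$ is analytic at $z=0$. Rewriting $H$ in terms of $\widehat\Psi_0$ and using the identity $(-s/\bar z)^k=\overline{(-s/z)^k}$, one checks that for $z\in\Omega_j$ and $\bar z\in\Omega_{j'}$ the exponentials in $Q(z)$ and $Q^{*}(\bar z)$ combine, through the products $H_jH_{j'}^{*}$, to cancel sector by sector. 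What remains is at worst an algebraic singularity of type $|z|^{\pm\alpha}$, which is removable because $\alpha>-1$.

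Hence $H$ is entire with $H(z)=\mathcal{O}(z^{-2})$ at infinity, so by Liouville $H\equiv 0$. Reading off the diagonal entries of $H$ on rays on which $Q^{*}(\bar z)$ reduces to the Hermitian conjugate of the boundary value of $Q(z)$ yields identities of the form $|Q_{i1}(x)|^{2}+|Q_{i2}(x)|^{2}=0$, forcing $Q$ (and therefore $\Psi_0$) to vanish by analytic continuation. I expect the main technical obstacle to lie in the sector-by-sector verification that the essential singularities at $0$ indeed cancel, which is precisely the place where the factorization \eqref{Psi0better} with its analytic $\widehat\Psi_0$ is indispensable; the remaining steps follow the by-now-standard RH template.
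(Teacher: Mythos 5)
Your overall strategy -- Schwarz reflection, cancellation of jumps, and Liouville combined with a positivity argument -- is indeed the template that the paper relies on; the paper's own proof simply cites the vanishing lemma of \cite{XDZ} (equations (2.20), (2.24), (2.26) there) and observes that the argument carries over from $k=1$ to general $k$ once one re-verifies the single crucial fact that the auxiliary function $H(z)$ remains $\bigO(1)$ as $z\to 0$. That boundedness near the origin is precisely where your proposal has a gap.

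The claim that ``the exponentials in $Q(z)$ and $Q^{*}(\bar z)$ combine, through the products $H_jH_{j'}^{*}$, to cancel sector by sector'' is not correct as stated. Since $s$ is real and $k$ is an integer, one has $\overline{(-s/\bar z)^k}=(-s/z)^k$, and because $\sigma_3$ is real diagonal,
\[
\bigl(e^{-(-s/\bar z)^k\sigma_3}\bigr)^{*}=e^{-(-s/z)^k\sigma_3},
\]
i.e.\ the Hermitian conjugate reproduces the \emph{same} exponential, not its inverse. Consequently, writing out $\Psi_0(z)\Psi_0^{*}(\bar z)$ using condition (d), the singular factor appearing in the middle is
\[
e^{-(-s/z)^k\sigma_3}\,z^{\alpha\sigma_3/2}\,H_jH_{j'}^{*}\,z^{\alpha\sigma_3/2}\,e^{-(-s/z)^k\sigma_3},
\]
and the two exponentials compound into $e^{-2(-s/z)^k\sigma_3}$ (conjugated by the remaining factors) rather than cancelling. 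For instance, with $z\to 0$ in $\Omega_1$ one has $H_1H_1^{*}=I$ and the middle factor is exactly $z^{\alpha\sigma_3}e^{-2(-s/z)^k\sigma_3}$, which is unbounded. The undressing by $D(z)=e^{-z^{1/2}\sigma_3}N^{-1}z^{\sigma_3/4}$ cannot rescue this, since $D(z)D^{*}(\bar z)$ grows only algebraically near $0$ and cannot offset an essential singularity. So the function $H$ in your construction is not $\bigO(1)$ at the origin, and the Liouville step is not available. You also need the additional global constraint $\det\Psi_0\equiv 0$ (which follows from the homogeneous decay and analyticity of the determinant) before the final positivity argument closes; this is not mentioned in your sketch. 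In short, the scaffolding is right, but the one step that the paper singles out as requiring verification for general $k$ -- the boundedness of $H$ at $z=0$ -- is asserted on the basis of a cancellation that does not actually occur. To repair the argument you would need to reconstruct the specific definition of $H$ from \cite{XDZ} (which is designed so that only the recessive parts of $\Psi_0$ near $0$ enter), rather than using the naive $Q(z)Q^{*}(\bar z)$.
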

\begin{proof}
This was proved in \cite{XDZ} for $k=1$. The proof from \cite{XDZ} is easily extended to the case $k\in\mathbb N$: equations (2.20) and (2.24) in \cite{XDZ} have to be modified in the natural way, but with 
the function $H(z)$ as defined in equation (2.26) in \cite{XDZ}, it is still true that $H(z)=\bigO(1)$ as $z\to 0$. The remaining part of the proof relies on this fact, but not on a more precise description of the behaviour near $0$, and the value of $k$ is unimportant here.
\end{proof}

For a general class of RH problems, the vanishing lemma (i.e. the fact
that the homogeneous RH problem has only the trivial solution) implies
the solvability of the non-homogeneous RH problem. This follows from
the description of RH problems in terms of singular integral
equations. The existence of a solution of the RH problem is then
equivalent to the bijectivity of a certain operator. This operator is
a Fredholm operator of index zero and because of this, bijectivity
follows from the fact that the kernel of the operator is
trivial. Triviality of the kernel is equivalent to the vanishing
lemma. This procedure works for a fairly general class of RH problems,
as long as the RH problem is equivalent to a RH problem without
singular points. The RH problem for $\Psi$ has a singular point at
$0$, but can be transformed to one without singular points using the
representation (\ref{Psi0better}).  

Define
\begin{equation}
\widehat\Psi(z)=\begin{cases}
\Psi(z),&\mbox{ for $|z|>1$,}\\
\widehat\Psi_0(z),&\mbox{ for $|z|<1$}.
\end{cases}
\end{equation}
Then, $\widehat\Psi$ has no jumps inside the unit circle, but it has a new jump on the unit circle, which we choose with clockwise orientation, given by
\begin{equation}
\widehat\Psi_-^{-1}(z)\widehat\Psi_+(z)=\Psi_0^{-1}(z)\Psi(z)=\begin{pmatrix}1&f_2(z)\\0&1
\end{pmatrix}e^{\left(-\frac{s}{z}\right)^k}z^{\frac{\alpha}{2}\sigma_3}
H_j,
\end{equation}
on the part of the unit circle in $\Omega_j$. The RH problem for $\widehat\Psi$ has no singular points, and the vanishing lemma procedure thus applies to this RH problem, and to the RH problem for $\Psi$.
It follows that the RH problem for $\Psi$ is solvable. For more details on the vanishing lemma and how it implies solvability of the non-homogeneous RH problem, we refer to \cite{DKMVZ2, FIKN, FokasMuganZhou, FokasZhou, IKO}.

\section{Asymptotic analysis of the RH problem for orthogonal polynomials}\label{section: 3}
\label{DZsteep}
The steepest descent analysis we will perform follows closely the
calculations in \cite{Vanlessen2} (corresponding to the case $k=0$)
and in \cite{XDZ} (corresponding to $k=1$) which we will refer to
extensively. The main difference is in the extra condition imposed at
$z=0$. This will enforce us to construct a local parametrix near $0$
which is not built out of Bessel functions, but in terms of our model
RH problem associated to higher order PIII equations instead.
\subsection{Equilibrium measure preliminaries}
As stated in the introduction, a central role in the steepest descent
analysis is played by the equilibrium measure $d\mu(x) = \psi(x)
dx$. For our purposes, we only need that the measure $\mu$ is
characterised by the conditions (\ref{var eq})-(\ref{var ineq}). Under
our assumptions on $V$, namely the fact that it is such that the equilibrium measure $\mu$ is supported on a
single interval $[0,b]$ and that it is regular, by (\ref{psi in terms of h}), the equilibrium density has the form\beq \psi(x) = \frac{1}{2\pi i}h(x) R_+(x) \chi_{(0,b]}(x),
\eeq where \beq R(z) = \left(\frac{z-b}{z}\right)^{\frac{1}{2}}.  \eeq
The right endpoint of the support $b > 0$ is a constant depending on
the potential $V$, $h(x)$ is a real analytic function in $x$ and
$\chi_{[0,b]}$ is the indicator function for the set $[0,b]$. The
principal branch of the square root is taken in $R$, so that $R$ is
analytic in $\mathbb{C}\setminus [0,b]$ and positive for
$z>b$. Furthermore, we have $h(x) > 0$ for $x\in [0,b]$.

In the steepest-descent analysis, the following functions will also
prove useful. Firstly we define the ``$g$-function'' \beq g(z) := \int
\log(z-x) d\mu(x), \eeq where the principal branch of the logarithm is
taken, meaning $g$ is analytic on $\mathbb{C} \setminus (-\infty,b]$.
Under the one-cut regularity assumption which we imposed on $V$, the
$g$-function has a number of properties we will make use of, in
particular
\begin{align}
&g_+(x) + g_-(x) - V(x)-\ell = 0 \qquad x\in[0,b] \label{g1},\\
&g_+(x) + g_-(x) - V(x)-\ell < 0 \qquad x\in (b,+\infty),\label{g2} \\
&g_+(x) - g_-(x) = 2\pi i \int^{b}_x \psi(x) dx.\label{g3}
\end{align}
Let us also define \beq\label{xi} \xi(z) = -\frac{ 1}{2}
\int^z_{b}h(s) R(s) ds \qquad \text{for $z  \in \mathbb{C} \setminus
(-\infty, b]$,} \eeq where the integration path does not cross the real
axis.  By (\ref{g3}) we have, \beq\label{g xi 1} g_+(x) - g_-(x) =
2\xi_+(x), \eeq and using (\ref{g1}), we obtain the identity
\beq\label{g xi 2} 2\xi(z)=2g(z)-V(z)-\ell.  \eeq

\subsection{The RH problem for orthogonal polynomials}
An effective way to characterise orthogonal polynomials with respect
to a weight $w$ is via the well known RH problem due to
Fokas-Its-Kitaev \cite{FokasItsKitaev}. In our case the weight $w(x)$
takes the form \beq w(x) = x^\alpha\exp\left[-n \left(V(x) +
    \left(\frac{t}{x}\right)^k\right)\right], \eeq and the relevant RH
problem is specified below.

\subsubsection*{RH problem for $Y$}
\begin{itemize}
\item[(a)] $Y: \mathbb{C}\setminus [0,+\infty) \rightarrow
  \mathbb{C}^{2 \times 2} $ is analytic.
\item[(b)] The limits of $Y$ as $z$ approaches $(0,+\infty)$ from above and below exist, are continuous on $(0,+\infty)$ and are denoted by $Y_+$ and $Y_-$ respectively. Furthermore they are related by
\beq
Y_+(x) = Y_-(x) \begin{pmatrix}1&w(x)\\0&1\end{pmatrix},\qquad x\in (0,+\infty).
\eeq
\item[(c)] $Y(z) = (I+\bigO(z^{-1}))z^{n\sigma_3}$ as $z \rightarrow \infty$.
\item[(d)] $Y(z) = \bigO\begin{pmatrix}1&1\\1&1\end{pmatrix}$ as $z \rightarrow 0$.
\end{itemize}
This RH problem has a unique solution,
\beq
\label{Ysol}
Y(z) = \begin{pmatrix}p_n(z)&q_n(z)\\ -\frac{2\pi i}{h_{n-1}} p_{n-1}(z)& -\frac{2\pi i}{h_{n-1}} q_{n-1}(z)\end{pmatrix},
\eeq
where $p_j$ is the degree $j$ monic orthogonal polynomial with respect to the weight $w(x)$, $h_j$ is the norm of $p_j$, as in (\ref{ortho p}), and
\beq
q_j(z) := \frac{1}{2\pi i} \int^\infty_0 \frac{p_j(x) w(x)}{x-z} dx.
\eeq

\subsection{First transformation $Y \mapsto T$}
Define
\beq
T(z) = e^{-\frac{n\ell}{2}\sigma_3} Y(z) e^{-n(g(x) - \frac{\ell}{2})\sigma_3} e^{-\frac{n}{2}\left(\frac{t}{z}\right)^k \sigma_3}.
\eeq
The matrix $T$ satisfies the RH problem below.

\subsubsection*{RH problem for $T$}
\begin{itemize}
\item[(a)] $T: \mathbb{C}\setminus [0,+\infty) \rightarrow \mathbb{C}^{2 \times 2} $ is analytic.
\item[(b)] $T$ has the jump relations
\begin{align}
&T_+(x) = T_-(x)\begin{pmatrix}e^{-2n\xi_+}&x^\alpha\\0&e^{-2n\xi_-}\end{pmatrix} , &x\in (0,b),\\
&T_+(x) = T_-(x)\begin{pmatrix}1&x^\alpha e^{n \xi(x)}\\0&1\end{pmatrix} , &x\in (b,\infty).
\end{align}
\item[(c)] $T(z) = I+\bigO(z^{-1})$ as $z \rightarrow \infty$.
\item[(d)] $T(z) = \bigO(1)  e^{-\frac{n}{2} \left(\frac{t}{z}\right)^k \sigma_3}$ as $z \rightarrow 0$.
\end{itemize}

\subsection{Second transformation $T \mapsto S$}
\begin{figure}[t]
\centering 
\includegraphics[scale=0.7]{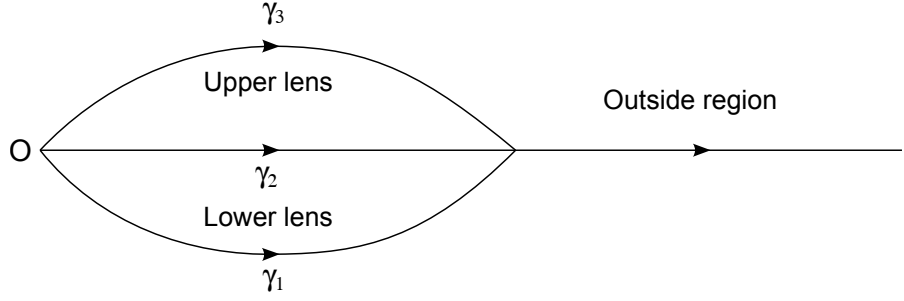}
\caption{The jump contour in the RH problem for $S$.}
\label{lens}
\end{figure}
We now perform the standard technique of opening the lens. This involves introducing a contour $\Sigma_S = (b,\infty) \cup_{i=1}^3 \gamma_i$ as shown in Figure \ref{lens}. As can be seen in the figure the contour divides the complex plane into three disjoint regions. We refer to the finite regions in the upper and lower complex half plane as the upper and lower lens respectively. We then define
\beq
S(z) = \left\{
\begin{array}{lr}
T(z),  & \mbox{for } z \mbox{ outside the lens,}\\
T(z) \begin{pmatrix}1&0\\-z^{-\alpha}e^{-2n\xi(z)}&1\end{pmatrix}, & \mbox{for } z \mbox{ in the upper lens,}\\
T(z) \begin{pmatrix}1&0\\z^{-\alpha}e^{-2n\xi(z)}&1\end{pmatrix}, & \mbox{for } z \mbox{ in the lower lens.}\\
\end{array}
\right.
\eeq
It is then a simple matter to verify that $S$ satisfies the following conditions:
\subsubsection*{RH problem for $S$}
\begin{itemize}
\item[(a)] $S: \mathbb{C}\setminus \Sigma_S \rightarrow \mathbb{C}^{2 \times 2} $ is analytic.
\item[(b)] $S$ has the jump relations
\begin{align}
&S_+(z) = S_-(z)\begin{pmatrix}1&0\\z^{-\alpha} e^{-2n\xi(z)}&1\end{pmatrix} , &z\in \gamma_1 \cup \gamma_3,\\
&S_+(z) = S_-(z)\begin{pmatrix}0&z^{\alpha}\\-z^{-\alpha}&0\end{pmatrix} , &z\in \gamma_2,\\
&S_+(z) = S_-(z)\begin{pmatrix}1&z^\alpha e^{2n \xi(z)}\\0&1\end{pmatrix} , &z\in (b,\infty).
\end{align}
\item[(c)] $S(z) = I+\bigO(z^{-1})$ as $z \rightarrow \infty$.
\item[(d)] The asymptotic behaviour of $S$ near zero is
\beq
\label{Snear0}
S(z) = \bigO(1)  e^{-\frac{n}{2} \left(\frac{t}{z}\right)^k \sigma_3}\times \left\{
\begin{array}{lr}
I,  & \mbox{for } z \mbox{ outside the lens,}\\
\begin{pmatrix}1&0\\-z^{-\alpha}e^{-2n\xi(z)}&1\end{pmatrix}, & \mbox{for } z \mbox{ in the upper lens,}\\
\begin{pmatrix}1&0\\z^{-\alpha}e^{-2n\xi(z)}&1\end{pmatrix}, & \mbox{for } z \mbox{ in the lower lens.}\\
\end{array}
\right.
\eeq
\end{itemize}

\begin{remark}
\label{globalP}
It is important to note that on the contour $\gamma_1\cup \gamma_3\cup (b,\infty)$ away from the origin and $b$, the jump matrices tend to the identity matrix exponentially as $n \rightarrow \infty$, $t\to 0$. This suggests we may approximate $S$ by a global parametrix throughout the complex plane minus small neighbourhoods around the origin and $b$ in which it will be necessary to construct local parametrices. 
\end{remark}

\subsection{The global parametrix: $P^{(\infty)}$}
Following on from Remark \ref{globalP}, a first approximation to $S$ can be obtained by ignoring the entries of the jump matrices which are exponentially suppressed as $n \rightarrow \infty$. We define the global parametrix $P^{(\infty)}(z)$ as a solution to the RH problem obtained in this manner.
\subsubsection*{RH problem for $P^{(\infty)}$}
\begin{itemize}
\item[(a)] $P^{(\infty)}: \mathbb{C} \setminus [0,b] \rightarrow \mathbb{C}^{2 \times 2}$ is analytic.
\item[(b)] $P^{(\infty)}$ has the jump relations
\begin{align}
&P^{(\infty)}_+(z) = P^{(\infty)}_-(z)\begin{pmatrix}0&z^{\alpha}\\-z^{-\alpha}&0\end{pmatrix} , &z\in (0,b).
\end{align}
\item[(c)] $P^{(\infty)}(z) = I+\bigO(z^{-1})\qquad$ as $z \rightarrow \infty$.
\end{itemize}
Note that this RH problem does not define a solution uniquely, since nothing is imposed on the behaviour of $P^{(\infty)}$ near $0$ and $b$.
A solution to the above RH problem can be obtained by defining
\beq
P^{(\infty)}(z) = 2^{-\alpha \sigma_3} b^{\frac{\alpha \sigma_3}{2}} N^{-1} \left(\frac{z-b}{z}\right)^{-\frac{\sigma_3}{4}} N D\left(b^{-1} z\right)^{-\sigma_3} b^{-\frac{\alpha \sigma_3}{2}},
\eeq
where $N = 2^{-\frac{1}{2}}(I + i \sigma_1)$, and where $D$ is the Szeg\H{o} function 
\beq
D(z) := \frac{z^{\frac{\alpha}{2}}}{\varphi(2z-1)^{\frac{\alpha}{2}}} \mbox{,} \qquad  \varphi(z) := z + \sqrt{z^2-1}, 
\eeq
in which the branch is chosen such that $\varphi(z) \sim 2z$ as $z \rightarrow \infty$ and with cut on $[-1,1]$. 

\subsection{Local parametrix near $b$}
Given that the only difference between the RH problem for $S$ appearing here and that of \cite{Vanlessen2} is the local condition (d) \eqref{Snear0} near the hard edge $0$, we see that the local parametrix required in a small disk $U^{(b)}$ around the soft edge $b$ will satisfy exactly the same RH problem (up to a rescaling by $b$) appearing in \cite[Section 3.6]{Vanlessen2}. As is shown in \cite[Section 3.6]{Vanlessen2}, and is by now very familiar, the local parametrix near the soft-edge at $b$ can be built using the Airy model problem, and it satisfies the following RH problem.

\subsubsection*{RH problem for $P^{(b)}$}
\begin{itemize}
\item[(a)] $P^{(b)}: U^{(b)}\setminus \Sigma_S \rightarrow \mathbb{C}^{2 \times 2}$ is analytic.
\item[(b)] $P^{(b)}$ has the same jumps as $S$ restricted to the contours $U^{(b)} \cap \Sigma_S$.
\item[(c)] $P^{(b)}(z) = P^{(\infty)}(z)(I+\bigO(n^{-1}))$ as $n \rightarrow \infty$ uniformly for $z \in \partial U^{(b)}$.
\item[(d)] $S(z)P^{(b)}(z)^{-1}$ is analytic at $z=b$.
\end{itemize}

\subsection{Local parametrix near the origin}
We also require a local approximation to $S$ in a neighbourhood of the origin. This will take the form of a local parametrix $P^{(0)}$ valid within a disk $U^{(0)}$ of fixed size, characterised by the following RH conditions.
\subsubsection*{RH problem for $P^{(0)}$}
\begin{itemize}
\item[(a)] $P^{(0)}(z): U^{(0)}\setminus \Sigma_S \rightarrow \mathbb{C}^{2 \times 2}$ is analytic.
\item[(b)] $P^{(0)}$ has the same jumps as $S$ restricted to the contours $U^{(0)} \cap \Sigma_S$.
\item[(c)] $P^{(0)}(z) = P^{(\infty)}(z)(I+\bigO(n^{-1}))$ uniformly for $z \in \partial U^{(0)}$, in the double scaling limit where $n \rightarrow \infty$ and simultaneously $t\to 0$ in such a way that $2^{-1/k}c_1n^{\frac{2k+1}{k}}t\to s\in (0,+\infty)$.
\item[(d)] As $z \rightarrow 0$,
\beq
P^{(0)}(z) = \bigO(1) e^{-\frac{n}{2} \left(\frac{t}{z}\right)^k\sigma_3} \left\{
\begin{array}{lr}
1  & \mbox{for } z \mbox{ outside the lens}\\
\begin{pmatrix}1&0\\-z^{-\alpha}e^{-2n\xi(z)}&1\end{pmatrix} & \mbox{for } z \mbox{ inside the upper lens}\\
\begin{pmatrix}1&0\\z^{-\alpha}e^{-2n\xi(z)}&1\end{pmatrix} & \mbox{for } z \mbox{ inside the lower lens}\\
\end{array}
\right.
\eeq
\end{itemize}
A convenient way to construct $P^{(0)}$ is to first transform the problem to one with constant jump matrices. This can be achieved by the transformation
\beq\label{def P0hat}
\widehat{P}^{(0)}(z) := P^{(0)}(z) (-z)^{\frac{1}{2} \alpha \sigma_3} e^{n \xi(z) \sigma_3}.
\eeq
Applying this transformation, we obtain a RH problem for $\widehat{P}^{(0)}$.
\subsubsection*{RH problem for $\widehat{P}^{(0)}$} \label{Phatprob}
\begin{itemize}
\item[(a)] $\widehat{P}^{(0)}: U^{(0)} \setminus \Sigma_S \rightarrow \mathbb{C}^{2 \times 2}$ is analytic. This follows from the properties of $P^{(0)}$ together with the fact that $\xi_+(x) - \xi_-(x) = 2 \pi i$ for $x \in (-\infty, 0)$.
\item[(b)] $\widehat{P}^{(0)}$ has the jump relations,
\begin{align}
&\widehat{P}_+^{(0)}(z)=\widehat{P}_-^{(0)}(z)\begin{pmatrix}0&1\\-1&0\end{pmatrix}, &z\in \gamma_2,\\
&\widehat{P}_+^{(0)}(z)=\widehat{P}_-^{(0)}(z)\begin{pmatrix}1&0\\e^{\pi i\alpha}&1\end{pmatrix}, &z\in  \gamma_1,\\
&\widehat{P}_+^{(0)}(z)=\widehat{P}_-^{(0)}(z)\begin{pmatrix}1&0\\e^{-\pi i\alpha}&1\end{pmatrix}, &z\in  \gamma_3.
\end{align}
\item[(c)] $\widehat{P}^{(0)}(z) = P^{(\infty)}(z)(I+\bigO(n^{-1}))(-z)^{\frac{1}{2} \alpha \sigma_3} e^{n \xi(z) \sigma_3}$ uniformly for $z \in \partial U^{(0)}$, in the double scaling limit where $n \rightarrow \infty$ and simultaneously $t\to 0$ in such a way that $2^{-1/k}c_1n^{\frac{2k+1}{k}}t\to s\in (0,+\infty)$.
\item[(d)] As $z \rightarrow 0$,
\beq
\widehat{P}^{(0)}(z) = \bigO(1) e^{-\frac{n}{2} \left(\frac{t}{z}\right)^k\sigma_3} (-z)^{\frac{1}{2} \alpha \sigma_3} \left\{
\begin{array}{lr}
I,  & \mbox{for } z \mbox{ outside the lens,}\\
\begin{pmatrix}1&0\\-e^{-i \pi \alpha}&1\end{pmatrix}, & \mbox{for } z \mbox{ inside the upper lens,}\\
\begin{pmatrix}1&0\\e^{i \pi \alpha}&1\end{pmatrix}, & \mbox{for } z \mbox{ inside the lower lens.}\\
\end{array}
\right.
\eeq\end{itemize}

We now make the connection to the model problem for $\Psi(\zeta,s)$ by asserting that the function $\widehat{P}^{(0)}$ can be expressed as
\beq
\label{P0exp}
\widehat{P}^{(0)}(z,t) = E(z) \Psi\left(n^2 f(z), 2^{-\frac{1}{k}} c_1 n^\frac{2k+1}{k} t\right) \sigma_3 ,
\eeq
where $c_1 =bh(0)^2>0$ and $E(z)$ and $f(z)$ are functions which we will now define. We take $f$ to be
\beq
f(z) = \frac{1}{4}\left(\int_0^zh(s)R(s)ds\right)^2.
\eeq
Then $f$ is a conformal map from a neighbourhood of $0$ to a neighbourhood of $0$, with $f'(0)=-c_1$, and we have $e^{n f(z)^{\frac{1}{2}}} = (-1)^n e^{n\xi(z)}$. This follows from the definition (\ref{xi}) of $\xi$ together with the fact that $\int^{b}_0 \psi(x) dx = 1$.
Let \begin{equation}\label{def snt}
s_{n,t}=2^{-\frac{1}{k}} c_1 n^\frac{2k+1}{k} t,
\end{equation}
and assume that $t$ depends on $n$ in such a way that $s_{n,t}$ lies in a compact subset of $(0,+\infty)$ for $n$ sufficiently large.

It can be checked in a straightforward way that \eqref{P0exp} satisfies conditions (a), (b) and (d) of the RH problem for $\widehat P_0$. To satisfy property (c), we take
\beq
E(z) = (-1)^n P^{(\infty)}(z) \sigma_3 (-z)^{\frac{1}{2} \alpha \sigma_3}N^{-1} \left(n^2 f(z)\right)^\frac{\sigma_3}{4}.
\eeq
Using the jump relation for $P^{(\infty)}$ and choosing the branch cuts for $\left(n^2 f(z)\right)^\frac{\sigma_3}{4}$ and $(-z)^{\frac{1}{2} \alpha \sigma_3}$ on the positive half-line, one shows that $E(z)$ is indeed analytic in $U^{(0)}\setminus\{0\}$, with a singularity at $0$ which is removable. See \cite{Vanlessen2} for further details.
For $z\in \partial U^{(0)}$, we have that $n^2f(z)$ is large, and using the asymptotics (\ref{Psic}) of $\Psi$ at infinity, which are uniform for $s$ in compact subsets of $(0,+\infty)$, one easily obtains condition (c) of the RH problem for $P$.

\subsection{Final transformation: small norm RH problem}
Define
\begin{equation}
R(z)=\begin{cases}
S(z)P^{(\infty)}(z)^{-1},&\mbox{ for $z\in \mathbb C \setminus \left(\overline{U^{(0)} \cup U^{(b)} \cup \Sigma_S}\right)$, }\\
S(z)P^{(0)}(z)^{-1},&\mbox{ for $z\in U^{(0)}$,}\\
S(z)P^{(b)}(z)^{-1},&\mbox{ for $z\in U^{(b)}$.}
\end{cases}
\end{equation}
Then $R$ satisfies the following RH problem.
\subsubsection*{RH problem for $R$}
\begin{itemize}
\item[(a)] $R: \mathbb C\setminus \Sigma_R\rightarrow \mathbb{C}^{2 \times 2}$, with
\[\Sigma_R=\partial U^{(0)}\cup \partial U^{(b)}\cup \left(\Sigma_S\setminus(U^{(0)}\cup U^{(b)}\cup[0,b])\right)\] is analytic, with the circles around $0$ and $b$ oriented clockwise. 
\item[(b)] $R$ satisfies the jump relation $R_+(z)=R_-(z)J_R(z)$ for $z\in\Sigma_R$, with
\begin{equation}\label{JR}
J_R(z)=
\begin{cases}
P^{(0)}(z)P^{(\infty)}(z)^{-1},& \mbox{ for $z\in\partial U^{(0)}$},\\
P^{(b)}(z)P^{(\infty)}(z)^{-1},& \mbox{ for $z\in\partial U^{(b)}$},\\
P^{(\infty)}(z)J_S(z)P^{(\infty)}(z)^{-1},& \mbox{ for $z\in\Sigma_R\setminus\left(\partial U^{(0)}\cup \partial U^{(b)}\right)$,}
\end{cases}
\end{equation}
where $J_S$ denotes the jump matrix for $S$.
\item[(c)] As $z\to\infty$, we have
\begin{equation}\label{as R z}
R(z)=I+\frac{R_1}{z}+\bigO(z^{-2}),
\end{equation}
where $R_1$ is independent of $z$ but depends on $n$.\end{itemize}
The above RH conditions are straightforward to verify using the properties of $S$, $P^{(\infty)}$, $P^{(0)}$, and $P^{(b)}$. For condition (a), one needs to verify in particular that $S(z)P^{(0)}(z)^{-1}$ is analytic at $0$. To see this, one needs to make use of the local behaviour of $\Psi$ at the origin, see (\ref{Psi0}) or (\ref{Psi0better}), and of the jump relations for $S$ and $P^{(0)}$.

From the matching conditions for $P^{(0)}$ and $P^{(b)}$ with $P^{(\infty)}$, and the fact that the jump matrix for $S$ is exponentially close to the identity on $\Sigma_R\setminus\left(\partial U^{(0)}\cup \partial U^{(b)}\right)$ as $n\to\infty$, we observe that
\begin{equation}\label{as JR}
J_R(z)=I+\bigO(n^{-1}),\qquad z\in\Sigma_R, n\to\infty.
\end{equation}
This estimate holds in the double scaling limit where $n \rightarrow \infty$ and simultaneously $t\to 0$ in such a way that $2^{-1/k}c_1n^{\frac{2k+1}{k}}t\to s\in (0,+\infty)$.
By the general theory for small-norm RH
problems~\cite[Section 7]{DKMVZ1}, this implies that
\begin{equation}\label{as R}
R(z)=I+\bigO(n^{-1}),\qquad n\to\infty,
\end{equation}
uniformly for $z$ off the jump contour $\Sigma_R$ in the double scaling limit.

\subsection{Computation of the error}
The estimate (\ref{as R}) will be sufficient for the proof of Theorem \ref{Kthm}. For the proof of Theorem \ref{theorem: partition}, we need more precise asymptotics for $R$. To that end, let us be more precise in (\ref{as JR}). In the remaining part of this section, we will specialise to the perturbed Laguerre case where $V(x)=x$. Then, $b=1$ and $h(z)=1$. The estimates below are valid in the double scaling limit where $n \rightarrow \infty$ and simultaneously $t\to 0$ in such a way that $2^{-1/k}c_1n^{\frac{2k+1}{k}}t\to s\in (0,+\infty)$.

 We can write 
\begin{equation}\label{JR as2}
J_R(z)=I+\frac{1}{n}J_1(z)+\bigO(n^{-2}),\qquad n\to\infty,
\end{equation}
where the matrix $J_1(z)$ is non-zero only on $\partial U^{(0)}$ and $\partial U^{(1)}$.
We will express the first correction term $R^{(1)}$ in the large $n$ expansion for $R$, given by
\begin{equation}\label{R large n}
R(z)=I+R^{(1)}(z)n^{-1}+\bigO(n^{-2}),\qquad z\in\mathbb C\setminus\Sigma_R,\ n\to\infty,
\end{equation}
in terms of $J_1$.
 On $\partial U^{(0)}$, we can use (\ref{def P0hat}), (\ref{P0exp}), and the asymptotics (\ref{Psic}) for $\Psi$ to conclude that
\begin{equation}\label{J1 0}
J_1(z)=\frac{r}{2f(z)^{1/2}}P^{(\infty)}(z)
\begin{pmatrix}1&ie^{-\pi i\alpha}z^\alpha\\ie^{\pi i\alpha}z^{-\alpha}&-1\end{pmatrix} P^{(\infty)}(z)^{-1},
\end{equation}
for
$z\in\partial U^{(0)}$. 
On the Airy disk, $J_1$ can also be computed using the explicit construction of the Airy parametrix and the asymptotic behaviour of the Airy function for large arguments.  After a lengthy computation which we do not include here, we obtain \begin{equation}\label{J1 b}
J_1(z)=\frac{1}{24\widetilde f(z)^{3/2}}P^{(\infty)}(z)
\begin{pmatrix}1&6i z^\alpha\\6iz^{-\alpha}&-1\end{pmatrix} P^{(\infty)}(z)^{-1},
\end{equation}
for $z\in\partial U^{(1)}$, where $\widetilde f$ is a conformal map sending $1$ to $0$, with $\widetilde f'(1)>0$.
Substituting the large $n$ expansion (\ref{JR as2}) in the jump relation $R_+=R_-J_R$, and noting that $R(z)\to I$ as $z\to\infty$, we obtain
\subsubsection*{RH problem for $R^{(1)}$}
\begin{itemize}
\item[(a)] $R^{(1)}:\mathbb C\setminus\left(\partial U^{(0)}\cup \partial U^{(1)}\right)\to\mathbb C^{2\times 2}$ is analytic,
\item[(b)] $R_+^{(1)}(z)=R_-^{(1)}(z)+J_1(z)$ for $z\in\partial U^{(0)}\cup \partial U^{(1)}$,
\item[(c)] $R^{(1)}(z)\to 0$ as $z\to\infty$. 
\end{itemize}
This scalar additive RH problem for $R^{(1)}$ can be solved explicitly, and the solution is given by
\begin{equation}
R^{(1)}(z)=\frac{1}{2\pi i}\int_{\partial U^{(0)}\cup \partial U^{(1)}}\frac{J_1(\xi)}{\xi-z}d\xi,
\end{equation}
where the integral is taken entry-wise.
By (\ref{as R z}) and (\ref{R large n}), we obtain
\begin{equation}
R_1=-\frac{1}{2\pi i n}\int_{\partial U^{(0)}\cup \partial U^{(1)}}J_1(\xi)d\xi + \bigO(n^{-2}),\qquad n\to\infty,
\end{equation}
and the term of order $n^{-1}$ in this expression can be easily computed using residue calculus. In the Laguerre case where $V(x)=x$, we have as $n\to\infty$,  
\begin{equation}
\label{LUER}
R_{1}=\frac{1}{n}\begin{pmatrix}
\frac{1}{16}-\frac{\alpha^2}{4}&\frac{i}{48}(12\alpha^2+24\alpha+11)
\\
\frac{i}{48}(12\alpha^2-24\alpha+11)
& -\frac{1}{16}+\frac{\alpha^2}{4}\end{pmatrix}
-\frac{r}{2n}\begin{pmatrix}1&i\\i&-1
\end{pmatrix}+\bigO(n^{-2}).
\end{equation}
The first term comes from the Airy parametrix, the second term is the contribution of the parametrix near $0$, and contains the function $r$ related to the $k$-th higher order PIII equation.

\section{Asymptotic analysis for $\Psi$ as $s\to +\infty$}\label{section: 4}
In this section we compute the asymptotic behaviour for $\Psi(z,s)$ as $s\rightarrow \infty$. As a corollary,
we also obtain the asymptotic behaviour as $s \rightarrow \infty$ of the relevant solution to the $k$-th PIII equation. Our approach in this section is based on a standard steepest-descent analysis.

\subsection{Re-scaling of the model problem}
We begin by performing a transformation $Y \mapsto T$ which rescales the variable $z$ in a convenient way. We define
\begin{equation}
T(z;s)=s^{\frac{k}{4k+2}\sigma_3}
\Psi(s^{\frac{2k}{2k+1}}z;s).
\end{equation}
Then we have the following RH problem for $T$.
\subsubsection*{RH problem for $T$}
\begin{itemize}
\item[(a)] $T:\mathbb C\setminus\Sigma\to\mathbb C^{2\times 2}$ is analytic, with $\Sigma$ the jump contour for $\Psi$. 
\item[(b)] $T$ has the same jump relations as $\Psi$, 
\begin{align}
&T_+(z)=T_-(z)\begin{pmatrix}0&-1\\1&0\end{pmatrix}, &z\in \Sigma_2,\\
&T_+(z)=T_-(z)\begin{pmatrix}1&0\\-e^{\pi i\alpha}&1\end{pmatrix}, &z\in \Sigma_1,\\
&T_+(z)=T_-(z)\begin{pmatrix}1&0\\-e^{-\pi i\alpha}&1\end{pmatrix}, &z\in \Sigma_3.
\end{align}
\item[(c)] As $z\to\infty$, $T$ has the asymptotic behaviour
\begin{equation}\label{T c}
T(z)=\left(I+\frac{T_1(s)}{z} +\bigO(z^{-2})\right)z^{-\frac{1}{4}\sigma_3}Ne^{s^{\frac{k}{2k+1}}z^{1/2}\sigma_3},
\end{equation}
where 
\begin{equation}\label{C1hat}
T_1(s)=s^{-\frac{2k}{2k+1}} s^{\frac{k}{4k+2}\sigma_3} \begin{pmatrix}q(s)&-i r(s)\\i p(s)&-q(s)\end{pmatrix}s^{-\frac{k}{4k+2}\sigma_3}.
\end{equation}
\item[(d)] As $z\to 0$, there exists a matrix $T_0(s)$, independent of $z$, such that $T$ has the asymptotic behaviour
\begin{equation}
\label{T0}
T(z)=T_0(s)(I+\bigO(z))e^{(-1)^{k+1}\frac{s^{\frac{k}{2k+1}}}{z^{k}}\sigma_3}z^{\frac{\alpha}{2} \sigma_3}H_j,
\end{equation}
for $z \in \Omega_j$.
\end{itemize}

\subsection{Contour deformation}
\begin{figure}[t]
\centering 
\includegraphics[scale=0.6]{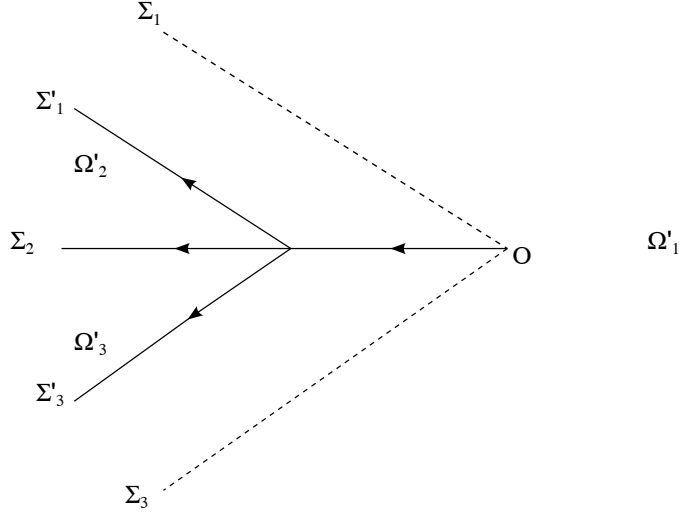}
\caption{The deformed jump contours appearing in the Riemann-Hilbert problem for $S$. The dashed lines denote the location of the original contours $\Sigma_1$ and $\Sigma_3$.}
\label{contoursS}
\end{figure}

We now note that we can deform the contours $\Sigma_1$ and $\Sigma_3$ to any contour in the upper/lower half plane. In particular we deform $\Sigma_1$ and $\Sigma_3$ to the union of $(z_0,0)$ and $\Sigma'_1$ and $\Sigma'_3$ respectively, where $z_0<0$ is a fixed point which we specify later, and where $\Sigma'_1, \Sigma'_3$ are shown in Figure \ref{contoursS}.
Recall that the angle between $\Sigma_1',\Sigma_3'$ (as for $\Sigma_1,\Sigma_3$) and the real line can be chosen freely between $0$ and $\pi$.

 We denote the collection of contours by $\Sigma'_S = \mathbb R^- \cup \Sigma'_1\cup \Sigma'_3$.
We see from the figure that this deformation results in new regions $\Omega'_j$. For $z$ in region $\Omega'_j$ for $j=1,2,3$, we define $S$ to be the analytic continuation of $T$ from region $\Omega_j$ to region $\Omega'_j$.

\subsubsection*{RH problem for $S$}
\begin{itemize}
\item[(a)] $S:\mathbb C\setminus\Sigma'_S\to\mathbb C^{2\times 2}$ is analytic.
\item[(b)] $S$ has the jump relations 
\begin{align}
&S_+(z)=S_-(z)\begin{pmatrix}0&-1\\1&0\end{pmatrix}, &z\in (-\infty,z_0),\\
&S_+(z)=S_-(z)\begin{pmatrix}1&0\\-e^{\pi i\alpha}&1\end{pmatrix}, &z\in \Sigma'_1,\\
&S_+(z)=S_-(z)\begin{pmatrix}1&0\\-e^{-\pi i\alpha}&1\end{pmatrix}, &z\in \Sigma'_3,\\
&S_+(z)=S_-(z)\begin{pmatrix}e^{-\pi i\alpha}&-1\\0&e^{\pi i\alpha}\end{pmatrix}, &z\in (z_0,0).
\end{align}
\item[(c)] As $z\to\infty$, $S$ has the asymptotic behaviour
\begin{equation}\label{S c}
S(z)=\left(I+\frac{T_1(s)}{z} +\bigO(z^{-2})\right)z^{-\frac{1}{4}\sigma_3}Ne^{s^{\frac{k}{2k+1}}z^{1/2}\sigma_3}.
\end{equation}
\item[(d)] As $z\to 0$,
\begin{equation}
\label{S0}
S(z)=T_0(s)(I+\bigO(z))e^{(-1)^{k+1}\frac{s^{\frac{2k}{2k+1}}}{z^{k}}\sigma_3}z^{\frac{\alpha}{2} \sigma_3}.
\end{equation}
\end{itemize}
Note that the only remaining sector around the origin is now $\Omega'_1$, and this is the reason why $H_j$ in RH condition (d) is replaced by $H_1=I$.

\subsection{The $g$-function}
We define the $g$-function as follows,
\begin{equation}
\label{gdef}
g(z)=(z-z_0)^{3/2}p_{k-1}(z)z^{-k},
\end{equation}
where $(z-z_0)^{3/2}$ is defined to be analytic on $\mathbb{C} \setminus (\infty, z_0]$ and positive for $z>z_0$, and where $p_{k-1}$ is a monic polynomial of degree $k-1$ fixed by the condition
\begin{equation}
g(z)=(-1)^{k+1}z^{-k}+\bigO(1),\qquad z\to 0.
\end{equation}
This fixes the polynomial $p_{k-1}$ and the constant $z_0$, we have
\begin{align}
&p_{k-1}(z)=z^{k-1} + \sum_{j=0}^{k-2}\beta_j z^j,\\
&\beta_j=(-1)^{j+k-1}(-z_0)^{-\frac{3}{2}-j}\frac{(2j+1)!!}{2^j j!},\\
&z_0=- \left(\frac{2^{k-1}(k-1)!}{(2k-1)!!}\right)^{-\frac{2}{2k+1}},
\end{align}
where $(2j+1)!!=(2j+1)(2j-1)\ldots 3 . 1$. Note that we have
\begin{equation}\label{ginfty}
g(z)=z^{\frac{1}{2}}+ g_1 z^{-1/2}+ g_2 z^{-3/2}+\bigO(z^{-5/2}),\qquad z\to\infty,
\end{equation}
where
\begin{align}\label{g1b}
g_1 &= \beta_{k-2} -\frac{3}{2} z_0, \\
g_2 &= \frac{3}{8}z_0^2 + \beta_{k-3} -\frac{3}{2} z_0 \beta_{k-2}.
\end{align}
We end this subsection by proving the following proposition which will be needed to show that the jump matrices for $U$ (defined below) tend to $I$ as $s\to +\infty$, for $z$ away from the origin.
\begin{proposition}\label{prop g}
\begin{itemize}
\item[(i)] For $z_0 < z < 0$, we have $\Re(g(z)) < 0$.
\item[(ii)] There exists a $\theta \in (0,\pi)$ such that for $\theta<|\arg(z-z_0)| <\pi$, $\Re(g(z)) > 0$.
\end{itemize}
\end{proposition}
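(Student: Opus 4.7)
For part (i), the plan is to obtain an explicit closed form for $p_{k-1}$ and read off the sign of $g$ on $(z_0,0)$. The defining conditions on $p_{k-1}$ and $z_0$ are equivalent to $(z-z_0)^{3/2}p_{k-1}(z) = (-1)^{k+1}+O(z^k)$ near $z=0$, which identifies $p_{k-1}$ as the degree-$(k-1)$ Taylor polynomial at the origin of $(-1)^{k+1}(z-z_0)^{-3/2}$, the monic condition on $p_{k-1}$ then pinning down $z_0$ (and giving exactly the value stated in the paper). Expanding via the binomial series yields
\[
p_{k-1}(z) = (-1)^{k+1}(-z_0)^{-3/2}\sum_{n=0}^{k-1}\frac{(2n+1)!!}{2^n n!}\left(\frac{z}{z_0}\right)^n.
\]
For $z\in(z_0,0)$ one has $z/z_0\in(0,1)$, so the finite sum is strictly positive and $\mathrm{sign}(p_{k-1}(z))=(-1)^{k+1}$. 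Combined with $(z-z_0)^{3/2}>0$ and $\mathrm{sign}(z^{-k})=(-1)^k$ on the same interval, this gives $g(z)<0$ on $(z_0,0)$.

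For part (ii), I first extract the leading behaviour of $g$ near $z_0$ and near infinity. Evaluating the same formula at $z=z_0$ (where $z/z_0=1$) gives
\[
C_0:=p_{k-1}(z_0)z_0^{-k}=-(-z_0)^{-k-3/2}\sum_{n=0}^{k-1}\frac{(2n+1)!!}{2^n n!}<0,
\]
so $g(z)=C_0(z-z_0)^{3/2}(1+o(1))$ as $z\to z_0$. Writing $z-z_0=re^{i\phi}$, the real part becomes $\Re g(z)=C_0\, r^{3/2}\cos(3\phi/2)(1+o(1))$, which is strictly positive precisely when $|\phi|\in(\pi/3,\pi)$. Near infinity, the expansion \eqref{ginfty} gives $\Re g(z)\sim|z|^{1/2}\cos(\arg z/2)$, strictly positive for $\arg z\in(-\pi,\pi)$.

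To upgrade to the global claim, I would treat $\Re g$ as a harmonic function on the open sector $V_\theta^+=\{z:\theta<\arg(z-z_0)<\pi\}$, which contains no singularities of $g$ for any $\theta\in(0,\pi)$. On the upper side of the branch cut the factor $(z-z_0)^{3/2}$ is purely imaginary, so $\Re g\equiv 0$ there; at the vertex $z_0$ also $\Re g=0$; and at infinity $\Re g\to+\infty$ outside a thin sliver near $\arg z=\pi$. Provided in addition one has $\Re g>0$ on the ray $\arg(z-z_0)=\theta$, the strong minimum principle for harmonic functions (together with the sub-linear growth $\Re g=O(|z|^{1/2})$ at infinity, which suffices on a narrow sector) yields $\Re g>0$ throughout $V_\theta^+$; the corresponding lower-half-plane statement follows from the Schwarz reflection $g(\bar z)=\overline{g(z)}$. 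The main obstacle is precisely to verify positivity on the ray $\arg(z-z_0)=\theta$: besides the level ray at $\arg(z-z_0)=\pi/3$ emanating from $z_0$, additional zero-level curves of $\Re g$ emanate from the pole at $z=0$, where the local behaviour $\Re g\approx(-1)^{k+1}r^{-k}\cos(k\phi)$ produces $k$ such curves in each half-plane. Choosing $\theta$ sufficiently close to $\pi$ confines $V_\theta^+$ to a thin wedge abutting the branch cut and bounded away from the origin; the delicate step is verifying quantitatively that these origin-emitted level curves do not re-enter such a wedge.
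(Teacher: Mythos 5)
Your proof of part (i) is correct and is essentially the paper's argument: you both reduce to the explicit truncated binomial series $\widetilde p_{k-1}(\zeta)=\sum_{j=0}^{k-1}\frac{(2j+1)!!}{2^j j!}\zeta^j$ with $\zeta=z/z_0\in(0,1)$, for which every term is positive, and then read off the sign of $g$ from the remaining factors.

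For part (ii), however, you have correctly identified a genuine gap in your own argument and have not closed it. Your strategy (local expansion at $z_0$ and at $\infty$, then the minimum principle for the harmonic function $\Re g$ on a thin sector $V_\theta^+$) requires as input that $\Re g>0$ on the boundary ray $\arg(z-z_0)=\theta$, and you flag that you cannot rule out that the $2k$ zero-level curves of $\Re g$ emanating from the pole at $z=0$ bend back into the wedge. That concern is legitimate: nothing in the local expansions at $z_0$ and $\infty$ controls $\Re g$ on the middle portion of that ray, so the minimum principle has nothing to bite on there. The paper avoids this altogether by differentiating the same closed form once more: it shows
\[
g'(z)=(-z_0)^{-k-1}\left(\frac{z}{z_0}\right)^{-k-1}\left(1-\frac{z}{z_0}\right)^{1/2}\widetilde q_{k-1}\!\left(\frac{z}{z_0}\right),
\qquad
\widetilde q_{k-1}(\zeta)=-k\Bigl(1+\sum_{j=1}^{k}\tfrac{(2j-1)!!}{2^j j!}\zeta^j\Bigr),
\]
so that $\widetilde q_{k-1}(\zeta)<0$ for all $\zeta>1$, and hence $\Im g'_+(x)<0$ for every $x<z_0$. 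Since $\Re g_+\equiv 0$ on $(-\infty,z_0)$, the Cauchy--Riemann equations give $\partial_y\Re g=-\Im g'_+>0$ along the entire cut, i.e.\ $\Re g$ becomes strictly positive the moment one leaves the ray, at every point of the cut and not merely near $z_0$ and near $\infty$. Combined with the two power-law regimes you already established, that is exactly the uniform control on the compact middle portion that your minimum-principle argument is missing, and it is the missing ingredient you would need to supply: compute $g'$ explicitly and verify the single-signedness of its coefficients, rather than trying to argue a priori about the global topology of the level set $\{\Re g=0\}$.
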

\begin{proof}
We rewrite \eqref{gdef} as
\beq
g(z) = -(-z_0)^{-k}\left(\frac{z}{z_0}\right)^{-k}\left(1-\frac{z}{z_0}\right)^\frac{3}{2}\widetilde{p}_{k-1}\left(\frac{z}{z_0}\right),
\eeq
where
\beq
\widetilde{p}_{k-1}(z) := \sum^{k-1}_{j=0} \frac{(2j+1)!!}{2^j j!} z^j.
\eeq
Now note that for $0 < z/z_0 < 1$, $g(z) < 0$, thereby proving the first part of the proposition. The second part follows from the Cauchy-Riemann conditions if we demonstrate that $\Im g_+'(z) < 0$ for $z < z_0$. To this end we compute
\beq
g'(z) = (-z_0)^{-k-1} \left(\frac{z}{z_0}\right)^{-k-1} \left(1-\frac{z}{z_0}\right)^\frac{1}{2} \widetilde{q}_{k-1}\left(\frac{z}{z_0}\right),
\eeq
where
\beq
\widetilde{q}_{k-1}(z) := -k \left(1 + \sum^{k}_{j=1} \frac{(2j-1)!!}{2^j j!} z^j \right).
\eeq
Due to the negative coefficients in $\widetilde{q}$, we have indeed that $\Im g_+'(z) < 0$ for $z/z_0 > 1$, from which the proposition follows.
\end{proof}

\subsection{Normalisation at $\infty$ and $0$ of the RH problem}
We now use the $g$-function to normalise the behaviour of the RH solution near the origin and at infinity. Define
\begin{equation}
U(z;s)=(I+i s^{\frac{k}{2k+1}}g_1 \sigma_-)S(z;s)e^{-s^{\frac{2k}{2k+1}}g(z)\sigma_3}.
\end{equation}
This transformation leaves the jump contours unchanged and gives us the following RH problem for $U$.
\subsubsection*{RH problem for $U$}
\begin{itemize}
\item[(a)] $U:\mathbb C\setminus\Sigma'_S\to\mathbb C^{2\times 2}$ is analytic.
\item[(b)] $U$ has the jump relations 
\begin{align}
&U_+(z)=U_-(z)\begin{pmatrix}0&-1\\1&0\end{pmatrix}, &z\in (-\infty,z_0),\\
&U_+(z)=U_-(z)\begin{pmatrix}1&0\\-e^{\pi i\alpha}e^{-2s^{-\frac{k}{2k+1}}g(z)}&1\end{pmatrix}, &z\in \Sigma'_1,\\
&U_+(z)=U_-(z)\begin{pmatrix}1&0\\-e^{-\pi i\alpha}e^{-2s^{-\frac{k}{2k+1}}g(z)}&1\end{pmatrix}, &z\in \Sigma'_3,\\
&U_+(z)=U_-(z)\begin{pmatrix}e^{-\pi i\alpha}&-e^{2s^{-\frac{k}{2k+1}}g(z)}\\0&e^{\pi i\alpha}\end{pmatrix}, &z\in (z_0,0).
\end{align}
\item[(c)] As $z\to\infty$, $U$ has the asymptotic behaviour 
\begin{equation}\label{U c}
U(z)=\left(I+\frac{U_1(s)}{z}+\bigO(z^{-2})\right)z^{-\frac{1}{4}\sigma_3}N,
\end{equation}
with
\begin{align}
U_1(s) = &(I+i s^{\frac{k}{2k+1}}g_1 \sigma_-) \times \\
&\bigg(T_1(s)(I-i s^{\frac{k}{2k+1}}g_1 \sigma_-) + \nn \\
&\frac{1}{2}s^\frac{2k}{2k+1}g_1^2 I + i s^\frac{k}{2k+1}g_1 \sigma_+ -i(s^\frac{k}{2k+1}g_2+\frac{1}{6} s^\frac{3k}{2k+1} g_1^3)\sigma_- \bigg). \nn
\end{align}
\item[(d)] As $z\to 0$,
\begin{equation}
\label{U0}
U(z)=U_0(s)(I+\bigO(z))z^{\frac{\alpha}{2} \sigma_3}.
\end{equation}
\end{itemize}
Note that we have
\begin{equation}\label{C112tilde}
(U_1)_{12}(s) = i s^\frac{k}{2k+1}g_1(s) + (T_1)_{12}(s),
\end{equation}
and hence, by (\ref{C1hat}),
\begin{equation}\label{v in C1tilde}
 r(s)= i s^\frac{k}{2k+1} (U_1)_{12}(s) + s^\frac{2k}{2k+1}g_1(s).
\end{equation}

By Proposition \ref{prop g}, if we choose $\Sigma_1', \Sigma_2'$ close enough to $(-\infty, z_0)$, we have that the jump matrices for $U$  converge to the identity as $s\to +\infty$ on the part of the jump contour away from the real line. The off-diagonal entry of the jump matrix on $(z_0,0)$ also tends to $0$. The convergence is not uniform near $z_0$, and therefore we need to construct a local Airy parametrix near $z_0$ to model the jumps near $z_0$, and a global parametrix to model the jumps on the real line.

\subsection{Global parametrix}

Ignoring exponentially small jumps and a small neighbourhood of $z_0$, we have the following RH problem which we will solve explicitly.

\subsubsection*{RH problem for $P^{(\infty)}$}
\begin{itemize}
\item[(a)] $P^{(\infty)}:\mathbb C\setminus\mathbb R^-\to\mathbb C^{2\times 2}$ is analytic.
\item[(b)] $P^{(\infty)}$ has the jump relations 
\begin{align}
&P^{(\infty)}_+(z)=P^{(\infty)}_-(z)\begin{pmatrix}0&-1\\1&0\end{pmatrix}, &z\in (-\infty,z_0),\\
&P^{(\infty)}_+(z)=P^{(\infty)}_-(z)e^{-\pi i\alpha\sigma_3}, &z\in (z_0,0).
\end{align}
\item[(c)] As $z\to\infty$, $P^{(\infty)}$ has the asymptotic behaviour
\begin{equation}\label{Pinf c}
P^{(\infty)}(z)=\left(I+P_1^{(\infty)}z^{-1}+\bigO(z^{-2})\right)z^{-\frac{1}{4}\sigma_3}N.
\end{equation}
\item[(d)] As $z\to 0$, $P^{(\infty)}$ has the asymptotic behaviour
\begin{equation}\label{P0 c}
P^{(\infty)}(z)=\bigO(1)z^{\frac{\alpha}{2} \sigma_3}.
\end{equation}
\end{itemize}
The solution to this RH problem can be constructed from the one in \cite{XDZ} and is given by
\beq\label{Pinf}
P^{(\infty)}(z)=\begin{pmatrix}1&0\\i\alpha (-z_0)^\frac{1}{2}&1 \end{pmatrix}(z-z_0)^{-\frac{\sigma_3}{4}}N\left(\frac{\sqrt{z-z_0}+(-z_0)^\frac{1}{2}}{\sqrt{z-z_0}-(-z_0)^\frac{1}{2}} \right)^{-\frac{\alpha}{2}\sigma_3}.
\eeq

\subsection{Local parametrix near $z_0$}
Using the model RH problem for the Airy function, a solution to the following RH problem can be constructed in a fixed size disk $D$ around $z_0$.
\subsubsection*{RH problem for $P^{(z_0)}$}
\begin{itemize}
\item[(a)]$P^{(z_0)}$ is analytic in $D\setminus\Sigma'_S$.
\item[(b)] For $z\in D\cap \Sigma'_S$, $P^{(z_0)}$ satisfies exactly the same jump relations as $U$.
\item[(c)] For $z\in \partial D$, $P^{(z_0)}$ matches with the global parametrix,
\begin{equation}
P^{(z_0)}(z)=P^{(\infty)}(z)(I+\bigO(s^{-\frac{k}{2k+1}})),\qquad s\to +\infty.
\end{equation}
\item[(d)] $U(z)P^{(z_0)}(z)^{-1}$ is analytic at $z_0$.
\end{itemize}
The construction of $P^{(z_0)}$ proceeds in the standard way. We first introduce the matrix-valued function
\begin{equation}\label{definition: Psi}
    \Upsilon(z) := M_A \times
    \begin{cases}
        \begin{pmatrix}
            \Ai( z) & \Ai(\omega^2 z) \\
            \Ai'( z) & \omega^2\Ai'(\omega^2 z)
        \end{pmatrix}e^{-\frac{\pi i}{6}\sigma_3}, & \mbox{for $ z\in\Omega_{A,1}$,} \\[3ex]
        \begin{pmatrix}
            \Ai( z) & \Ai(\omega^2 z) \\
            \Ai'( z) & \omega^2\Ai'(\omega^2 z)
        \end{pmatrix}e^{-\frac{\pi i}{6}\sigma_3}
        \begin{pmatrix}
            1 & 0 \\
            -1 & 1
        \end{pmatrix}, & \mbox{for $ z\in\Omega_{A,2}$,} \\[3ex]
        \begin{pmatrix}
            \Ai( z) & -\omega^2\Ai(\omega z) \\
            \Ai'( z) & -\Ai'(\omega z)
        \end{pmatrix}e^{-\frac{\pi i}{6}\sigma_3}
        \begin{pmatrix}
            1 & 0 \\
            1 & 1
        \end{pmatrix}, & \mbox{for $ z\in\Omega_{A,3}$,} \\[3ex]
        \begin{pmatrix}
            \Ai( z) & -\omega^2\Ai(\omega z) \\
            \Ai'( z) & -\Ai'(\omega z)
        \end{pmatrix}e^{-\frac{\pi i}{6}\sigma_3}, & \mbox{for $ z\in\Omega_{A,4}$,}
    \end{cases}
\end{equation}
with $\omega=e^{\frac{2\pi i}{3}}$, $\Ai$ the Airy function,
\beq
M_A := \sqrt{2\pi} e^{\frac{\pi i}{6}} \begin{pmatrix}1 &0\\0& -i \end{pmatrix},
\eeq
and the regions $\Omega_{A,i}$ defined as in Figure \ref{airycontours}. The function $\Upsilon$ solves the following RH problem.

\begin{figure}[t]
\centering 
\includegraphics[scale=0.6]{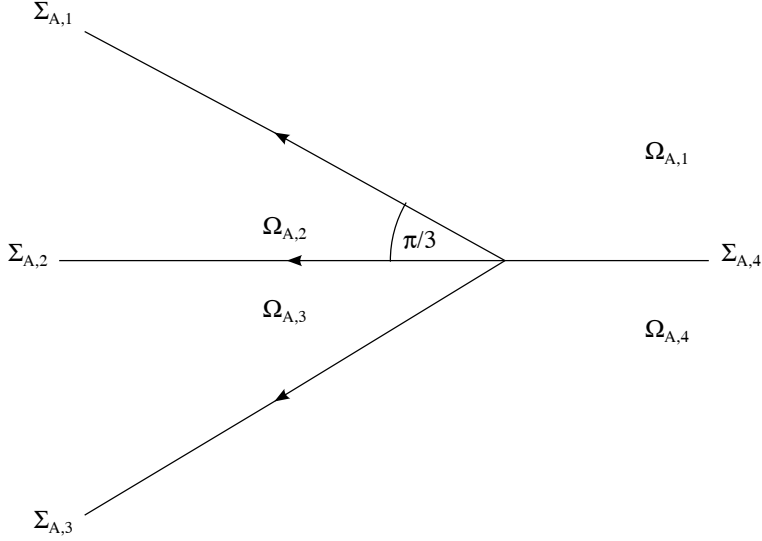}
\caption{The jump contour $\Sigma_A$ for the Airy model problem.}
\label{airycontours}
\end{figure}

\subsubsection*{Airy model RH problem}
\begin{itemize}
    \item[(a)] $\Upsilon:\mathbb{C}\setminus \Sigma_A \to\mathbb{C}^{2\times
    2}$ is analytic where $\Sigma_A = \cup^4_{i=1} \Sigma_{A,i}$.
\item[(b)] $\Upsilon$ has the jump relations 
\begin{align}
&\Upsilon_+(z)=\Upsilon_-(z)\begin{pmatrix}1&0\\-1&1\end{pmatrix}, &z\in \Sigma_{A,1},\\
&\Upsilon_+(z)=\Upsilon_-(z)\begin{pmatrix}0&-1\\1&0\end{pmatrix}, &z\in \Sigma_{A,2},\\
&\Upsilon_+(z)=\Upsilon_-(z)\begin{pmatrix}1&0\\-1&1\end{pmatrix}, &z\in \Sigma_{A,3},\\
&\Upsilon_+(z)=\Upsilon_-(z)\begin{pmatrix}1&-1\\0&1\end{pmatrix}, &z\in \Sigma_{A,4}.
\end{align}

    \item[(c)] As $z\rightarrow \infty$, we have
    \beq\label{asymptotics: Upsilon}
        \Upsilon( z) =  z^{-\frac{\sigma_3}{4}}N\left(I+\bigO(z^{-\frac{3}{2}} \right)e^{-\frac{2}{3} z^{3/2}\sigma_3}.
    \eeq
\end{itemize}

\noindent Using the Airy model problem we can then write $P^{(z_0)}$ as
\beq
\label{Pz0}
P^{(z_0)}(z) = E_1(z) \Upsilon(s^\frac{2k}{6k+3} f_1(z)) e^{-s^\frac{k}{2k+1} g(z) \sigma_3} \theta(z),
\eeq
where $\theta(z) = e^{\pm\frac{1}{2}\pi i \alpha \sigma_3}$ for $\pm \Im z >0$, and $f_1$ is given by
\beq
f_1(z) := \frac{3}{2} (-z_0)^{-\frac{2k}{3}}\left(1-\frac{z}{z_0} \right) \left[\left(\frac{z}{z_0}\right)^{-k}\widetilde{p}_{k-1}\left(\frac{z}{z_0}\right) \right]^\frac{2}{3}.
\eeq

Note that the specific form of $f_1$ has been chosen to satisfy $-\frac{2}{3} (s^\frac{2k}{6k+3} f)^\frac{3}{2} = s^\frac{k}{2k+1} g$. Furthermore, using properties of $\widetilde{p}_{k-1}$ it is straightforward to show that $f_1$ is conformal in $D$ when $D$ is sufficiently small, with $f_1'(z_0)>0$. Finally, we let
\beq
E_1(z) = P^{(\infty)}(z) \theta(z)^{-1} N^{-1} (s^\frac{2k}{6k+3} f_1(z))^\frac{\sigma_3}{4}.
\eeq
Again it is straightforward to demonstrate that $E_1$ is analytic in $D$ thus showing that \eqref{Pz0} possesses the correct jumps. The specific form of $E(z)$ is chosen as to satisfy the boundary condition (c) in the RH problem for $P^{(z_0)}$.

\subsection{Small norm RH problem}
Define
\begin{equation}
R(z)=\begin{cases}
U(z)P^{(\infty)}(z)^{-1},&\mbox{ for $z\in \mathbb C\setminus\overline{D}$,}\\
U(z)P^{(z_0)}(z)^{-1},&\mbox{ for $z\in D$.}
\end{cases}
\end{equation}
Then $R$ is the solution to a RH problem normalised at infinity and with jump matrices uniformly close to the identity as $s\to +\infty$. One shows that
\begin{equation}
R(z,s)=I+\bigO(s^{-\frac{k}{2k+1}}),\qquad \frac{d}{ds}R(z,s)=\bigO(s^{-1-\frac{k}{2k+1}}),
\end{equation}
uniformly for $z$ off the jump contour for $R$, as $s\to +\infty$.
Defining $R_1$ in terms of the large $z$ expansion of $R$,
\begin{equation}\label{def R1}
R(z)=I+\frac{R_1}{z}+\bigO(z^{-2}),
\end{equation}
we have in particular that $R_1(s)=\bigO(s^{-\frac{k}{2k+1}})$.

\subsection{Asymptotics for $r(s)$ and $y(s)$}
For $z$ large, we have
\begin{equation}
U(z)=R(z)P^{(\infty)}(z).
\end{equation}
Comparing the large $z$ expansions (\ref{U c}), (\ref{Pinf c}), and (\ref{def R1}) for $U$, $P^{(\infty)}$, and $R$, we obtain
\begin{equation}
U_1(s)=P_1^\infty(s)+R_1(s)=P_1^\infty(s)+\bigO(s^{-\frac{k}{2k+1}}),\qquad s\to\infty.
\end{equation}
It follows from (\ref{v in C1tilde}), (\ref{g1b}), and the large $z$ expansion of $P^{(\infty)}$ given by (\ref{Pinf}) that
\begin{equation}
r(s)=s^\frac{2k}{2k+1} (\beta_{k-2} -\frac{3}{2} z_0) - (-z_0)^\frac{1}{2} s^\frac{k}{2k+1} \alpha + \bigO(1),
\end{equation}
as $s\to\infty$. By (\ref{qveqn}), we obtain (\ref{as y infty}). 

\section{Asymptotic analysis for $\Psi$ as $s\to 0$}\label{section: 5}
In this section, we compute the asymptotic behaviour of $\Psi(z,s)$ as $s\rightarrow 0$. As a corollary,
we obtain the asymptotic behaviour as $s \rightarrow 0$ of the associated solution to the $k$-th Painlev\'e III equation.

\subsection{The Bessel model RH problem}
If we set $s=0$ in the RH problem for $\Psi$, we obtain a RH problem which is equivalent to the Bessel model RH problem used in \cite{Vanlessen2, XDZ}. This suggests that the solution to the Bessel model problem will be a good approximation to $\Psi$. For $z$ close to the origin, this is not true anymore because of the singular behaviour of $\Psi$ near $0$ for $s\neq 0$, which is not present for $s=0$. This is the reason why we will need to construct a local parametrix near the origin.
\subsubsection*{Bessel RH for $\Upsilon$}
\begin{itemize}
\item[(a)] $\Upsilon:\mathbb C\setminus\Sigma\to\mathbb C^{2\times 2}$ analytic, with 
\[\Sigma=\Sigma_1\cup\Sigma_2\cup\Sigma_3\] as in Figure \ref{modelcontour}.
\item[(b)] $\Upsilon$ has the jump relations 
\begin{align}
&\Upsilon_+(z)=\Upsilon_-(z)\begin{pmatrix}0&-1\\1&0\end{pmatrix}, &z\in \Sigma_2,\\
&\Upsilon_+(z)=\Upsilon_-(z)\begin{pmatrix}1&0\\-e^{\pi i\alpha}&1\end{pmatrix}, &z\in \Sigma_1,\\
&\Upsilon_+(z)=\Upsilon_-(z)\begin{pmatrix}1&0\\-e^{-\pi i\alpha}&1\end{pmatrix}, &z\in \Sigma_3.
\end{align}
\item[(c)] As $z\to\infty$, $\Upsilon$ has the asymptotic behaviour
\begin{equation}\label{Upsilonc}
\Upsilon(z)=\left(I+\bigO(z^{-1})\right)z^{-\frac{1}{4}\sigma_3}Ne^{z^{1/2}\sigma_3},
\end{equation}
where $N=\frac{1}{\sqrt{2}}(I+i\sigma_1)$.
\item[(d)] Conditions as $z \rightarrow 0$;
\begin{itemize}
\item[(i)] If $\alpha < 0$  then $\Upsilon(z) = \bigO \begin{pmatrix} |z|^\frac{\alpha}{2} & |z|^\frac{\alpha}{2}\\ |z|^\frac{\alpha}{2}&|z|^\frac{\alpha}{2} \end{pmatrix} $.
\item[(ii)] If $\alpha  = 0$ then $\Upsilon(z) = \bigO \begin{pmatrix} \log|z| &\log|z|\\ \log|z|& \log|z| \end{pmatrix} $.
\item[(iii)] If $\alpha > 0$ then 
\beq
\Upsilon(z) = \begin{cases}\bigO  \begin{pmatrix} |z|^\frac{\alpha}{2} & |z|^{-\frac{\alpha}{2}}\\ |z|^\frac{\alpha}{2}&|z|^{-\frac{\alpha}{2}} \end{pmatrix},&\mbox{ for $z\in\Omega_1$,} \\
\bigO\begin{pmatrix} |z|^{-\frac{\alpha}{2}} & |z|^{-\frac{\alpha}{2}}\\ |z|^{-\frac{\alpha}{2}}&|z|^{-\frac{\alpha}{2}} \end{pmatrix} ,&\mbox{ for $z\in\Omega_2\cup\Omega_3$.}
\end{cases}
\eeq
\end{itemize}
\end{itemize}
The solution to the above RH problem can be constructed using Bessel functions \cite{Vanlessen2,XDZ}, we have
\beq
\Upsilon(z) = \left(I + \frac{i}{8}(4\alpha^2 + 3) \sigma_-\right) \pi^\frac{\sigma_3}{2} \begin{pmatrix}I_\alpha(z^\frac{1}{2})& \frac{i}{\pi} K_\alpha(z^\frac{1}{2})\\ \pi i z^\frac{1}{2}I_\alpha'(z^\frac{1}{2})& -z^\frac{1}{2} K_\alpha'(z^\frac{1}{2}) \end{pmatrix} H_j.\label{Upsilon Bessel}
\eeq
when $z \in \Omega_j$. In the construction of the local parametrix for $\Psi$ near the origin, we will also require more detailed knowledge of the behaviour of $\Upsilon$ at the origin. It can be verified that for $z \in \Omega_j$ and $\alpha \notin \mathbb{Z}$, we have
\beq\label{repr Upsilon origin1}
\Upsilon(z) = \Upsilon_0(z) z^\frac{\alpha \sigma_3}{2} \begin{pmatrix} 1 & \frac{1}{2i\sin \pi\alpha} \\ 0 & 1\end{pmatrix} H_j,
\eeq
whereas for $\alpha \in \mathbb{Z}$, we have
\beq\label{repr Upsilon origin2}
\Upsilon(z) = \Upsilon_0(z) z^\frac{\alpha \sigma_3}{2} \begin{pmatrix} 1 & \frac{i}{\pi}(-1)^{\alpha+1} \log \frac{\sqrt{z}}{2}\\ 0 & 1\end{pmatrix} H_j.
\eeq
In both cases $\Upsilon_0(z)$ is an entire function.

\subsection{The local parametrix near the origin}
Around the origin, $\Upsilon$ ceases to be a good approximation to $\Psi$ due to the singular behaviour of $\Psi$. We therefore construct a local parametrix $F$ around the origin in a disk of the form $D=\{z\in\mathbb C:|z|<\epsilon\}$ with $\epsilon>0$ fixed but sufficiently small. The construction done here deviates from the construction done in \cite{XDZ}. We prefer this modified construction because it is more natural given the behaviour of $\Psi$ near $0$ given in (\ref{Psi0better}).
\subsubsection*{RH problem for $F$}
\begin{itemize}
\item[(a)] $F:D\setminus\Sigma\to\mathbb C^{2\times 2}$ is analytic.
\item[(b)] $F$ has the same jumps as $\Psi$ on $\Sigma$.
\item[(c)] On the circle $|z| = \epsilon$ we have the following matching conditions as $s \rightarrow 0$,
\beq F(z,s) =
\begin{cases} \Upsilon(z)(I + \bigO(s^k))&\mbox{ for $\alpha \notin \mathbb{Z}$}, \\
\Upsilon(z)(I + \bigO(s^k)),& \mbox{ for $\alpha \in \mathbb{Z}$}.
\end{cases}
\eeq
\item[(d)] The function $\Psi F^{-1}$ is bounded near $z=0$.
\end{itemize}
The solution to this problem takes the following form,
\beq\label{def F}
F(z,s) = \Upsilon_0(z)
\begin{pmatrix}1&h(z)\\0&1\end{pmatrix}\begin{pmatrix}1 & f_2(z,s)\\ 0 & 1 \end{pmatrix} z^\frac{\alpha}{2} e^{-(-\frac{s}{z})^k\sigma_3} H_j,
\eeq
where $f_2(z,s)$ is given in (\ref{def f origin}), $\Upsilon_0$ in (\ref{repr Upsilon origin1})-(\ref{repr Upsilon origin2}), $H_j$ in (\ref{H1})-(\ref{H3}), and 
\begin{align}\label{def h}
&h(z)=-f_2(z,0)+\frac{1}{2i\sin \pi\alpha}z^\alpha,& \mbox{ if $\alpha\notin\mathbb Z$,}\\
&h(z)=-f_2(z,0)+\frac{i}{\pi}\log\frac{\sqrt{z}}{2},&\mbox{ if $\alpha\in\mathbb Z$.}
\end{align}
It is easily verified that $h$ is an entire function.

Conditions (a) and (b) of the RH problem are direct, since $F$ is of the form
\[\mbox{Analytic function } \times \begin{pmatrix}1 & f_2(z,s)\\ 0 & 1 \end{pmatrix} z^\frac{\alpha}{2} e^{-(-\frac{s}{z})^k} H_j,\]
just like $\Psi$, see (\ref{Psi0better}), and therefore $F$ has precisely the same jump conditions as $\Psi$.
To verify condition (c), note that for $|z|=\epsilon$, $\alpha\notin\mathbb Z$, by (\ref{repr Upsilon origin1}),
\begin{align}
&F(z)\Upsilon(z)^{-1}\nonumber\\
&=\Upsilon_0(z)\begin{pmatrix}1&h(z)+f_2(z,s)\\0&1\end{pmatrix}e^{-(-\frac{s}{z})^k\sigma_3}
\begin{pmatrix} 1 & -\frac{z^\alpha}{2i\sin \pi\alpha} \\ 0 & 1\end{pmatrix} \Upsilon_0(z)^{-1}\nonumber\\
&=I+\bigO(s^k),\qquad\mbox{ as $s\to 0$.}
\end{align}
Condition (d) follows from (\ref{Psi0better}) and (\ref{def F}).
In the case $\alpha\in\mathbb Z$, the RH conditions are checked in a similar way.

\subsection{Small norm RH problem}
We can now construct a small norm RH problem. Define
\begin{equation}
R(z,s)=\begin{cases}
\Psi(z,s)\Upsilon^{-1}(z),&\mbox{ for $|z| > \epsilon$,}\\
\Psi(z,s)F(z,s)^{-1},&\mbox{ for $|z| < \epsilon$.}
\end{cases}
\end{equation}
Then $R$ is a small-norm RH problem and we have
\begin{equation}\label{error R}
R(z,s)=I + \bigO(s^k),\qquad s\to 0,
\end{equation}
and
\begin{equation}\label{error R der}
\frac{d}{ds}R(z,s)=\bigO(s^{k-1}),\qquad s\to 0,
\end{equation}
uniformly for $z$ off the jump contour for $R$.

\subsection{Asymptotics for $r(s)$ as $s\to 0$}
To obtain the initial condition for $r$, we compute $(\Psi e^{-z^\frac{1}{2} \sigma_3} N^{-1} z^\frac{\sigma_3}{4})_{12}$ as $z \rightarrow \infty$ in two different ways; firstly using \eqref{Psic} and secondly using the small $s$ asymptotics obtained above. This gives
\beq
r(s) = \frac{1}{8}\left(1-4\alpha^2 \right)+\bigO(s^k),\qquad r'(s)=\bigO(s^{k-1}), \qquad s\to 0, \quad s>0,
\eeq
valid for any $\alpha > -1$. By (\ref{qveqn}), (\ref{error R}), and (\ref{error R der}), we obtain (\ref{as y 0}).

\begin{remark}\label{remark uniform0}
Recall the asymptotic condition (\ref{Psic}) for $\Psi(z,s)$ as $z\to\infty$. We used before that this condition holds uniformly for $s$ in compact subsets of $(0,+\infty)$. As a consequence of the asymptotic analysis as $s\to 0$ done in this section, it follows that this expansion is uniform even for $s$ small. This implies that the matching condition between the global parametrix and the local parametrix near $0$ in the asymptotic analysis for the orthogonal polynomials holds also for $s_{n,t}$ small, and that the error term $\bigO(n^{-2})$ for $R$ in (\ref{R large n}) is uniform also as $n\to\infty$ and $t\to 0$ in such a way that $2^{-1/k}c_1n^{\frac{2k+1}{k}}t\to 0$. We will use this fact later on. 
\end{remark}

\section{Asymptotics for the kernel near the origin}\label{section: 6}
The correlation kernel \eqref{Keqn} can be expressed in terms of the matrix $Y$ as
\beq
\label{KYeqn}
K_n(x,y;t) = \frac{1}{2\pi i} \frac{\sqrt{w(x)w(y)}}{x-y} \begin{pmatrix}0&1\end{pmatrix} Y_+(y,t)^{-1}Y_+(x,t) \begin{pmatrix}1\\0\end{pmatrix}.
\eeq
To prove Theorem \ref{Kthm}, we use the large $n$ asymptotics obtained from the steepest descent analysis in Section
\ref{DZsteep}. By inverting the sequence of transformations $Y \mapsto T \mapsto S \mapsto R$, we may express $Y(z)$ for $z$ in the upper lens region as
\beq
Y(z) = e^{\frac{1}{2}nl\sigma_3}  R(z)E(z) \Psi\left(n^2 f(z), 2^{-\frac{1}{k}} c_1 n^\frac{2k+1}{k} t\right) e^{\frac{i \pi (\alpha-1)}{2}\sigma_3} \begin{pmatrix}1&0\\1&1\end{pmatrix} w(z)^{-\frac{1}{2}\sigma_3}.
\eeq
Inserting this into \eqref{KYeqn}, we obtain
\begin{multline}
K_n(x,y;t) = \frac{1}{2\pi i(x-y)}(-1,-1)e^{\frac{-i \pi \alpha}{2}\sigma_3}\Psi_+\left(n^2 f(y),s_{n,t}\right)^{-1} L^{-1}(y) \\
\times \ 
L(x)\Psi_+\left(n^2 f(x),s_{n,t}\right) e^{\frac{i \pi \alpha }{2}\sigma_3}(1,-1)^T, \label{Kexp}
\end{multline}
where $L(z) := R(z)E(z)$ and $s_{n,t}$ given by (\ref{def snt}). The boundary value $\Psi_+$ is taken according to the oriented contour $\Sigma$ in Figure \ref{modelcontour}, which means that we take the limit from region $\Omega_3$. We now zoom in on the origin by defining $x = -c_1^{-1} n^{-2}u$ and $y = -c_1^{-1} n^{-2}v$, with (as before) $c_1=-f'(0)$, and consider the  asymptotics of $K_n(x,y;t)$ in the double scaling limit where $n\to\infty$, $t\to 0$ in such a way that $s_{n,t}$ remains bounded. First we note that 
\beq
\label{Lasymp}
L^{-1}(y)L(x) = I + (u-v)\bigO(n^{-2}),\qquad n\to\infty.
\eeq
Additionally we have
\beq
\label{fasymp}
n^2f(x) = u\left(1+ \bigO(n^{-2})\right),\qquad n\to\infty,
\eeq
and similarly for $n^2f(y)$.
If $s_{n,t}\to s$ as $n\to\infty, t\to 0$, with $s\in (0,+\infty)$, we 
have \beq\label{Psi as}\Psi_+\left(n^2 f(y),s_{n,t}\right)\to \Psi_+(v,s),\qquad \Psi_+\left(n^2 f(x),s_{n,t}\right)\to \Psi_+(u,s).\eeq
Substituting \eqref{Lasymp}, \eqref{fasymp}, and \eqref{Psi as} into \eqref{Kexp}, and moreover expressing $\Psi_+$ in terms of the functions $\psi_1$ and $\psi_2$ by (\ref{psidef2}), we obtain
\beq
\lim_{n\to\infty}\frac{1}{c_1 n^2}K_n\left(\frac{u}{c_1 n^2},\frac{v}{c_1 n^2}; \frac{s}{c_2 n^{2+1/k}}\right) = e^{\pi i\alpha} \frac{\psi_1(u;s)\psi_2(v;s) - \psi_1(v;s)\psi_2(u;s)}{2 \pi i (u-v)},
\eeq
after a straightforward calculation. This proves Theorem \ref{Kthm}.

Finally, using the asymptotics of $\Psi(z,s)$ as $s\to \infty$ and $s \to 0$, we can prove Theorem \ref{limitingkernelthm}. In the case $s \to 0$, we have 
\beq
\Psi_+(z,s) = (I + \bigO(s^k)) \Upsilon_+(z)  \begin{pmatrix}1\\-e^{-\pi i \alpha}\end{pmatrix},
\eeq
uniformly for $z$ bounded away from zero. Now we can substitute the above expression and the expression (\ref{Upsilon Bessel}) for $\Upsilon$ in terms of Bessel functions into \eqref{psidef2}. Using the resulting expression for $\psi_i$ together with the definition of $\mathbb{K}^{PIII}$, we obtain, in a similar way as in \cite[Section 5.2]{XDZ}, the first part of the theorem.

 The limit of the kernel as $s\to \infty$ is more complicated; one must invert the sequence of transformations performed in Section \ref{section: 4} which mapped the RH problem for $\Psi$ to the Airy model RH problem. The result of this is that, for $z$ close to $z_0$, we have the $s \to \infty$ asymptotics
\begin{align}
\Psi(z,s) =& s^{-\frac{1}{4}\eta\sigma_3}\left(I - i s^\frac{\eta}{2}g_1 \sigma_-\right)R(s^{-\eta}z,s)E_1(s^{-\eta}z,s)\times \nn \\
&\Upsilon(s^\frac{\eta}{3}f_1(s^{-\eta}z))\theta(z)
\begin{cases}I &\mbox{ for $z \in  \Omega_1\cup\Omega'_2\cup\Omega'_3$,}\\
\begin{pmatrix}1&0\\e^{-i\pi\alpha}&1\end{pmatrix} &\mbox{ for $z \in  \Omega'_1\cap\Omega_3$,}\\
\begin{pmatrix}1&0\\-e^{i\pi\alpha}&1\end{pmatrix} &\mbox{ for $z \in  \Omega'_1\cap\Omega_2$.}
\end{cases}
\end{align}
where $\eta = 2k/(2k+1)$. Using the above expression in \eqref{psidef2} we can compute asymptotics for $\psi_i$ as $s \to \infty$ in the regions $\Omega'_3$ and $\Omega'_1\cap\Omega_3$, and in particular on the negative real line. Substituting these asymptotics into the definition of $\mathbb{K}^{PIII}$, we obtain the second part of the theorem.
\section{Asymptotics for the partition function}\label{section: 7}

In this section, we will prove Theorem \ref{theorem: partition} by deriving large $n$ asymptotics for the partition functions in the perturbed LUE and perturbed GUE. 

\subsection{The partition function in the pLUE}
The main ingredient required to obtain asymptotics for $Z_{n,k}^{pLUE}$ is a differential identity with respect to the perturbation parameter $t$. More general identities of the form below were obtained in \cite{BEH06,BMM}, but for completeness we give a proof of the differential identity relevant to us.

We note that the function $z \Tr \left(Y^{-1}(z)Y'(z)\right)$ is not analytic at infinity: indeed it has a discontinuity on the real line. However, since the discontinuity becomes exponentially small as $z\to\pm\infty$,  $z \Tr \left(Y^{-1}(z)Y'(z)\right)$ does admit a full asymptotic expansion in negative powers of $z$ as $z\to\infty$.
We have
\begin{equation}
z \Tr \left(Y^{-1}(z)Y'(z)\right)=\frac{c_1}{z}+\bigO(z^{-2}),
\end{equation}
and denote $\Res_{z=\infty}\left(z \Tr \left(Y^{-1}(z)Y'(z)\right)\right):=-c_1$, although strictly speaking this is not the residue at infinity, because infinity is not an isolated singularity.
The integral \[\oint z \Tr \left(Y^{-1}(z)Y'(z)  \sigma_3 \right)dz\]
over a large counter-clockwise oriented circle $\{z:|z|=M\}$ is not exactly equal to
$2\pi i c_1$, but it tends to $2\pi ic_1$ as the radius $M$ tends to infinity. This observation will be important in the proof of the following lemma.

\begin{lemma}[cf.\ \cite{BEH06,BMM}]
\label{lem:dif_id}
We have
  \begin{equation}
    \label{eq:dif_id}
    \frac{d}{dt} \log Z_{n,k,\alpha}^{pLUE}(t) = \frac{n^2 + \alpha n}{t} + \frac{n}{2t} \Res_{z=\infty}\left(z \Tr \left(Y^{-1}(z)Y'(z)  \sigma_3 \right)\right).
\end{equation}
\end{lemma}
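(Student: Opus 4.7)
The plan is to differentiate $\log Z_{n,k,\alpha}^{pLUE}(t)=\sum_{j=0}^{n-1}\log h_j(t)$ directly, apply one integration by parts, use one combinatorial sum rule, and identify what is left with the residue at infinity of $z\,\Tr(Y^{-1}Y'\sigma_3)$.

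Because $\partial_t p_j$ has degree strictly less than $j$, monicness and orthogonality give $\partial_t h_j=-nkt^{k-1}\int p_j^2 w/x^k\,dx$, whence
\[
\frac{d}{dt}\log Z_{n,k,\alpha}^{pLUE}(t) = -nkt^{k-1}\sum_{j=0}^{n-1} h_j^{-1}\int_0^\infty p_j(x)^2\,\frac{w(x)}{x^k}\,dx.
\]
Integrating $\partial_x[xp_j^2 w]$ over $(0,\infty)$, with boundary terms vanishing at $0^+$ thanks to the essential singularity $e^{-n(t/x)^k}$ present for $t>0$ and at $\infty$ by the Gaussian-type tail, together with $xw'/w=\alpha-nx+nkt^k/x^k$, yields the Schwinger--Dyson relation
\[
0=(1+\alpha-n a_j)h_j+2\int_0^\infty x p_j p_j' w\,dx+nkt^k\int_0^\infty p_j^2\,\frac{w}{x^k}\,dx,
\]
where $a_j:=h_j^{-1}\int_0^\infty x p_j^2 w\,dx$.

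The essential independent input is $\sum_{j=0}^{n-1}h_j^{-1}\int x p_j p_j' w\,dx=n(n-1)/2$. Switching to orthonormal $\pi_j=p_j/\sqrt{h_j}$, the three-term recurrence $x\pi_j=\sqrt{b_{j+1}}\pi_{j+1}+a_j\pi_j+\sqrt{b_j}\pi_{j-1}$ together with $\int\pi_j'\pi_j\,w\,dx=0=\int\pi_j'\pi_{j+1}\,w\,dx$ (by degree) reduces $\int x\pi_j\pi_j'\,w\,dx$ to $\sqrt{b_j}\int\pi_{j-1}\pi_j'\,w\,dx$. A leading-coefficient match of $\pi_j'$ against $\pi_{j-1}$ gives $\int\pi_{j-1}\pi_j'\,w\,dx=j/\sqrt{b_j}$, so $\int x\pi_j\pi_j'\,w\,dx=j$ and the sum equals $n(n-1)/2$. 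Crucially this value depends on neither $t$ nor $k$; this is the combinatorial heart of the argument.

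Summing the Schwinger--Dyson identity over $j$ and substituting the two previous displays produces
\[
t\,\frac{d}{dt}\log Z_{n,k,\alpha}^{pLUE}(t) = n^2+n\alpha - n\sum_{j=0}^{n-1}a_j.
\]
The subleading coefficient of $p_n(z)=z^n+(Y_1)_{11}z^{n-1}+\cdots$ satisfies the telescoping recursion $(Y_1)_{11}^{(n)}=(Y_1)_{11}^{(n-1)}-a_{n-1}$, so $(Y_1)_{11}=-\sum_{j=0}^{n-1}a_j$. Since the jump matrix has unit determinant and $\det Y\to 1$ at infinity, Liouville forces $\det Y\equiv 1$; comparing the $1/z$ term in $\det(I+Y_1/z+\cdots)$ to $1$ gives $\Tr Y_1=0$, and therefore $\Tr(Y_1\sigma_3)=2(Y_1)_{11}$. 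A direct expansion of $Y=(I+Y_1/z+O(z^{-2}))z^{n\sigma_3}$ yields
\[
z\,\Tr\bigl(Y^{-1}(z)Y'(z)\sigma_3\bigr)=2n-\frac{\Tr(Y_1\sigma_3)}{z}+O(z^{-2}),\qquad z\to\infty,
\]
so $\Res_{z=\infty}\bigl[z\,\Tr(Y^{-1}Y'\sigma_3)\bigr]=\Tr(Y_1\sigma_3)=2(Y_1)_{11}=-2\sum_j a_j$. Substituting into the previous display and dividing by $t$ produces exactly \eqref{eq:dif_id}. The only non-routine step is the combinatorial sum rule in the middle paragraph; the remaining obstacle is verifying the vanishing of the boundary term at $x=0$, which genuinely requires $t>0$ and explains the domain of validity of the identity.
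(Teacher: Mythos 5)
Your proof is correct, and it takes a genuinely different route from the paper's. The paper first rescales the integration variable ($x=t\xi$), pulling out the factor $t^{n^2+\alpha n}$ that produces the first term of the identity; it then differentiates the rescaled partition function $\widetilde Z$, writes the result as $-n\int x\widetilde K_n(x,x)\,dx$, expresses the Christoffel--Darboux kernel on the diagonal as a jump of $\Tr(\widetilde Y^{-1}\widetilde Y'\sigma_3)$ across $(0,\infty)$, and finally uses a contour deformation (large circle plus small circle around $0$ plus the two sides of the half-line) to convert the real-line integral into a residue at infinity. You instead differentiate $\sum\log h_j$ directly, trade the resulting integral $\int p_j^2\,w/x^k\,dx$ for moment data via the string equation obtained by integrating $\partial_x[xp_j^2w]$ by parts, and invoke the classical sum rule $\sum_j h_j^{-1}\int xp_jp_j'w\,dx=n(n-1)/2$; what remains is $-n\sum a_j$, which you identify with $(Y_1)_{11}$ through the telescoping of subleading coefficients and then read off as $\tfrac12\Tr(Y_1\sigma_3)=\tfrac12\Res_{z=\infty}[z\Tr(Y^{-1}Y'\sigma_3)]$ directly from the $z\to\infty$ expansion of $Y$. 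Both arguments are sound and deliver the same formula; the paper's has the advantage that the scaling substitution and the kernel/contour-integral step are agnostic to the structure of the regular part of the potential, whereas your string-equation step explicitly uses $xw'/w=\alpha-nx+nkt^kx^{-k}$, i.e.\ the Laguerre choice $V(x)=x$ (which is of course exactly the setting of the lemma). Your approach is more elementary in that it avoids contour integration entirely and makes the residue at infinity transparent via the identification $(Y_1)_{11}=-\sum a_j$; it also makes explicit where $t>0$ is needed, namely to kill the boundary term of $xp_j^2w$ at $x=0^+$ and to ensure $\int p_j^2w/x^k\,dx$ converges. One small thing worth flagging: as the paper remarks, $z\Tr(Y^{-1}Y'\sigma_3)$ is not genuinely analytic at infinity (there is an exponentially small discontinuity across the real axis), so ``$\Res_{z=\infty}$'' is shorthand for minus the $1/z$-coefficient of its asymptotic expansion; your use of it is consistent with that convention.
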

\begin{proof}
Recall \eqref{partition function}, which, after the substitution $x_i = t \xi_i$, becomes \begin{equation}Z_{n,k,\alpha}^{pLUE} = t^{n^2 + \alpha n} \widetilde{Z}_{n,k,\alpha}^{pLUE},\label{part sub}\end{equation} where
\beq
\widetilde{Z}_{n,k,\alpha}^{pLUE} := \frac{1}{n!}\int_{[0,+\infty)^n}\Delta(\xi)^2 \prod_{j=1}^n \xi_j^\alpha e^{-n(t\xi_j + \xi^{-k})}d\xi_j.
\eeq
To evaluate $\widetilde{Z}_{n,k,\alpha}^{pLUE}$, we introduce monic orthogonal polynomials $\widetilde{p}_j$ satisfying
\beq\label{ortho p bar}
\int^\infty_0 \widetilde{p}_\ell(x) \widetilde{p}_m(x) \widetilde{w}(x) dx = \widetilde{h}_\ell \delta_{\ell m},\qquad \widetilde{w}(x) =x^\alpha\exp\left[-n \left(tx +x^{-k}\right)\right].
\eeq
Recalling the orthogonality conditions for $p_n$ given in \eqref{ortho p}, we easily obtain
\beq
\widetilde{p}_n(x) = t^{-n} p_n(tx), \qquad \widetilde{h}_n = t^{-(2n+\alpha+1)} h_n.
\eeq
A similar calculation for the Cauchy transform $q_n$ appearing in \eqref{Ysol} shows that
\beq
\widetilde{q}_n(z) = \frac{1}{2\pi i} \int^\infty_0 \frac{\widetilde{p}_n(x) \widetilde{w}(x)}{x-z} dx = t^{-\alpha -n} q_n(t z).
\eeq
Summarising, if we define
\beq
\widetilde{Y}(z) := \begin{pmatrix}\widetilde{p}_n(z)&\widetilde{q}_n(z)\\ -\frac{2\pi i}{\widetilde{h}_{n-1}} \widetilde{p}_{n-1}(z)& -\frac{2\pi i}{\widetilde{h}_{n-1}} \widetilde{q}_{n-1}(z)\end{pmatrix},
\eeq
we obtain the relationship
\beq
\widetilde{Y}(z) = t^{-n \sigma_3} t^{-\frac{\alpha}{2}\sigma_3} Y(tz) t^{\frac{\alpha}{2}\sigma_3}.
\eeq
Expressing $\widetilde{Z}_{n,k,\alpha}^{pLUE}$ as $\widetilde{Z}_{n,k,\alpha}^{pLUE} = \prod^{n-1}_{i=0} \widetilde{h}_i$
and taking the logarithmic derivative of both sides, we get
\begin{equation}
\label{eq:logdev1}
\frac{d}{dt} \log \widetilde{Z}_{n,k,\alpha}^{pLUE} = \int_0^\infty 
\left(\sum_{j=0}^{n-1} \frac{\widetilde{p}_j^2(x)}{\widetilde{h}_j}\right)\frac{\partial
  \widetilde{w}(x)}{\partial t} dx,  
\end{equation}
where we have made use of the orthogonality of the polynomials. This expression can be written as
\begin{equation}
\label{eq:kernint}  
\frac{d}{dt} \log \widetilde{Z}_{n,k,\alpha}^{pLUE} = -n \int_0^\infty x  \widetilde{K}_n(x,x)dx,
\end{equation}
where
\begin{equation}
  \label{eq:ker}
  \widetilde{K}_n(x,y) := \sqrt{\widetilde{w}(x)\widetilde{w}(y)}
  \sum_{j=0}^{n-1}\frac{\widetilde{p}_j(x)\widetilde{p}_j(y)}{\widetilde{h}_j}. 
\end{equation}
The expression for the kernel $\widetilde{K}_n$ in terms of $\widetilde{Y}$, similar to (\ref{KYeqn}), together
with the jump condition for $\widetilde{Y}$ implies the identity
\begin{equation}
  \widetilde{K}_n(x,x) = -\frac{1}{4\pi
    i}\left(\Tr\left(\widetilde{Y}^{-1}_{+}(x)\widetilde{Y}'_{+}(x)\sigma_3 \right) -
      \Tr\left(\widetilde{Y}^{-1}_{-}(x)\widetilde{Y}'_{-}(x) \sigma_3  \right)\right),
\end{equation}
for $x\in\mathbb R$,
which combined with the fact that $\Tr(\widetilde{Y}^{-1}(x)\widetilde{Y}'(x)\sigma_3) = t\Tr(Y^{-1}(t x)Y'(t x)\sigma_3)$, gives
\begin{equation}
\label{eq:kernint2}  
\frac{d}{dt} \log \widetilde{Z}_{n,k,\alpha}^{pLUE} = \frac{n}{4\pi i t} \int_0^\infty x  \left[\Tr(Y^{-1}_+(x)Y'_+(x)\sigma_3)-\Tr(Y^{-1}_-(x)Y'_-(x)\sigma_3) \right]dx.
\end{equation}

To prove the lemma, by (\ref{part sub}), we need to show that the above is equal to \[\frac{n}{2t} \Res_{z=\infty}\left(z \Tr \left(Y^{-1}(z)Y'(z)  \sigma_3 \right)\right).\]
To see this, consider the integral of $z \Tr \left(Y^{-1}(z)Y'(z)  \sigma_3 \right)$ over the contour $\mathcal C$ depicted in Figure \ref{fig:cont2}. One one hand, this integral is equal to zero because the integrand is analytic inside the integration contour.
\begin{figure}[t]
\centering
\includegraphics[width=3in]{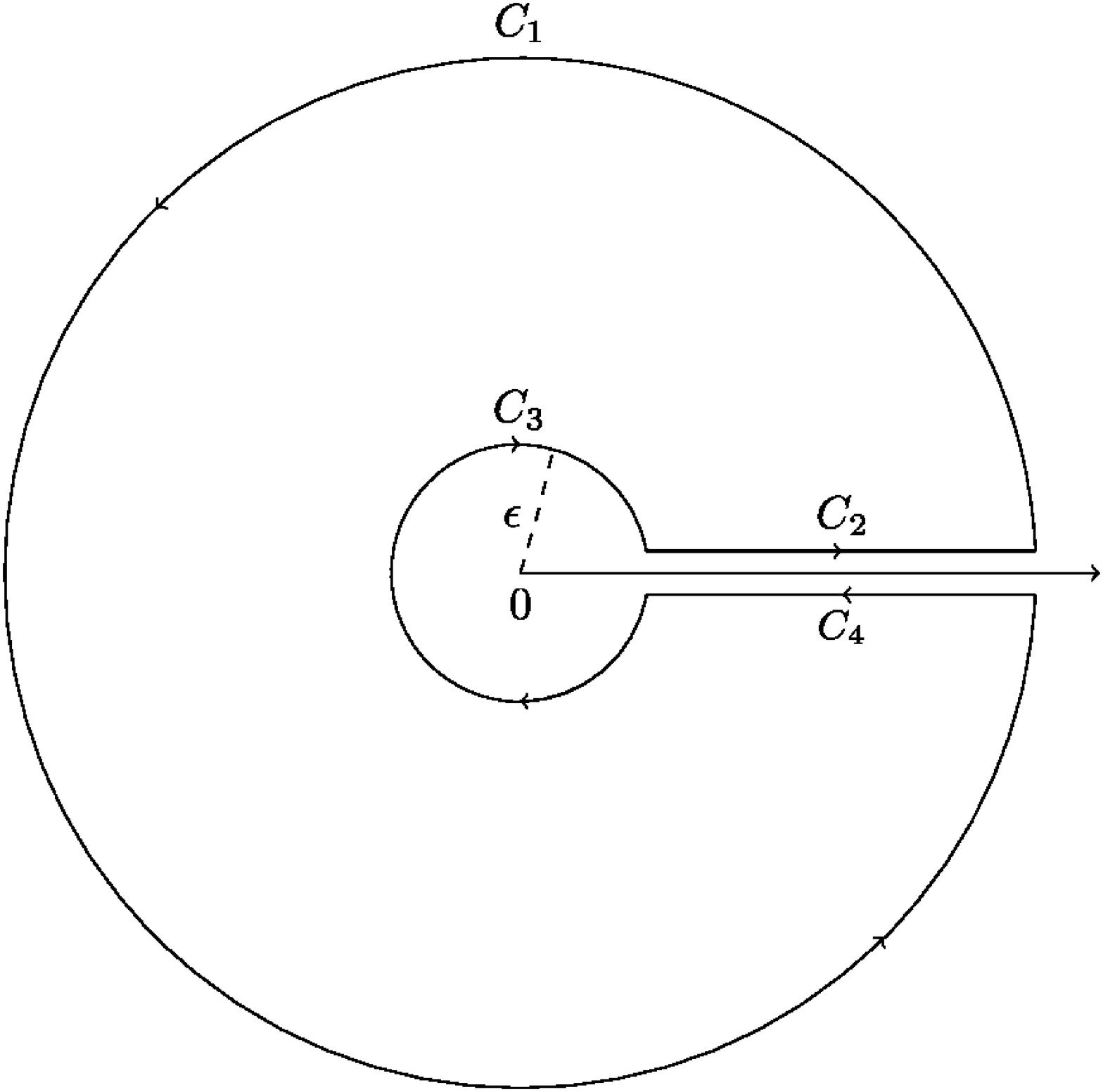}
\caption{The contour of integration $\mathcal{C}= C_1 + C_2 + C_3
  +C_4$}
\label{fig:cont2}
\end{figure}
On the other hand, condition (d) of the $Y$-RH problem implies that for sufficiently small $\epsilon$, $\Tr\left(Y^{-1}(z)Y'(z)\sigma_3\right)$ is bounded on the circle $C_3$, uniformly in $\epsilon$. Furthermore, condition (c) of the $Y$-RH problem implies
\[
\Tr\left(Y^{-1}(z)Y'(z)\sigma_3\right) = \bigO(z^{-2}), \quad z \to \infty.
\]
Therefore, letting $\epsilon\to 0$, letting the radius of the large circle $C_1$ tend to infinity, and letting $C_2, C_4$ approach the positive half-line, we arrive at the identity
\begin{multline}
-2\pi i  \Res_{z=\infty}\left(z \Tr \left(Y^{-1}(z)Y'(z)  \sigma_3 \right)\right)
\\+\int_0^{\infty}x \Tr\left(Y^{-1}_{+}(x)Y'_{+}(x)
\sigma_3\right)dx  - \int_0^{\infty}x \Tr\left(Y^{-1}_{-}(x)Y'_{-}(x)\sigma_3\right)dx=0.
\end{multline}
Substituting this into \eqref{eq:kernint2}, we obtain the result.
\end{proof}
We may now give the proof of the first part of Theorem \ref{theorem: partition}. We proceed by computing the residue
at infinity in \eqref{lem:dif_id}. The expression for the matrix $Y$ that is valid away from the support of the equilibrium measure (i.e.\ outside the disks near $0$ and $b$, and outside the lens-shaped region) is
\beq
Y(z) = e^{\frac{nl}{2}\sigma_3} R(z) P^{(\infty)}(z) e^{n(g(z) - \frac{l}{2})\sigma_3} e^{\frac{n}{2} \left(\frac{t}{z}\right)^k}.
\eeq
To obtain an expression for the residue, we compute the expansion as $z\rightarrow \infty$ of the above expression and substitute them into $\Tr(Y^{-1}Y'\sigma_3)$, giving as $z \rightarrow \infty$,
\begin{multline*}
\Tr(Y^{-1}Y'\sigma_3)(z) \\= \frac{1}{z}\left(-\frac{nkt^k}{z^k} + 2n(1+\frac{\mu_1}{z}) - \frac{1}{z}\Tr(P_1\sigma_3)- \frac{1}{z}\Tr(R_1\sigma_3) + \bigO(z^{-2})\right).
\end{multline*}
Here, the quantity $\mu_1$ arises from the expansion of the $g$-function and is the first moment of the equilibrium measure,
\beq
\mu_1 := \int^b_0 x \psi(x) dx,
\eeq
and
the quantities $P_1$ and $R_1$ are defined as the functions of $t$ appearing in the $z \rightarrow \infty$ asymptotics of $P^{(\infty)}$ and $R$,
\begin{align}
&P^{(\infty)}(z) = I + \frac{P_1}{z} + \bigO(z^{-2}), \\
&R(z) = I + \frac{R_1}{z} + \bigO(z^{-2}).
\end{align}
In the Laguerre case, we have $\mu_1 = 1$, $\Tr(P_1 \sigma_3) = -2\alpha$, and $R_1$ is given by \eqref{LUER}. Recall from Remark \ref{remark uniform0} that the asymptotics for $R_1$ are uniform also when $t\to 0$ very rapidly. Substituting all this in the differential identity (\ref{eq:dif_id}), and integrating between $0$ and $t$, we therefore obtain
\beq
\log \frac{Z_{n,k}^{pLUE}(t)}{Z_{n,k}^{pLUE}(0)} = n^2 t \delta_{k,1} + \frac{1}{2} \int^t_0 \left(\frac{1}{8}(1-4\alpha^2) - r(2^{-\frac{1}{k}}n^\frac{2k+1}{k} t')\right) \frac{dt'}{t'}+\bigO(n^{-1}),
\eeq
in the double scaling limit where $n\to\infty$, $t\to 0$ in such a way that $s_{t,n}=2^{-\frac{1}{k}}n^\frac{2k+1}{k} t\to s\in (0,+\infty)$.
After the change variables $\xi=2^{-\frac{1}{k}}n^\frac{2k+1}{k} t$, we obtain (\ref{as partition function pLUE}).

\subsection{The partition function in the pGUE}
\begin{proposition}\label{prop: GUE LUE}
Let $Z_{n,k,\alpha}^{pLUE}$ be the partition function in the pLUE, defined by \eqref{partition function pLUE}, and let  $Z_{n,k,\alpha}^{pGUE}$ be the partition function in the pGUE, defined by \eqref{partition function pGUE}. Then we have the following identities:
\begin{align}&\label{partition even}
\log Z_{2n,k,\alpha}^{pGUE}(t)=\log Z_{n,k,\alpha-\frac{1}{2}}^{pLUE}(t)+\log Z_{n,k,\alpha+\frac{1}{2}}^{pLUE}(t),\\
& \log Z_{2n+1,k,\alpha}^{pGUE}(t)=(n+\alpha+\frac{1}{2})\left[n\log\frac{2n}{2n+1} + (n+1)\log\frac{2n+2}{2n+1}\right]+ \nn\\
&\qquad\log Z_{n+1,k,\alpha-\frac{1}{2}}^{pLUE}\left(\left(\frac{2n+1}{2n+2}\right)^{\frac{k+1}{k}} t\right)+\log Z_{n,k,\alpha+\frac{1}{2}}^{pLUE}\left(\left(\frac{2n+1}{2n}\right)^{\frac{k+1}{k}} t\right).\label{partition odd}
\end{align}
\end{proposition}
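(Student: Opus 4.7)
The plan is to exploit the orthogonal-polynomial representation of the partition function together with the even symmetry of the pGUE weight; after this the identities reduce to a change-of-variables bookkeeping. Recall that $Z^{pGUE}_{n,k,\alpha}(t)=\prod_{j=0}^{n-1}H_j$, where $H_j$ is the squared $L^2(\mathbb R)$-norm of the $j$-th monic polynomial orthogonal with respect to the even weight $w(x)=|x|^{2\alpha}e^{-\frac{n}{2}(x^2+(t/x^2)^k)}$. Evenness forces $P_{2j}(x)=\tilde p_j(x^2)$ and $P_{2j+1}(x)=x\,\tilde q_j(x^2)$ with $\tilde p_j,\tilde q_j$ monic of degree $j$. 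Substituting $y=x^2$ in the integrals defining $H_{2j}$ and $H_{2j+1}$ identifies $\tilde p_j$ and $\tilde q_j$ as the monic orthogonal polynomials on $[0,+\infty)$ for the weights $y^{\alpha-1/2}e^{-\frac{n}{2}(y+(t/y)^k)}$ and $y^{\alpha+1/2}e^{-\frac{n}{2}(y+(t/y)^k)}$, respectively, with $H_{2j}$ and $H_{2j+1}$ equal to the corresponding $L^2([0,\infty))$ squared norms.

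For $n=2m$ the coefficient $n/2=m$ is already an integer, so $\tilde p_j$ and $\tilde q_j$ coincide exactly with the pLUE monic orthogonal polynomials of size $m$ with parameters $\alpha-\tfrac12$ and $\alpha+\tfrac12$. Splitting the product $Z^{pGUE}_{2m,k,\alpha}(t)=\prod_{j=0}^{m-1}H_{2j}\cdot\prod_{j=0}^{m-1}H_{2j+1}$ and using $Z^{pLUE}_{m,k,\alpha'}(t)=\prod_{j=0}^{m-1}h^{pLUE}_{j;m,\alpha'}(t)$ immediately yields \eqref{partition even}.

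For odd $n=2m+1$ the coefficient $n/2$ is half-integer, so a rescaling $y=cz$ is needed to bring the weight into genuine pLUE form with an integer size. Matching the $m+1$ even polynomials against pLUE of size $m+1$ requires $\tfrac{2m+1}{2}c=m+1$, hence $c=\tfrac{2(m+1)}{2m+1}$, and then absorbing $c$ into the $(t/y)^k$ term gives the shifted coupling $\tilde t=\bigl(\tfrac{2m+1}{2(m+1)}\bigr)^{(k+1)/k}t$. Similarly, matching the $m$ odd polynomials against pLUE of size $m$ forces $c'=\tfrac{2m}{2m+1}$ with coupling $\tilde t'=\bigl(\tfrac{2m+1}{2m}\bigr)^{(k+1)/k}t$. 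Rewriting $\tilde p_j(cz)=c^j\,q_j(z)$ with $q_j$ monic and applying the change of variables to the squared-norm integral produces the scaling factors $H_{2j}=c^{\alpha+2j+1/2}\,h^{pLUE}_{j;m+1,\alpha-1/2}(\tilde t)$ and $H_{2j+1}=(c')^{\alpha+2j+3/2}\,h^{pLUE}_{j;m,\alpha+1/2}(\tilde t')$.

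Taking logarithms and collecting the scaling factors across all $j$ gives a prefactor
\[
\Bigl(\sum_{j=0}^{m}(\alpha+2j+\tfrac12)\Bigr)\log c+\Bigl(\sum_{j=0}^{m-1}(\alpha+2j+\tfrac32)\Bigr)\log c'=(m+\alpha+\tfrac12)\bigl[(m+1)\log c+m\log c'\bigr],
\]
using $\sum_{j=0}^{m}(\alpha+2j+\tfrac12)=(m+1)(m+\alpha+\tfrac12)$ and $\sum_{j=0}^{m-1}(\alpha+2j+\tfrac32)=m(m+\alpha+\tfrac12)$. Setting $m=n$ and substituting $c,c'$ gives exactly the prefactor in \eqref{partition odd}, completing the plan. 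The only real obstacle is this exponent accounting; once the OP representation and parity split are in place the rest is routine.
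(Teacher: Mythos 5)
Your proof is correct, and it takes essentially the same approach as the paper: express both partition functions as products of squared norms of monic orthogonal polynomials, use the parity decomposition $P_{2j}(x)=\tilde p_j(x^2)$, $P_{2j+1}(x)=x\,\tilde q_j(x^2)$ for the even pGUE weight to reduce to Laguerre-type weights, and handle the half-integer effective size in the odd case by rescaling, which produces both the modified coupling $t\mapsto (N/n)^{(k+1)/k}t$ and the multiplicative prefactor. The only cosmetic difference is that the paper rescales the $n$-fold integral $A(n,N,k,\alpha)$ as a whole, whereas you rescale each norm $H_j$ individually and sum the exponents; these are equivalent.
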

\begin{proof}
Let $p_j(x;n,k,\alpha)$, $j=0,1,\ldots$, be the monic perturbed Laguerre polynomials defined by the orthogonality relations
\begin{equation}\label{ortho pLaguerre}
\int_0^{+\infty}p_j(x)p_\ell(x)x^\alpha e^{-n\left(x+\left(\frac{t}{x}\right)^k\right)}dx=h_j(n,\alpha,k)\delta_{j\ell}.
\end{equation}
Then it is well-known that
\begin{equation}\label{Z h LUE}
\log Z_{n,k,\alpha}^{pLUE}=\sum_{j=0}^{n-1}\log h_j(n,k,\alpha).
\end{equation}

A similar identity holds for the monic perturbed Hermite polynomials $\widehat p_j(n,k,\alpha)$, $j=0,1,\ldots$ defined by 
\begin{equation}\label{ortho pHermite}
\int_{\mathbb R}\widehat p_j(x)\widehat p_\ell(x)|x|^{2\alpha} e^{-\frac{n}{2}\left(x^2+\left(\frac{t}{x^2}\right)^k\right)}dx=\widehat h_j(n,k,\alpha)\delta_{j\ell}.
\end{equation}
We have 
\begin{equation}\label{Z h GUE}
\log Z_{n,k,\alpha}^{pGUE}= \sum_{j=0}^{n-1}\log \widehat h_j(n,k,\alpha).
\end{equation}

Using (\ref{ortho pLaguerre}) and (\ref{ortho pHermite}) and the uniqueness of the orthogonal polynomials defined by those orthogonality conditions, it is straightforward to derive the following identities (similar identities can be found in \cite[Appendix B]{ADDV} and \cite[Exercise 4.4]{Forrester}):
\begin{align}&\label{OP id}
\widehat p_{2j}(u;2n,k,\alpha)=p_{j}(u^2;n,k,\alpha-1/2),\\
&\label{OP id2}
\widehat p_{2j+1}(u;2n,k,\alpha)=up_{j}(u^2;n,k,\alpha+1/2).
\end{align}
This implies that
\begin{equation}\label{lc id}
\widehat h_{2j}(2n,k,\alpha)=h_j(n,k,\alpha-\frac{1}{2}),\qquad
\widehat h_{2j+1}(2n,k,\alpha)=h_j(n,k,\alpha+\frac{1}{2}).
\end{equation}
Using those relations together with (\ref{Z h GUE}) and (\ref{Z h LUE}), we obtain
\begin{eqnarray}
\log Z_{2n,k,\alpha}^{pGUE}&=&\sum_{j=0}^{n-1}\log \widehat h_{2j}(2n,k,\alpha)+\sum_{j=0}^{n-1}\log \widehat h_{2j+1}(2n,k,\alpha)\\
&=&\sum_{j=0}^{n-1}\log h_j(n,k,\alpha-\frac{1}{2})+\sum_{j=0}^{n-1}\log h_j(n,k,\alpha+\frac{1}{2})\\
&=& \log Z_{n,k,\alpha-\frac{1}{2}}^{pLUE}+\log Z_{n,k,\alpha+\frac{1}{2}}^{pLUE}.
\end{eqnarray}
Similarly,
\begin{align}
&\log Z_{2n+1,k,\alpha}^{pGUE}=\sum_{j=0}^{n}\log \widehat h_{2j}(2n+1,k,\alpha)+\sum_{j=0}^{n-1}\log \widehat h_{2j+1}(2n+1,k,\alpha)\nonumber\\
&\quad =\sum_{j=0}^{n}\log h_j(n+\frac{1}{2},k,\alpha-\frac{1}{2})+\sum_{j=0}^{n-1}\log h_j(n+\frac{1}{2},k,\alpha+\frac{1}{2})\nonumber\\
&\quad =\log A(n+1,n+\frac{1}{2},k,\alpha-\frac{1}{2})+\log A(n,n+\frac{1}{2},k,\alpha+\frac{1}{2})\label{Z odd},
\end{align}
where 
\begin{equation}
A(n,N,k,\alpha)=\frac{1}{n!}\int_{[0,+\infty)^n}\Delta(x)^2 \prod_{j=1}^n x_j^\alpha e^{-N\left(x_j+\left(\frac{t}{x_j}\right)^k\right)}dx_j.
\end{equation}
We have
\begin{equation}
A(n,N,k,\alpha)= \left(\frac{n}{N}\right)^{n^2+n\alpha}Z_{n,k,\alpha}^{pLUE} \left(\left(\frac{N}{n}\right)^{\frac{k+1}{k}} t\right).
\end{equation}
Substituting this into (\ref{Z odd}) completes the proof.
\end{proof}
The proof of \eqref{as partition function pGUE} can now easily be completed by substituting (\ref{as partition function pLUE}) into \eqref{partition even}-\eqref{partition odd}.

\section*{Acknowledgements} 
MA and TC were supported by the European Research Council under the European Union's Seventh Framework Programme (FP/2007/2013)/ ERC Grant Agreement n.\, 307074, by the Belgian Interuniversity Attraction Pole P07/18, and by F.R.S.-F.N.R.S. FM was partially supported by EPSRC research grant no. EP/L010305/1.

\end{document}